\theoremstyle{plain}
\newcounter{magicrownumbers}
\newcommand\rownumber{\stepcounter{magicrownumbers}\arabic{magicrownumbers}}
\newenvironment{CompactEnumerate}{
\addtolength{\textheight}{.8in}
\addtolength{\textwidth}{.8in}
\addtolength{\oddsidemargin}{-.4in}
\addtolength{\evensidemargin}{-.4in}
\addtolength{\topmargin}{-.2in}

\begin{list}{\arabic{enumi}.}{%
\usecounter{enumi}
\setlength{\leftmargin}{12pt}
\setlength{\itemindent}{3pt}
\setlength{\topsep}{3pt}
\setlength{\itemsep}{1pt}
}}
{\end{list}}
\newenvironment{CompactItemize}{
\begin{list}{$\bullet$}{%
\setlength{\leftmargin}{12pt}
\setlength{\itemindent}{1pt}
\setlength{\topsep}{1pt}
\setlength{\itemsep}{-1pt}
}}
{\end{list}}
\newcommand{\resref}[1]{{\bf (\ref{res:#1})}}
\newtheorem{theorem}{Theorem}[section]
\newtheorem*{non-theorem}{Theorem}
\newtheorem{lemma}[theorem]{Lemma}
\newtheorem{proposition}[theorem]{Proposition}
\newtheorem{remark}[theorem]{Remark}
\newtheorem{corollary}[theorem]{Corollary}
\newtheorem{definition}{Definition}
\def \TreeMech {\textsc{TreeMech}\xspace}
\def \PrivIncERM {\textsc{PrivIncERM}\xspace}
\def \PrivIncReg {\textsc{PrivIncReg$_1$}\xspace}
\def \ProjPrivIncReg {\textsc{PrivIncReg$_2$}\xspace}
\def \NoisyProjGrad {\textsc{NoisyProjGrad}\xspace}
\def \ProjGrad {\textsc{ProjGrad}\xspace}
\def \polylog{\mbox{\rm polylog}}
\def \y {\mathbf y}
\def \eps {\epsilon}
\def \hinge {\mathrm{hinge}}
\def \q {\mathbf q}
\def \z {\mathbf z}
\def \CCC {\mathcal{C}}
\def \EEE {\mathcal{E}}
\def \RRR {\mathcal{R}}
\def \GGG {\mathcal{G}}
\def \XXX {\mathcal{X}}
\def \ZZZ {\mathcal{Z}}
\def \YYY {\mathcal{Y}}
\def \NNN {\mathcal{N}}
\def \LLL {\mathcal{L}}
\def \JJJ {\mathcal{J}}
\def \E {\mathbb{E}}
\def \u {\mathbf u}
\def \N {\mathbb N}
\def \Bin {\mathrm{Bin}}
\def \true {{\mathrm{true}}}
\def \priv {{\mathrm{priv}}}
\def \proj {{\mathrm{proj}}}
\def \OPT {{\mathrm{OPT}}}
\def \poly {\mathrm{poly}}
\def \Alg {\mathrm{Alg}}
\def \a {\mathbf a}
\def \b {\mathbf b}
\def \s {\mathbf s}
\def \g {\mathbf g}
\def \q {\mathbf q}
\def \v {\mathbf v}
\def \x {\mathbf x}
\def \w {\mathbf w}
\def \R {\mathbb{R}}
\def \N {\mathbb{N}}
\newenvironment{mechanism}[1][htb]
{
\begin{algorithm}[#1]%
}{\end{algorithm}}
\begin{document}
\title{Private Incremental Regression}
\author{
Shiva Prasad Kasiviswanathan\thanks{Samsung Research America, \texttt{kasivisw@gmail.com}}
\and Kobbi Nissim\thanks{Georgetown University, Supported by NSF grant CNS1565387 and grants from the Sloan Foundation. \texttt{Kobbi.nissim@georgetown.edu}}
\and
\and Hongxia Jin \thanks{Samsung Research America, \texttt{hongxia.jin@samsung.com}}
}
\date{}
\maketitle
\abstract{
Data is continuously generated by modern data sources, and a recent challenge in machine learning has been to develop techniques that perform well in an incremental (streaming) setting. A variety of offline machine learning tasks are known to be feasible under differential privacy, where generic construction exist that, given a large enough input sample, perform tasks such as PAC learning, Empirical Risk Minimization (ERM), regression, etc. In this paper, we investigate the problem of private machine learning, where as common in practice, the data is not given at once, but rather arrives incrementally over time.

We introduce the problems of {\em private incremental ERM} and {\em private incremental regression} where the general goal is to always maintain a good empirical risk minimizer for the history observed under differential privacy. Our first contribution is a generic transformation of private batch ERM mechanisms into private incremental ERM mechanisms, based on a simple idea of invoking the private batch ERM procedure at some regular time intervals. We take this construction as a baseline for comparison. We then provide two mechanisms for the private incremental regression problem. Our first mechanism is based on privately constructing a noisy incremental gradient function, which is then used in a modified projected gradient procedure at every timestep. This mechanism has an excess empirical risk of $\approx\sqrt{d}$, where $d$ is the dimensionality of the data. While from the results of Bassily~\emph{et al.}\ \cite{bassily2014differentially} this bound is tight in the worst-case, we show that certain geometric properties of the input and constraint set can be used to derive significantly better results for certain interesting regression problems. Our second mechanism which achieves this is based on the idea of projecting the data to a lower dimensional space using random projections, and then adding privacy noise in this low dimensional space. The mechanism overcomes the issues of adaptivity inherent with the use of random projections in online streams, and uses recent developments in high-dimensional estimation to achieve an excess empirical risk bound of $\approx T^{1/3} W^{2/3}$, where $T$ is the length of the stream and $W$ is the sum of the Gaussian widths of the input domain and the constraint set that we optimize over.
}

\section{Introduction}
Most modern data such as documents, images, social media data, sensor data, and mobile data naturally arrive in a streaming fashion, giving rise to the challenge of incremental machine learning, where the goal is build and publish a model that evolves as data arrives. Learning algorithms are frequently run on sensitive data, such as location information in a mobile setting, and results of such analyses could leak sensitive information. For example, Kasiviswanathan~\emph{et al.}\ \cite{kasiviswanathan2013power} show how the results of many convex ERM problems can be combined to carry out reconstruction attacks in the spirit of Dinur and Nissim~\cite{DiNi03}. Given this, a natural direction to explore, is whether, we can carry out incremental machine learning, without leaking any significant information about individual entries in the data. For example, a data scientist, might want to continuously update the regression parameter of a linear model built on a stream of user profile data gathered from an ongoing survey, but these updates should not reveal whether any one person participated in the survey or not. 

Differential privacy~\cite{DMNS06} is a rigorous notion of privacy that is now widely studied in computer science and statistics. Intuitively, differential privacy requires that datasets differing in only one entry induce similar distributions on the output of a (randomized) algorithm. One of the strengths of differential privacy comes from the large variety of machine learning tasks that it allows. Good generic constructions exist for tasks such as PAC learning~\cite{BDMN05,KLNRS08} and Empirical Risk Minimization~\cite{rubinstein2009learning,kifer2012private,DBLP:journals/jmlr/JainKT12,jain2013differentially,thakurta2013differentially,jain2014near,bassily2014differentially,duchi2013local,ullman2015private,talwar2014private}. These constructions, however, are typically focused on  the batch (offline) setting, where information is first collected and then analyzed. Considering an incremental setting, it is natural to ask whether these tasks can still be performed with high accuracy, under differential privacy. 

In this paper, we introduce the problem of private incremental empirical risk minimization (ERM) and provide algorithms for this new setting. Our particular focus will be on the problem of private incremental linear regression. Let us start with a description of the traditional batch convex ERM framework. Given a dataset and a constraint space $\CCC$, the goal in ERM is to pick a $\theta \in \CCC$ that minimizes the {\em empirical error (risk)}. Formally, given $n$ datapoints $\z_1,\dots,\z_n$ from some domain $\ZZZ$, and a closed, convex
set $\CCC \subseteq \R^d$, consider the optimization problem:
\begin{align} \label{eqn:mestimator}
 \min_{\theta \in \CCC}\, \JJJ(\theta;\z_1,\dots,\z_n) \mbox{ where } \JJJ(\theta;\z_1,\dots,\z_n)  =  \sum_{i=1}^n \jmath(\theta;\z_i).
\end{align}
The loss function $\JJJ : \CCC \times \ZZZ^n \rightarrow \R$ measures the fit of $\theta \in \CCC$ to the given data $\z_1,\dots,\z_n$, the function $\jmath : \CCC \times \ZZZ  \rightarrow \R$ is the loss associated with a single datapoint and is assumed to be convex in the first parameter $\theta$ for every $\z \in \ZZZ$. It is common to assume that the loss function has certain properties, e.g., positive valued. The {\em M-estimator} (true empirical risk minimizer) $\hat{\theta}$ associated with a given a function $\JJJ(\theta; \z_1,\dots,\z_n) \geq 0$ is defined as:\footnote{This formulation also captures {\em regularized} ERM, in which an additional convex function $R(\theta)$ is added to the loss function to penalize certain type of solutions, e.g., ``penalize'' for the ``complexity'' of $\theta$. The loss function $\JJJ(\theta; \z_1,\dots,\z_n)$ then equals $\sum_{i=1}^n \jmath (\theta; \z_i) + R(\theta)$, which is same as replacing $\jmath (\theta; \z_i)$ by $\jmath (\theta; \z_i) + R(\theta)/n$ in~\eqref{eqn:mestimator}.}
$$\hat{\theta} \in \mbox{argmin}_{\theta \in \CCC}\, \JJJ(\theta; \z_1,\dots,\z_n) = \mbox{argmin}_{\theta \in \CCC}\, \sum_{i=1}^n  \jmath (\theta; \z_i).$$
This type of program captures a variety of empirical risk minimization (ERM) problems, e.g., the MLE (Maximum Likelihood Estimators) for linear regression is captured by setting $\jmath(\theta;\z) = (y - \langle \x,\theta \rangle)^2$ in~\eqref{eqn:mestimator}, where $\z = (\x,y)$ for $\x \in \R^d$ and $y \in \R$. Similarly, the MLE for logistic regression is captured by setting $\jmath(\theta;\z) = \ln(1+\exp(-y\langle \x,\theta\rangle))$. Another common example is the support vector machine (SVM), where $\jmath(\theta;\z) = \hinge(y \langle \x,\theta \rangle)$, where $\hinge(a) = 1-a$ if $a \leq 1$ and $0$ otherwise. 

The main focus of this paper will be on a particularly important ERM problem of linear regression. Linear regression is a popular statistical technique that is commonly used to model the relationship between the outcome (label) and the explanatory variables (covariates). Informally, in a linear regression, given $n$ covariate-response pairs $(\x_1,y_1),\dots,(\x_n,y_n) \in \R^d \times \R$, we wish to find a (regression) parameter vector $\hat{\theta}$ such that $\langle \x_i, \hat{\theta} \rangle \approx y_i$ for most $i$'s. Specifically, let $\y = (y_1,\dots,y_n) \in \R^n$ denote a vector of the responses, and let $X \in \R^{n \times d}$ be the design matrix where $\x_i^\top$ is the $i$th row. Consider the linear model: $\y = X\theta^\star+ \w$, where $\w$ is the noise vector, the goal in linear regression is to estimate the unknown regression vector $\theta^\star$. Assuming that the noise vector $\w=(w_1,\dots,w_n)$ follows a (sub)Gaussian distribution, estimating the vector $\theta^\star$ amounts to solving the ``ordinary least squares'' (OLS) problem: $$\hat{\theta} \in \mbox{argmin}_\theta\, \sum_{i=1}^n (y_i - \langle \x_i,\theta \rangle)^2.$$ 
Typically, for additional guarantees such as sparsity, stability, etc., $\theta$ is constrained to be from a convex set $\CCC \subset \R^d$. Popular choices of $\CCC$ include the $L_1$-ball (referred to as {\em Lasso} regression) and $L_2$-ball (referred to as {\em Ridge} regression). In an incremental setting, the $(\x_i,y_i)$'s arrive over time, and the goal in incremental linear regression is to maintain over time (an estimate of) the regression parameter. We provide a more detailed background on linear regression in Appendix~\ref{app:linearregressionbackground}.

\noindent\textbf{Incremental Setting.} 
In an incremental setting the data arrives in a stream at discrete time intervals. The incremental setting is a variant of the traditional batch setting capturing the fact that modern data is rarely collected at one single time and more commonly data gathering and analysis may be interleaved. An incremental algorithm is modeled as follows: at each timestep the algorithm receives an input from the stream, computes, and produces outputs.  Typically, constraints are placed on the algorithm in terms of some computational resource (such as memory, computation time) availability. In this paper, the challenge in this setting comes from the differential privacy constraint because frequent releases about the data can lead to privacy loss (see, e.g.,~\cite{DiNi03,kasiviswanathan2013power}).

We focus on the incremental ERM problem.  In this problem setting, the data $\z_1,\dots,\z_T \in \ZZZ$ arrives one point at each timestep $t \in \{1,\dots,T\}$. The goal of an incremental ERM algorithm is to release at each timestep $t$, an estimator that minimizes the risk measured on the data $\z_1,\dots,\z_t$. In more concrete terms, the goal is to output $\hat{\theta}_t$ at every $t \in \{1,\dots,T\}$, where: 
\begin{align*}
\mbox{Incremental ERM:} \quad \hat{\theta}_t \in \mbox{argmin}_{\theta \in \CCC}\,\sum_{i=1}^t \jmath(\theta;\z_i).
\end{align*}
The goal of {\em private incremental ERM} algorithm, is to (differential) privately estimate $\hat{\theta}_t$ at every $t \in \{1,\dots,T\}$.\!\footnote{A point to note in the above description, while we have the privacy constraint, we have placed no computational constraints on the algorithm. In particular, the above description allows also those algorithms that at time $t$ use the whole input history $\z_1,\dots,\z_t$. However, as we will discuss in Sections~\ref{sec:smalld} and~\ref{sec:width}, our proposed approaches for private incremental regression are also efficient in terms of their resource requirements.} In this paper, we develop the first private incremental ERM algorithms.

There is a long line of work in designing differentially private algorithms for empirical risk minimization problems in the batch setting~\cite{rubinstein2009learning,kifer2012private,jain2013differentially,thakurta2013differentially,jain2014near,bassily2014differentially,duchi2013local,ullman2015private,talwar2014private}. A naive approach to transform the existing batch techniques to work in the incremental model is by using them to recompute the outcome after each datapoint's arrival. However, for achieving an overall fixed level of differential privacy, this would result in an unsatisfactory loss in terms of utility. Precise statements can be obtained using composition properties of differential privacy (as in Theorem~\ref{thm:composition2}), but informally if a differentially private algorithm is executed $T$ times on the same input, then the privacy parameter ($\eps$ in Definition~\ref{defn:dp}) degrades by a factor of $\approx\sqrt{T}$, and this affects the overall utility of this approach as the utility bounds typically depend inversely on the privacy parameter. Therefore, the above naive approach will suffer an additional multiplicative factor of $\approx \sqrt{T}$ over the risk bounds obtained for the batch case.
Our goal is to obtain risk bounds in the incremental setting that are comparable to those in the batch setting, i.e., bounds that do not suffer from this additional penalty of $\sqrt{T}$.


\smallskip
\noindent\textbf{Generalization.} For the kind of problems we consider in this paper, if the process generating the data satisfies the conditions for {\em generalization} (e.g., if the data stream contains datapoints where each datapoint is sampled independent and identically from an unknown distribution), the incremental ERM would converge to the true risk on the distribution (via uniform convergence and other ideas, refer to~\cite{vapnik2013nature,shalev2010learnability,bassily2014differentially} for more details). In this case, the model learned in an incremental fashion will have a good predictive accuracy on unseen arriving data. If however, the data does not satisfy these conditions, then $\hat{\theta}_t$ can be viewed as a ``summarizer'' for the data seen so far. Generating these summaries could also be useful in many applications, e.g., the regression parameter can be used to explain the associations between the outcome ($y_i$'s) and the covariates ($\x_i$'s). These associations are regularly used in domains such as public health, social sciences, biological studies, to understand whether specific variables are important (e.g., statistically significant) or unimportant predictors of the outcome. In practice, these associations would need to be constantly re-evaluated over time as new data arrives. 

\smallskip
\noindent\textbf{Comparison to Online Learning.} 
Online learning (or sequential prediction) is another well-studied setting for learning when the data arrives in a sequential order. There are differences between the goals of incremental and online learning.  In online ERM learning, the aim is to provide an estimator $\tilde{\theta}_t$ that can be used for future prediction. More concretely, at time $t$, an online learner, chooses $\tilde{\theta}_t$ and then the adversary picks $\z_t$ and the learner suffers loss of $\jmath(\tilde{\theta}_t;\z_t)$. Online learners try to minimize {\em regret} defined as the difference between the cumulative loss of the learner and the cumulative loss of the best fixed decision at the end of $T$ rounds~\cite{shalev2011online}. In an incremental setting, the algorithm gets to observe $\z_t$ before committing to the estimator, and the goal is to ensure at each timestep $t$, the algorithm maintains a {\em single} estimator that minimizes the risk on the history. There are strong lower bounds on the achievable regret for online ERM learning. In particular, even under the differential privacy constraint, the excess risk bounds on incremental learning that we obtain here have better dependence on $T$ (stream length) than the regret lower bounds for online ERM.
Incremental learning model should be viewed as a variant of batch (offline) learning model, where the data arrives over time, and the goal is to output intermediate results.

\subsection{Our Results}
Before stating our results, let us define how we measure success of our algorithms. As is standard in ERM, we measure the quality of our algorithm by the worst-case (over inputs) excess (empirical) risk (defined as the difference from the minimum possible risk over a function class). In an incremental setting, we want this excess risk to be always small for any sequence of inputs. The following definition captures this requirement. All our algorithms are randomized, and they take a confidence parameter $\beta$ and produce bounds that hold with probability at least $1-\beta$. 
\begin{definition} \label{defn:utilerm}
A (randomized) incremental algorithm is an $(\alpha,\beta)$-{\em estimator} for loss function $\JJJ$, if with probability at least $1 -\beta$ over the coin flips of the algorithm, for each $t \in \{1,\dots,T\}$, after processing a prefix of the stream of length $t$, it generates an output $\theta_t \in \CCC$ that satisfies the following bound on excess (empirical) risk:
$$\JJJ(\theta_t;\z_1,\dots,\z_t)- \JJJ(\hat{\theta}_t;\z_1,\dots,\z_t) \leq \alpha,$$
where $\hat{\theta}_t \in \mbox{argmin}_{\theta \in \CCC}\, \JJJ(\theta;\z_1,\dots,\z_t)$ is the true empirical risk minimizer. Here, $\alpha$ is referred to as the bound on the excess risk.
\end{definition}
\begin{table*}[tb]
\begin{small}
\centering
\begin{tabular} {|c|c|c|}
\hline 
& \makecell{Incremental ERM \\ Problem Objective}  & \makecell{Bound on the Excess (Empirical) Risk under $(\eps,\delta)$-Differential Privacy \\ ($\alpha$ in Definition~\ref{defn:utilerm})} \\
\hline \hline
\rownumber & \makecell{Convex Function\\ (using a generic transformation)}   &  $\frac{(Td)^{\frac 1 3} \log^{\frac 5 2}(1/\delta)}{\eps^{2/3}}$ \\ 
\hline
\rownumber & \makecell{Strongly Convex Function\\ (using a generic transformation)} & $\frac{\sqrt{d}\log^4(1/\delta)}{\nu^{1/2}\eps}$\\ 
\hline \hline
\rownumber & \makecell{Linear Regression} & \begin{tabular}{c}Mech 1: $\frac{\sqrt{d}\sqrt{\log(1/\delta)}}{\eps}$ \\ \hline \makecell{Mech 2: $\frac{T^{\frac 1 3}W^{\frac 2 3}\sqrt{\log(1/\delta)}}{\eps} + T^{\frac 1 6}W^{\frac 1 3}\sqrt{\OPT}+T^{\frac 1 4}W^{\frac 1 2}\sqrt[4]{\OPT}$ \\(\begin{small}where $W=w(\XXX)+w(\CCC)$ and $\OPT$ is the minimum empirical risk at timestep $T$) \end{small}} \end{tabular}\\ \hline
\end{tabular}
\end{small}
\caption{Summary of our results. The stream length is $T$ and $d$ is the dimensionality of the data (number of covariates in the case of regression). The bounds are stated for the setting where both the Lipschitz constant of the function and $\| \CCC \|$ are scaled to $1$. The bounds ignore polylog factors in $d$ and $T$, and the value in the table gives the bound when it is below $T$, i.e., the bounds should be read as $\min\{T,\cdot\}$. $\nu$ is the strong convexity parameter. For the regression problem, $\XXX$ is the domain from which the inputs (covariates) are drawn and $\CCC$ is the constraint space. $\OPT$ stands for the minimum empirical risk at time $T$. The exact results are provided in Theorems~\ref{thm:thm1},~\ref{thm:thm2}, and~\ref{thm:thm6}, respectively.
}
\label{tab:results}
\end{table*}

In this paper, we propose incremental ERM algorithms that provide a differential privacy guarantee on data streams. Informally, differential privacy requires that the output of a data analysis mechanism is not overly affected by any single entry in the input dataset. In the case of incremental setting, we insist that this guarantee holds at each timestep for all outputs produced up to that timestep (precise definition in Section~\ref{sec:prelim}). This would imply that, an adversary is unable to determine whether a particular datapoint was present or not in the input stream by observing the output of the algorithm over time. Two parameters, $\eps$ and $\delta$ control the level of privacy. Very roughly, $\eps$ is an upper bound on the amount of influence an individual datapoint has on the outcome and $\delta$ is the probability that this bound fails to hold (for a precise interpretation, refer to~\cite{KS08}), so the definition becomes more stringent as $\eps,\delta \rightarrow 0$. Therefore, while parameters $\alpha,\beta$ measure the accuracy of an incremental algorithm, the parameters $\eps,\delta$ represent its privacy risk. Our private incremental algorithms take $\eps,\delta$ as parameters and satisfies: a) the differential privacy constraint with parameters $\eps,\delta$ and b) $(\alpha,\beta)$-estimator property (Definition~\ref{defn:utilerm}) for every $\beta > 0$ and some $\alpha$ (parameter that the algorithm tries to minimize).

There is a trivial differentially private incremental ERM algorithm that completely ignores the input and outputs at every $t \in \{1,\dots,T\}$, any $\theta \in \CCC$ (this scheme is private as the output is always independent of the input). The excess risk of this algorithm is at most $2 T L \| \CCC \|$,\!\footnote{This follows as in each timestep $t \in \{1,\dots,T\}$, for any $\theta \in \CCC$, $\JJJ(\theta;\z_1,\dots,\z_t)- \JJJ(\hat{\theta}_t;\z_1,\dots,\z_t) \leq t L \| \theta - \hat{\theta}_t \| \leq t L (\|\theta \| + \|\hat{\theta}_t \|) \leq  2 t L \| \CCC\| \leq 2 T L \| \CCC\|$.} where $L$ is the Lipschitz constant of $\jmath$ (Definition~\ref{defn:Lipsfunc}) and $\| \CCC \|$ is the maximum attained norm in the convex set $\CCC$ (Definition~\ref{defn:diameter}). All bounds presented in this paper, as also true for all other existing results in the private ERM literature, are only interesting in the regime where they are less than this trivial bound.

For the purposes of this section, we make some simplifying assumptions and omit\footnote{This includes parameters $\eps,\delta, \beta, \|\CCC\|$, and the Lipschitz constant of the loss function.} dependence  on all but key variables (dimension $d$ and stream length $T$). Slightly more detailed bounds are stated in Table~\ref{tab:results}. All our algorithms run in time polynomial in $d$ and $T$ (exact bounds depend on the time needed for Euclidean projection onto the constraint set which will differ based on the constraint set). Additionally, our private incremental regression algorithms, which utilize the \emph{Tree Mechanism} of~\cite{DNPR10,CSS11} have space requirement whose dependence on the stream length $T$ is only logarithmic.

\begin{list}{{\bf (\arabic{enumi})}}{\usecounter{enumi}
\setlength{\leftmargin}{12pt}
\setlength{\listparindent}{\parindent}
\setlength{\parsep}{0pt}}
\item \label{res:con1} \textbf{A Generic Transformation.} A natural first question is whether {\em non-trivial} private ERM algorithms exist in general for the incremental setting. Our first contribution is to answer this question in the affirmative -- we present a simple generic transformation of private batch ERM algorithms into private incremental ERM algorithms. The construction idea is simple: rather than invoking the batch ERM algorithm every timestep, the batch ERM algorithm is invoked every $\tau$ timesteps, where $\tau$ is chosen to balance the excess risk factor coming from the stale risk minimizer (because of inaction) with the excess risk factor coming from the increased privacy noise due to reuse of the data. Using this idea along with recent results of Bassily~\emph{et al.}\ \cite{bassily2014differentially} for private batch ERM, we obtain an excess risk bound ($\alpha$ in Definition~\ref{defn:utilerm}) of $\tilde{O}(\min\{(Td)^{1/3},T\})$.\footnote{For simplicity of exposition, the $\tilde{O}(\cdot)$ notation hides factors polynomial in $\log T$ and $\log d$.} Using this same framework, we also show that when the loss function is strongly convex (Definition~\ref{defn:strongconvex}) the excess risk bound can be improved to $\tilde{O}(\min\{\sqrt{d},T\})$. 

A follow-up question is: how much worse is this private incremental ERM risk bound compared to best known private batch ERM risk bound (with a sample size of $T$ datapoints)? In the batch setting, the results of Bassily~\emph{et al.}\ \cite{bassily2014differentially} establish that for any convex ERM problem, it is possible to achieve an excess risk bound of $\tilde{O}(\min\{\sqrt{d},T\})$ (which is also tight in general). Therefore, our transformation from the batch to incremental setting, causes the excess risk to increase by at most a factor of $\approx \max\{T^{1/3}/d^{1/6},1\}$. Note that even for a low-dimensional setting (small $d$ case), the factor increase in excess risk of $\approx T^{1/3}$  (as now $\max\{T^{1/3}/d^{1/6},1\} \approx T^{1/3}$) is much smaller than the factor increase of $\approx T^{1/2}$ for the earlier described naive approach based on using a private batch ERM algorithm at every timestep. The situation only improves for larger dimensional data.


\item \label{res:con2} \textbf{Private Incremental Linear Regression Using Tree Mechanism.} 
We show that we can improve the generic construction from~\resref{con1} for the important problem of linear regression. We do so by introducing the notion of a {\em private gradient function} (Definition~\ref{defn:privgradfunc}) that allows differentially private evaluation of the gradient at any $\theta \in \CCC$.\!\footnote{Intuitively, this would imply that for any $\theta \in \CCC$, the output of a private gradient function cannot be used to distinguish two streams that are almost the same (differential privacy guarantee).} More formally, for any $\theta \in \CCC$, a private gradient function allows differentially private evaluation of the gradient at $\theta$ to within a small error (with high probability). Since the data arrives continually, our algorithm utilizes the \emph{Tree Mechanism} of~\cite{DNPR10,CSS11} to continually update the private gradient function. Now given access to a private gradient function, we can use any first-order convex optimization technique (such as projected gradient descent) to privately estimate the regression parameter. The idea is, since these optimization techniques operate by iteratively taking steps in the direction opposite to the gradient evaluated at the current point, we can use a private gradient function for evaluating all the required gradients. Using this, we design a private regression algorithm for the incremental setting, that achieves an excess risk bound of $\tilde{O}(\min\{\sqrt{d},T\})$. It is easy to observe, for private incremental linear regression, this result improves the bounds from the generic construction above for any choice of $d,T$ (as $\min\{\sqrt{d},T\} \leq \min\{(Td)^{1/3},T\}$).\!\footnote{A linear regression instance typically does not satisfy strong convexity requirements.} 

Ignoring $\polylog$ factors, this worst-case bound matches the lower bounds on the excess risk for squared-loss in the batch case~\cite{bassily2014differentially}, implying that this bound cannot be improved in general (an excess risk upper bound for a problem in the incremental setting trivially holds for the same problem in the batch setting).

\item \label{res:con3} \textbf{Private Incremental Linear Regression: Going Beyond Worst-Case.} 
The noise added in our previous solution \resref{con2} grows approximately as the square root of the input dimension (for sufficiently large $T$), which could be prohibitive for a high-dimensional input. While in a worst-case setup this, as we discussed above, seems unavoidable, we investigate whether certain geometric properties of the input/output space could be used to obtain better results in certain interesting scenarios.

A natural strategy for reducing the dependence of the excess risk bounds on $d$ is to use dimensionality reduction techniques such as the Johnson-Lindenstrauss Transform (JLT): The server performing the computation can choose a random projection matrix $\Phi \in \R^{m \times d}$ which is then used for projecting all the $\x_i$'s (covariates) onto a lower-dimensional space. The advantage being that, using the techniques from~\resref{con2}, one could privately minimize the excess risk in the projected subspace, and in doing so, the dependence on the dimension in excess risk is reduced to $\approx \sqrt{m}$ (from $\approx \sqrt{d}$).\footnote{A point to note that this use of random projections for linear regression is different from its typical use in prior work~\cite{TCS-060}, where they are used not for reducing dimensionality but rather for reducing the number of samples used in the regression computation, and hence improving in running time.} However, there are two significant challenges in implementing this idea for incremental regression, both of which we overcome using geometric ideas. 

The first one being that this only solves the problem in the projected subspace, whereas our goal is to produce an estimate to true empirical risk minimizer. To achieve this we would need to ``lift'' back the solution from the projected subspace to the original space. We do so by using recent developments in the problem of high dimensional estimation with constraints~\cite{vershynin2014estimation}. The {\em Gaussian width} of the constraint space $\CCC$ plays an important role in this analysis. Gaussian width is a well-studied quantity in convex geometry that captures the geometric complexity of any set of vectors.\footnote{For a set $S \subseteq \R^d$, Gaussian width is defined as $w(S)=\E_{\g \in \NNN(0,1)^d}\, [\sup_{\a \in S} \langle \a, \g\rangle]$.} The rough idea here being that a good estimation (lifting) can be done from few observations (small $m$) as long as the constraint set $\CCC$ has small Gaussian width. Many popular sets have low Gaussian width, e.g., the width of $L_1$-ball ($B_1^d$ is $\Theta(\sqrt{\log d})$, and that of set of all sparse vectors in $\R^d$ with at most $k$ non-zero entries is $\Theta(\sqrt{k \log (d/k)})$.

The second challenge comes from the incremental nature of input generation, because it allows for generation of $\x_i$'s after $\Phi$ is {\em fixed}. This is an issue because the guarantees of random projections, such as JL transform, only hold if the inputs on which it is applied are chosen before choosing the transformation. For example, given a random projection matrix $\Phi \in \R^{m \times d}$ with $m \ll d$, it is simple to generate $\x$ such that the norm on $\x$ is substantially different from the norm of $\Phi \x$.\!\footnote{Note that this issue arises independent of the differential privacy requirements, and holds even in a non-private incremental setting.} Again to deal with these kind of adaptive choice of inputs, we rely on the geometric properties of problem. In particular, we use Gordon's theorem that states that one can embed a set of points $S$ on a unit sphere into a (much) lower-dimensional space $\R^m$ using a Gaussian random matrix\footnote{Recent results~\cite{bourgain2015toward} have shown other distributions for generating $\Phi$ also provide similar guarantees.} $\Phi$ such that, $\sup_{\a \in S} | \| \Phi \a\|^2 - \| \a \|^2 |$ is small (with high probability), provided $m$ is at least the square of the Gaussian width of $S$~\cite{gordon1988milman}. In a sense, $w(S)^2$ can be thought as the ``effective dimension'' of $S$, so projecting the data onto $m \approx w(S)^2$ dimensional space suffices for guaranteeing the above condition. 

Using the above geometric ideas, and the \emph{Tree Mechanism} to incrementally construct the private gradient function (as in~\resref{con2}), we present our second private algorithm for incremental regression, with an 
$$\tilde{O}(\min\{T^{\frac 1 3}W^{\frac 2 3} + T^{\frac 1 6}W^{\frac 1 3}\sqrt{\OPT}+T^{\frac 1 4}W^{\frac 1 2}\sqrt[4]{\OPT},T\})$$ 
excess risk bound, where $W=w(\XXX)+w(\CCC)$, $\XXX \subset \R^d$ is the domain from which the $\x_i$'s (covariates) are drawn, and $\OPT$ is the true minimum empirical risk at time $T$. As we discuss in Section~\ref{sec:width}, for many practically interesting regression instances, such as when $\XXX$ is a domain of sparse vectors, and $\CCC$ is bounded by an $L_1$-ball (as is the case of popular Lasso regression) or is a polytope defined by polynomial (in dimension) many vertices, $W \approx \polylog(d)$, in which case the risk bound can be simplified to $\tilde{O}(\min\{T^{\frac 1 3} + T^{\frac 1 6}\sqrt{\OPT}+T^{\frac 1 4}\sqrt[4]{\OPT},T\})$. In most practical scenarios, when the relationship between covariates and response satisfy a linear relationship, one would also expect $\OPT \ll T$. 
These bounds show that, for certain instances, it is possible to design differentially private risk minimizers in the incremental setting with excess risk that depends only poly-logarithmically on the dimensionality of the data, a desired feature in a high-dimensional setting.
\end{list}

\noindent\textbf{Organization.} In Section~\ref{sec:related}, we discuss some related work. In Section~\ref{sec:prelim}, we present some preliminaries. Our generic transformation from a private batch ERM algorithm to a private incremental ERM algorithm is given in Section~\ref{sec:generic}. We present techniques that improve upon these bounds for the problem of private incremental regression in Sections~\ref{sec:smalld} and~\ref{sec:width}. The appendices contain some proof details and supplementary material. In Appendix~\ref{app:noisyproj}, we analyze the convergence rate of a noisy projected gradient descent technique, and in Appendix~\ref{app:treemech}, we present the~\emph{Tree Mechanism} of~\cite{DNPR10,CSS11}.

\subsection{Related Work} \label{sec:related}
\noindent\textbf{Private ERM.} Starting from the works of Chaudhuri~\emph{et al.}\ \cite{chaudhuri2009privacy,chaudhuri2011differentially}, private convex ERM problems have been studied in various settings including the low-dimensional setting~\cite{rubinstein2009learning,kifer2012private}, high-dimensional sparse regression setting~\cite{kifer2012private,thakurta2013differentially}, online learning setting~\cite{DBLP:journals/jmlr/JainKT12,thakurta2013nearly,jain2014near,DBLP:conf/uai/MishraT15}, local privacy setting~\cite{duchi2013local}, and interactive setting~\cite{jain2013differentially,ullman2015private}.

Bassily~\emph{et al.} \cite{bassily2014differentially} presented algorithms that for a general convex loss function $\jmath(\theta;\z)$ which is $1$-Lipschitz (Definition~\ref{defn:Lipsfunc}) for every $\z$, achieves an expected excess risk of $\approx \sqrt{d}$ under $(\eps,\delta)$-differential privacy and $\approx d$ under $\eps$-differential privacy (ignoring the dependence on other parameters for simplicity).\!\footnote{Better risk bounds were achieved under strong convexity assumptions
~\cite{bassily2014differentially,talwar2014private}.} We use their batch mechanisms in our generic construction to obtain risk bounds for incremental ERM problems (Theorem~\ref{thm:thm1}). They also showed that these bounds cannot be improved {\em in general}, even for the least-squares regression function.
However, if the constraint space has low Gaussian width (such as with the $L_1$-ball), Talwar~\emph{et al.}\ \cite{talwar2014private,talwar2015nearly} recently showed that, under $(\eps,\delta)$-differential privacy, the above bound can be improved by exploiting the geometric properties of the constraint space. An analogous result under $\eps$-differential privacy for the class of {\em generalized linear functions} (which includes linear, logistic regression) was recently obtained by Kasiviswanathan and Jin~\cite{kasiviswanathanicml}. Our excess risk bounds based on Gaussian width (presented in Section~\ref{sec:width}) uses a lifting procedure similar to that used by Kasiviswanathan and Jin~\cite{kasiviswanathanicml}. All of these above algorithms operate in the batch setting, and ours is the first work dealing with private ERM problems in an incremental setting.

\smallskip
\noindent\textbf{Private Online Convex Optimization.} Differentially private algorithms have also been designed for a large class of online (convex optimization) learning problems, in both the full information and bandit settings~\cite{DBLP:journals/jmlr/JainKT12,thakurta2013nearly,jain2014near}. Adapting the popular {\em Follow the Approximate Leader} framework~\cite{hazan2006logarithmic}, Smith and Thakurta~\cite{thakurta2013nearly} obtained regret bounds for private online learning with nearly optimal dependence on $T$ (though the dependence on the dimensionality $d$ in these results is much worse than the known lower bounds). As discussed earlier, incremental learning is a variant of the batch learning, with goals different from online learning.

\smallskip
\noindent\textbf{Private Incremental Algorithms.} Dwork~\emph{et al.}\ \cite{DNPR10} introduced the problem of counting under incremental (continual) observations. The goal is to monitor a stream of $T$ bits, and continually release a counter of the number of $1$'s that have been observed so far, under differential privacy. The elegant \emph{Tree Mechanism} introduced by Dwork~\emph{et al.}\ \cite{DNPR10} and Chan~\emph{et al.}\ \cite{CSS11} solves this problem, under $\eps$-differential privacy, with error roughly $\log^{5/2} T$. The versatility of this mechanism has been utilized in different ways in subsequent works~\cite{thakurta2013nearly,DNRR15,hsu2014private}. We use the \emph{Tree Mechanism} as a basic building block for computing the private gradient function incrementally. Dwork~\emph{et al.}\ \cite{DNPR10} also achieve {\em pan-privacy} for their continual release (which means that the mechanism preserves differential privacy even when an adversary can observe snapshots of the mechanism's internal states), a property that we do not investigate in this paper.


\smallskip
\noindent\textbf{Use of JL Transform for Privacy.}
The use of JL transform for achieving differential privacy with better utility has been well documented for a variety of computational tasks~\cite{zhou2009differential,blocki2012johnson,kenthapadi2013privacy,sheffet2015private,upadhyay2014randomness}. Blocki~\emph{et al.}\ \cite{blocki2012johnson} have shown that if $\Phi \in \R^{m \times n}$ is a Gaussian random matrix of appropriate dimension, then $\Phi X \in \R^{m \times d}$ is differentially private  if the least singular value of the matrix $X \in \R^{n \times d}$ is ``sufficiently'' large. The bound on the least singular value was recently improved by Sheffet~\cite{sheffet2015private}. Here the privacy comes as a result of randomization inherent in the transform. However, these results require that the projection matrix is kept private, which is an issue in an incremental setting, where an adversary could learn about $\Phi$ over time. Kenthapadi~\emph{et al.}\ \cite{kenthapadi2013privacy} use Johnson-Lindenstrauss transform to publish a private sketch that enables estimation of the distance between users. Their main idea is based on projecting a $d$-dimensional user feature vector into a lower $m$-dimensional space by first applying a random Johnson-Lindenstrauss transform and then adding Gaussian noise to each entry of the resulting vector. None of these above results deal with an incremental setting, where applying the JL transform is itself a challenge because of the adaptivity issues.

\smallskip
\noindent\textbf{Traditional Streaming Algorithms.}
The literature on streaming algorithms is replete with various techniques that can solve linear regression and related problems on various streaming models of computation under various computational resource constraints. We refer the reader to the survey by Woodruff~\cite{TCS-060} for more details. However, incremental regression under differential privacy, poses a different challenge than that faced by traditional streaming algorithms. The issue is that the solution (regression parameter) at each timestep relies on all the datapoints observed in the past, and frequent releases about earlier points can lead to privacy loss. 

\section{Preliminaries} \label{sec:prelim}
\noindent\textbf{Notation and Data Normalization.} We denote $[n]=\{1,\ldots,n\}$. Vectors are in column-wise fashion, denoted by boldface letters. For a vector $\v$, $\v^\top$ denotes its transpose, $\|\v\|$ it's Euclidean ($L_2$-) norm, and $\| \v \|_1$ it's $L_1$-norm. For a matrix $M$, $\|M\|$ denotes its spectral norm which equals its largest singular value, and $\| M \|_F$ its Frobenius norm. We use $\mathbf{0}$ to denote a $d$-dimensional vector  of all zeros.  The $d$-dimensional unit ball in $L_p$-norm centered at origin is denoted by $B_p^d$. $\mathbb{I}_d$ represents the $d \times d$ identity matrix. $\NNN(\mu,\Sigma)$ denotes the Gaussian distribution with mean vector $\mu$ and covariance matrix $\Sigma$. For a variable $n$, we use $\poly(n)$ to denote a polynomial function of $n$ and $\polylog(n)$ to denote $\poly(\log(n))$.

We assume all streams are of a fixed length $T$, which is known to the algorithm. We make this assumption for simplifying the discussion. In fact, in our presented generic transformation for incremental ERM this assumption can be straightforwardly removed. Whereas in the case of algorithms for private incremental regression this assumption can be removed by using a simple trick\footnote{Chan \emph{et al.}\ \cite{CSS11} presented a scheme that provides a generic way for converting the \emph{Tree Mechanism} that requires prior knowledge of $T$ into a mechanism that does not.  They also showed that this new mechanism (referred to as the \emph{Hybrid Mechanism}) achieves asymptotically the same error guarantees as the \emph{Tree Mechanism}. The same ideas work in our case too, and the asymptotic excess risk bounds are not affected.} introduced by Chan~\emph{et al.}~\cite{CSS11}. For a stream $\Gamma$, we use $\Gamma_t$ to denote the stream prefix of length~$t$. 

Throughout this paper, we use $\ell$ and $\LLL$ to indicate the least-squared loss on a single datapoint and a collection of datapoints, respectively. Namely,
\begin{eqnarray*}
& \ell(\theta;(\x,y)) = (y - \langle \x,\theta \rangle)^2 \mbox{ and } & \\
& \LLL(\theta;(\x_1,y_1),\dots,(\x_n,y_n)) = \sum_{i=1}^n \ell(\theta;(\x_i,y_i)).&
\end{eqnarray*}

In Appendix~\ref{app:convex}, we review a few additional definitions related to convex functions and Gaussian concentration. For a set of vectors, we define its diameter as the maximum attained norm in the set.
\begin{definition} \label{defn:diameter}
(Diameter of Set) The diameter $\|\CCC\|$ of a closed set $\CCC \subseteq \R^d$, is defined as $\| \CCC \| = \sup_{\theta \in \CCC} \| \theta \|$.
\end{definition}

For improving the worst-case dependence on dimension $d$, we exploit the geometric properties of the input and constraint space. We use the well-studied quantity of {\em Gaussian width} that captures the $L_2$-geometric complexity of a set $S \subseteq \R^d$. 
\begin{definition} [Gaussian Width]
Given a closed set $S \subseteq \R^d$, its Gaussian width $w(S)$ is defined as: 
$$w(S)=\E_{\g \in \NNN(0,1)^d}\, [\sup_{\a \in S} \langle \a, \g \rangle].$$
\end{definition} 
In particular, $w(S)^2$ can be thought as the ``effective dimension'' of $S$. Many popular convex sets have low Gaussian width, e.g., the width of both the unit $L_1$-ball in $\R^d$ ($B_1^d$) and the standard $d$-dimensional probability simplex equals $\Theta(\sqrt{\log d})$, and the width of any ball $B_p^d$ for $1 \leq p \leq \infty$ is $\approx d^{1-1/p}$. For a set $\CCC$ contained in the $B_2^d$, $w(\CCC)$ is always $O(\sqrt{d})$. Another prominent set with low Gaussian width is that made up of sparse vectors. For example, the set of all $k$-sparse vectors (with at most $k$ non-zero entries) in $\R^d$ has Gaussian width $\Theta(\sqrt{k \log(d/k)})$.

\smallskip
\noindent\textbf{Differential Privacy on Streams.} We will consider differential privacy on data streams~\cite{DNPR10}. A stream is a sequence of points from some domain set $\ZZZ$. We say that two streams $\Gamma, \Gamma' \in \ZZZ^\ast$ of the same length are neighbors if there exists a datapoint $\z \in \Gamma$ and $\z' \in \ZZZ$ such that if we change $\z$ in $\Gamma$ to $\z'$ we get the stream $\Gamma'$. The result of an algorithm processing a stream is a sequence of outputs.

\begin{definition}[Event-level differential privacy~\cite{DMNS06,DNPR10}]\label{defn:dp}
Algorithm $\Alg$ is $(\epsilon,\delta)$-differentially private\footnote{In the practice of differential privacy, we generally think of $\eps$ as a small non-negligible constant, and $\delta$ as a parameter that is cryptographically small.} if for all neighboring streams $\Gamma, \Gamma'$ and for all sets of possible output sequences $\RRR \subseteq \R^\N$, we have
$$\Pr[\Alg(\Gamma) \in \RRR] \leq \exp(\epsilon) \cdot \Pr[\Alg(\Gamma') \in \RRR] + \delta,$$
where the probability is taken over the randomness of the algorithm.  When $\delta=0$, the algorithm $\Alg$ is $\eps$-differentially private.
\end{definition}

We provide additional background on differential privacy along with some techniques for achieving it in Appendix~\ref{app:dp}. 

\section{Private Incremental ERM: A Generic Mechanism}  \label{sec:generic}
We present a generic transformation for converting any private batch ERM algorithm into a private incremental ERM algorithm. We take this construction as a baseline for comparison for our private incremental regression algorithms. 

Mechanism~\PrivIncERM describes this simple transformation. At every timestep, Mechanism~\PrivIncERM outputs a $\theta^\priv_t$, a differentially private approximation of 
\begin{align*}
\hat{\theta}_t \in \mbox{argmin}_{\theta \in \CCC}\, \JJJ(\theta;\z_1,\dots,\z_t), \mbox{ where } \JJJ(\theta;\z_1,\dots,\z_t)  =  \sum_{i=1}^t \jmath(\theta;\z_i).
\end{align*}

The idea is to perform ``relevant'' computations only every $\tau$ timesteps, thereby ensuring that no $\z_i$ is used in more than $T/\tau$ invocations of the private batch ERM algorithm (for simplicity, assume that $T$ is a multiple of $\tau$). This idea is reminiscent of mini-batch processing ideas commonly used in big data processing~\cite{canny2013bidmach}. In Theorem~\ref{thm:thm1}, the parameter $\tau$ is set to balance the increase in excess risk due to lack of updates on the estimator and the increase in the excess risk due to the change in the privacy risk parameter $\eps$ (which arises from multiple interactions with the data).    

\begin{mechanism}[!t]
\DontPrintSemicolon
\caption{\PrivIncERM$(\eps,\delta)$}
\KwIn{A stream $\Gamma = \z_1,\dots,\z_T$, where each $\z_t$ is from the domain $\ZZZ \subseteq \R^d$, and $\tau \in \N $}
\KwOut{A differentially private estimate of $\hat{\theta}_t \in \mbox{argmin}_{\theta \in \CCC}\, \sum_{i=1}^t \jmath(\theta;\z_i)$ at every timestep $t \in [T]$}
Set $\eps' \leftarrow \frac{\eps}{\left (2\sqrt{\frac{2T}{\tau} \ln\left(\frac{2}{\delta}\right)} \right )}$ and  $\delta' \leftarrow \frac{\delta\tau}{2T}$\;
$\theta^\priv_0 \leftarrow \mathbf{0}$\;
\For{all $t \in [T]$}{
\If{$t$ is a multiple of $\tau$}{
$\theta^\priv_t \leftarrow$ Output of an $(\eps',\delta')$-differentially private algorithm minimizing $\JJJ(\theta;\z_1,\dots,\z_t)$ \label{step:dp}
}
\Else{
$\theta^\priv_t \leftarrow \theta^\priv_{t-1}$
}
Return $\theta^\priv_t$
}
\end{mechanism}

Mechanism~\PrivIncERM invokes a differentially private (batch) ERM algorithm for timesteps $t$ which are a multiple of $\tau$, and in all other timesteps it just outputs the result from the previous timestep. In Step~\ref{step:dp} of Mechanism~\PrivIncERM any differentially private batch ERM algorithm can be used, and this step dominates the time complexity of this mechanism. Here we present excess risk bounds obtained by invoking Mechanism~\PrivIncERM with the differentially private ERM algorithms of Bassily~\emph{et al.}\ \cite{bassily2014differentially} and Talwar~\emph{et al.}\ \cite{talwar2015nearly}. As mentioned earlier, the bounds of Bassily~\emph{et al.}\ \cite{bassily2014differentially} are tight in the worst-case. But as shown by Talwar~\emph{et al.}\ \cite{talwar2015nearly},  if the constraint space $\CCC$ has a small Gaussian width, then these bounds could be improved. The bounds of Talwar~\emph{et al.}\  depend on the curvature constant of $\jmath$ defined as:
\begin{flalign*}
C_\jmath = \sup_{\z \in \ZZZ} \sup_{\theta_a,\theta_b \in \CCC, l \in (0,1],\theta_c = \theta_a + l(\theta_b-\theta_a)} & \frac{2}{l^2} (\jmath(\theta_c;\z) - \jmath(\theta_a;\z) - \langle \theta_c-\theta_a,\nabla \jmath(\theta_a;\z) \rangle  ).
\end{flalign*}
For linear regression where $\jmath(\theta;\z) = \ell(\theta;(\x,y)) = (y - \langle \x,\theta\rangle)^2$, then with $\| \x \| \leq 1$ and $| y | \leq 1$, it follows that $C_\ell \leq \| \CCC \|^2$~\cite{clarkson2010coresets}. 

We now show that Mechanism~\PrivIncERM is event-level differentially private (Definition~\ref{defn:dp}), and analyze its utility under various invocations in Step~\ref{step:dp}. 
\begin{theorem} \label{thm:thm1}  Mechanism~\PrivIncERM is $(\eps,\delta)$-differentially private with respect to a single datapoint change in the stream~$\Gamma$. Also,
\begin{CompactEnumerate}
\item \label{res:incerm1} (Using Theorem 2.4, Bassily~\emph{et al.}\ \cite{bassily2014differentially}).  If the function $\jmath(\theta;\z) \,:\, \CCC \times \ZZZ \rightarrow \R$ is a positive-valued function that is convex with respect to $\theta$ over the domain $\CCC \subseteq \R^d$. Then for any $\beta > 0$, with probability at least $1-\beta$, for each $t \in [T]$, $\theta^\priv_t$ generated by Mechanism~\PrivIncERM with $\tau = \lceil \frac{(Td)^{1/3}}{\eps^{2/3}} \rceil$ satisfies:
\begin{align*}
\JJJ(\theta^\priv_t; \Gamma_t) - \min_{\theta \in \CCC}\, \JJJ(\theta; \Gamma_t)  = O \left ( \min \{ \frac{(Td)^{1/3} L \| \CCC \| \log^{5/2}(1/\delta) \, \polylog(T/\beta)}{\eps^{2/3}}, T L \| \CCC \| \} \right),
\end{align*}
where $L$ is the Lipschitz parameter of the function $\jmath$.
\item \label{res:incerm2} (Using Theorem 2.4, Bassily~\emph{et al.}\ \cite{bassily2014differentially}). If the function $\jmath(\theta;\z) \,:\, \CCC \times \ZZZ \rightarrow \R$ is a positive-valued function which is $\nu$-strongly convex with respect to  $\theta$ over the domain $\CCC \subseteq \R^d$. Then for any $\beta > 0$, with probability at least $1-\beta$, for each $t \in [T]$, $\theta^\priv_t$ generated by Mechanism~\PrivIncERM with $\tau = \lceil \frac{\sqrt{d L}}{(\nu^{1/2} \eps \| \CCC \|^{1/2})} \rceil $ satisfies:
\begin{align*}
\JJJ(\theta^\priv_t; \Gamma_t) - \min_{\theta \in \CCC}\, \JJJ(\theta; \Gamma_t) = O \left ( \min \{\frac{\sqrt{d} L^{3/2} \| \CCC \|^{1/2} \log^4(1/\delta)\polylog(T/\beta)}{\nu^{1/2} \eps}, TL \| \CCC \| \} \right ),
\end{align*}
where $L$ is the Lipschitz parameter of the function $\jmath$.
\item \label{res:incerm3} (Using Theorem 2.6 of Talwar~\emph{et al.}\ \cite{talwar2015nearly}). If the function $\jmath(\theta;\z) \,:\, \CCC \times \ZZZ \rightarrow \R$ is a positive-valued function that is convex with respect to $\theta$ over the domain $\CCC \subseteq \R^d$. Then for any $\beta > 0$, with probability at least $1-\beta$, for each $t \in [T]$, $\theta^\priv_t$ generated by Mechanism~\PrivIncERM with $\tau = \lceil \frac{\sqrt{Tw(\CCC)}C_\jmath^{1/4}}{((L\|\CCC\|)^{1/4}\eps^{1/2})} \rceil$ satisfies:
\begin{align*}
\JJJ(\theta^\priv_t; \Gamma_t) - \min_{\theta \in \CCC}\, \JJJ(\theta; \Gamma_t) = O \left (\min \{ \frac{\sqrt{Tw(\CCC)}C_\jmath^{1/4} (L \| \CCC \|)^{3/4} \log^{7/3}(1/\delta) \, \polylog(T/\beta)}{\eps^{1/2}}, T L \| \CCC \| \} \right ),
\end{align*}
where $L$ is the Lipschitz parameter and $C_\jmath$ is the curvature constant of the function $\jmath$.
\end{CompactEnumerate}
\end{theorem}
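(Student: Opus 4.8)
The plan is to handle the privacy claim and the three utility claims separately, since the privacy argument is common to all three and each utility claim just substitutes a different batch ERM guarantee into the same template.

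For the privacy claim, the key observation is that each datapoint $\z_i$ influences the output only through those invocations of the private batch ERM subroutine in Step~\ref{step:dp} that occur at timesteps $t \geq i$ with $t$ a multiple of $\tau$; there are at most $\lceil T/\tau \rceil \leq 2T/\tau$ such invocations (assuming $T$ a multiple of $\tau$, exactly $T/\tau$). Between those timesteps the mechanism merely copies the previously released value, which is post-processing and incurs no additional privacy loss. Each invocation is $(\eps',\delta')$-differentially private by assumption on the subroutine. I would then invoke the advanced composition theorem (Theorem~\ref{thm:composition2}): running $k = T/\tau$ mechanisms each $(\eps',\delta')$-DP yields, for any $\tilde\delta > 0$, a $\bigl(\eps' \sqrt{2k\ln(1/\tilde\delta)} + k\eps'(e^{\eps'}-1),\, k\delta' + \tilde\delta\bigr)$-DP composition. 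Setting $\eps' = \eps / \bigl(2\sqrt{(2T/\tau)\ln(2/\delta)}\bigr)$ and $\delta' = \delta\tau/(2T)$ and $\tilde\delta = \delta/2$ (so $k\delta' = \delta/2$) makes the first term at most $\eps/2 \cdot \sqrt{k/(2 \cdot 2T/\tau \cdot \ln(2/\delta))} \cdot \sqrt{\ln(1/\tilde\delta)}$, which one checks is at most $\eps/2$, and the lower-order term $k\eps'(e^{\eps'}-1) = O(k\eps'^2)$ is also absorbed into the remaining $\eps/2$ budget (for $\eps \leq 1$, say; the general case needs only a constant adjustment). Hence the overall mechanism is $(\eps,\delta)$-DP with respect to a single datapoint change.

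For the utility claims, fix $t \in [T]$ and let $t' = \tau \lfloor t/\tau \rfloor$ be the last ``update'' timestep not exceeding $t$ (with $t' = 0$ giving $\theta^\priv_t = \mathbf 0$, for which the trivial bound $\JJJ(\theta^\priv_t;\Gamma_t) - \min_\theta \JJJ \leq 2tL\|\CCC\| \leq 2TL\|\CCC\|$ from the introduction applies directly). Otherwise $\theta^\priv_t = \theta^\priv_{t'}$ is the output of the $(\eps',\delta')$-private batch ERM routine run on $\z_1,\dots,\z_{t'}$. I would decompose the excess risk by triangle-like reasoning into two pieces: (i) the \emph{staleness} error, $\JJJ(\hat\theta_{t'};\Gamma_t) - \JJJ(\hat\theta_t;\Gamma_t)$, and (ii) the \emph{privacy noise} error on the prefix $\Gamma_{t'}$, namely $\JJJ(\theta^\priv_{t'};\Gamma_{t'}) - \JJJ(\hat\theta_{t'};\Gamma_{t'})$, plus the contribution of the $t - t' < \tau$ missing datapoints. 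For (i) and the missing-datapoint term, since each $\jmath(\cdot;\z)$ is $L$-Lipschitz and nonnegative, adding fewer than $\tau$ terms $\jmath(\theta;\z_i)$ changes the objective by at most $\tau L \|\CCC\|$ for any fixed $\theta$ (bounding $\jmath(\theta;\z_i) \le L\|\CCC\| + \jmath(\mathbf 0;\z_i)$ is not needed if we instead compare objectives on the same prefix — concretely $\JJJ(\hat\theta_{t'};\Gamma_t) \le \JJJ(\hat\theta_{t'};\Gamma_{t'}) + (t-t')L\|\CCC\|$ after recentering, and $\JJJ(\hat\theta_t;\Gamma_t) \ge \JJJ(\hat\theta_t;\Gamma_{t'})$ since $\jmath \ge 0$), so the staleness plus missing-data error is $O(\tau L\|\CCC\|)$. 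For (ii) I plug in the cited batch guarantee with privacy parameters $\eps',\delta'$: e.g.\ for claim~\ref{res:incerm1}, Theorem~2.4 of Bassily~\emph{et al.}~\cite{bassily2014differentially} gives excess risk $O\bigl(\sqrt{d}\,L\|\CCC\|\log(1/\delta')\polylog(\cdot)/\eps'\bigr)$ with high probability; substituting $\eps' = \Theta\bigl(\eps/\sqrt{(T/\tau)\log(1/\delta)}\bigr)$ and $\delta' = \Theta(\delta\tau/T)$ yields $O\bigl(\sqrt{d}\,L\|\CCC\|\sqrt{T/\tau}\,\log^{3/2}(1/\delta)\polylog(T/\beta)/\eps\bigr)$. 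Balancing $\tau$ against $\sqrt{d T/\tau}/\eps$ gives $\tau \approx (Td)^{1/3}/\eps^{2/3}$, which is exactly the setting in the statement, and produces the claimed $O\bigl((Td)^{1/3}L\|\CCC\|\log^{5/2}(1/\delta)\polylog(T/\beta)/\eps^{2/3}\bigr)$ bound (the exponent on $\log(1/\delta)$ coming from $\log^{3/2}$ times the additional factors in the Bassily~\emph{et al.}\ bound); the $\min\{\cdot, TL\|\CCC\|\}$ is automatic from the trivial bound. Claims~\ref{res:incerm2} and~\ref{res:incerm3} are identical except that I substitute the strongly-convex batch bound ($\approx \sqrt{d}/(\nu^{1/2}\eps')$) and the Gaussian-width batch bound of Talwar~\emph{et al.}\ ($\approx \sqrt{w(\CCC)}C_\jmath^{1/4}(L\|\CCC\|)^{3/4}/\eps'^{1/2}$) respectively, and re-balance $\tau$ accordingly; a union bound over the (at most $T/\tau \le T$) invocations, each failing with probability $\le \beta/T$, controls the overall failure probability, which is where the $\polylog(T/\beta)$ absorbs an extra $\log(T/\beta)$ factor.

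The main obstacle is purely bookkeeping: tracking how the batch-ERM failure probability must be rescaled to $\beta/T$ across all invocations (so the per-call bound carries a $\log(T/\beta)$ that gets folded into $\polylog(T/\beta)$), and verifying that after substituting $\eps'$ and $\delta'$ into each of the three batch bounds and optimizing $\tau$ the exponents on $\log(1/\delta)$ come out as stated ($5/2$, $4$, and $7/3$ respectively) — these follow from the stated exponents in the batch theorems combined with the extra $\sqrt{\log(1/\delta)}$ from $\eps'$ and the $\log(1/\delta')=\log(2T/(\delta\tau))$ terms, but require care. There is no conceptual difficulty beyond correctly invoking advanced composition and the three black-box batch results.
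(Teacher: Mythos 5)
Your proposal matches the paper's proof essentially step for step: the same advanced-composition argument over the $T/\tau$ invocations with the same settings of $\eps'$ and $\delta'$, and the same utility decomposition reducing the excess risk at time $t$ to the batch excess risk at the last update timestep $j\tau$ plus an $O(\tau L\|\CCC\|)$ staleness term (using positivity of $\jmath$ to drop the tail terms for $\hat{\theta}_t$), followed by plugging in the three batch bounds and balancing $\tau$. No substantive differences; your explicit handling of the recentering for the missing-datapoint term is, if anything, slightly more careful than the paper's.
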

\begin{proof}
\noindent\textbf{Privacy Analysis.} Each $\z_i$ is accessed at most $T/\tau$ times by the algorithm invoked in Step~\ref{step:dp}. Let $l=T/\tau$. By using the composition Theorem~\ref{thm:composition2} with $\delta^\ast = \delta/2$ it follows that the entire algorithm is $(\eps'\sqrt{2l\ln(2/\delta)}+2l\eps'^2,l\cdot(\delta/2l)+\delta/2)$-differentially private. We set $\eps'$ as $\eps/2 = \eps'\sqrt{2l\ln(2/\delta)}$. Note that with this setting of $\eps'$, $2l\eps'^2 \leq \eps/2$, therefore, $\eps'\sqrt{2l\ln(2/\delta)}+2l\eps'^2 \leq \eps$. Hence, Mechanism~\PrivIncERM is $(\eps,\delta)$-differentially private.

\noindent\textbf{Utility Analysis.} If $T < \tau$, algorithm does not access the data, and in that case, the excess risk can be bounded by $ TL \| \CCC \|$. Now assume $T\geq \tau$. Note that the algorithm performs no computation when $t$ is not a multiple of $\tau$. Let $\Gamma_t$ denote the prefix of stream $\Gamma$ till time $t$. Let $t$ lie in the interval of $[j\tau,(j+1)\tau]$ for some $j \in \N$. The total loss accumulated by the algorithm at time $t$ can be split as:
\begin{align*}
\sum_{i=1}^t \jmath(\theta^\priv_t;\z_i) & = \sum_{i=1}^{j\tau} \jmath(\theta^\priv_{j\tau};\z_i) + \sum_{i=j\tau+1}^t \jmath(\theta^\priv_{j\tau};\z_i) \leq \sum_{i=1}^{j\tau}  \jmath(\theta^\priv_{j\tau};\z_i) + \tau L \| \CCC\|,
\end{align*}
as $\theta^\priv_t = \theta^\priv_{j \tau}$. 

Let $\hat{\theta}_t \in \mbox{argmin}_{\theta \in \CCC}\, \sum_{i=1}^t \jmath(\theta;\z_i)$. As $\jmath$ is positive-valued,
$$\sum_{i=1}^t \jmath(\hat{\theta}_t;\z_i) \geq \sum_{i=1}^{j\tau} \jmath(\hat{\theta}_t;\z_i) \geq \sum_{i=1}^{j\tau} \jmath(\hat{\theta}_{j \tau};\z_i).$$
Hence, we get,
\begin{align*}
\sum_{i=1}^t \jmath(\theta^\priv_t;\z_i) - \sum_{i=1}^t \jmath(\hat{\theta}_t;\z_i)  \leq \sum_{i=1}^{j\tau} \jmath(\theta^\priv_{j\tau};\z_i) - \sum_{i=1}^{j\tau}\jmath(\hat{\theta}_{j \tau};\z_i) + \tau L \| \CCC\|.
\end{align*}
Using the results from Bassily~\emph{et al.}\ \cite{bassily2014differentially} or Talwar~\emph{et al.}\ \cite{talwar2015nearly} to bound $\sum_{i=1}^{j \tau} \jmath(\theta^\priv_{j\tau};\z_i)  - \sum_{i=1}^{j \tau} \jmath(\hat{\theta}_{j \tau};\z_i)$, setting $\tau$ to balance the various opposing terms, and a final union bound provide the claimed bounds.
\end{proof}

\section{Private Incremental Linear Regression using Tree Mechanism} \label{sec:smalld}
We now focus on the problem of private incremental linear regression. Our first approach for this problem is based on a private incremental computation of the gradient. The algorithm is particularly effective in the regime of large $T$ and small $d$. A central idea of our approach is the construction of a {\em private gradient function} defined as follows.
\begin{definition} \label{defn:privgradfunc}
Let $\CCC \subseteq \R^d$. Algorithm $\Alg$ computes an $(\alpha,\beta)$-accurate gradient of the loss function $\JJJ(\theta;\z_1,\dots,\z_t)$ with respect to $\theta \in \CCC$, if given $\z_1,\dots,\z_t \in \ZZZ$ it outputs a function $g_t: \CCC \rightarrow \R^d$ such that:
\begin{list}{{\bf (\roman{enumi})}}{\usecounter{enumi}
\setlength{\leftmargin}{\parindent}
\setlength{\listparindent}{\parindent}
\setlength{\parsep}{0pt}}
\item \label{res:privacy} {\bf Privacy}: $\Alg$ is $(\eps,\delta)$-differentially private (as in Definition~\ref{defn:dp}), i.e., for all neighboring streams $\Gamma, \Gamma' \in \ZZZ^\ast$ and subsets $\RRR \subseteq \CCC \rightarrow \R^d$,
$$\Pr[\Alg(\Gamma) \in \RRR] \leq \exp(\eps) \cdot \Pr[\Alg(\Gamma') \in \RRR]+\delta.$$
\item \label{res:utility} {\bf Utility}: The function $g_t$ is an $(\alpha,\beta)$-approximation to the true gradient, in that, 
$$\Pr_{\Alg}\left[\max_{\z_1,\dots,\z_t \in\ZZZ, \theta\in\CCC}\| g_t(\theta) - \nabla \JJJ(\theta;\z_1,\dots,\z_t)  \| \geq \alpha\right]\leq\beta.$$
\end{list}
\end{definition}
Note that the output of $\Alg$ in the above definition is a function $g_t$. The first requirement on $\Alg$ specifies that it satisfies the differential privacy condition (Definition~\ref{defn:dp}). The second requirement on $\Alg$ specifies that for any $\theta \in \CCC$, it gives a ``sufficiently'' accurate estimate of the true gradient $\nabla \JJJ(\theta;\z_1,\dots,\z_t)$.

Let $\Gamma = (\x_1,y_1),\dots,(\x_T,y_T)$ represent the stream of covariate-response pairs, we use $\Gamma_t$ to denote $(\x_1,y_1),\dots,(\x_t,y_t)$. Consider the gradient of the loss function $\LLL(\theta; \Gamma_t)$ where $X_t \in \R^{t \times d}$ is a matrix with rows $\x_1^\top,\dots,\x_t^\top$ and $\y_t=(y_1,\dots,y_t)$:
\begin{align} \label{eqn:gradvec}
\nabla \LLL(\theta; \Gamma_t) =  2 (X_t^\top X_t \theta - X_t^\top \y_t) = 2 \left (\sum_{i=1}^t \x_i \x_i^\top \theta  -  \sum_{i=1}^t \x_i y_i\right ).
\end{align}
A simple observation from the gradient form of~\eqref{eqn:gradvec} is that if we can maintain the streaming sum of $\x_1 y_1,\dots,\x_T y_T$ and the streaming sum of $\x_1 \x_1^\top,\dots,\x_T \x_T^\top$, then we can maintain the necessary ingredients needed to compute $\nabla \LLL(\theta; \Gamma_t)$ for any $t \in [T]$. We use this observation to construct a private gradient function $g_t : \CCC \rightarrow \R^d$ at every timestep $t$. The idea is to privately maintain $\sum_{i=1}^T \x_i \x_i^\top$ and $\sum_{i=1}^T \x_i y_i$ over the steam using the \emph{Tree Mechanism} of~\cite{DNPR10,CSS11}. We present the entire construction of the \emph{Tree Mechanism} in Appendix~\ref{app:treemech}. The rough idea behind this mechanism is to build a binary tree where the leaves are the actual inputs from the stream, and the internal nodes store the partial sums of all the leaves in its sub-tree. For a stream of vectors, $\upsilon_1,\dots,\upsilon_T$ in the unit ball, the \emph{Tree Mechanism} allows private estimation of $\sum_{i=1}^t \upsilon_i$, for every $t \in [T]$, with error roughly $\sqrt{d} \log^{2} T$ (under $(\eps,\delta)$-differential privacy, ignoring other parameters). 

Somewhat similar to our approach, Smith and Thakurta also use the \emph{Tree Mechanism} to maintain the sum of gradients in their private online learning algorithm~\cite{thakurta2013nearly}, however, unlike our approach, they do not construct a private gradient function.

\begin{algorithm}[!t]
\DontPrintSemicolon
\caption{\PrivIncReg$(\eps,\delta)$}
\KwIn{A stream $\Gamma = (\x_1,y_1),\dots,(\x_T,y_T)$, where each $(\x_t,y_t)$ in $\Gamma$ is from the domain $\XXX \times \YYY$ where $\XXX \subset \R^d$ with $\| \XXX \| \leq 1$ and  $\YYY \subset \R$ with $\| \YYY \| \leq 1$}
\KwOut{A differentially private estimate of $\hat{\theta}_t \in \mbox{argmin}_{\theta \in \CCC}\, \sum_{i=1}^t (y_i - \langle \x_i,\theta \rangle)^2$ at every timestep $t \in [T]$}
Set $\eps' \leftarrow \frac{\eps}{2}$, \,\, $\delta' \leftarrow \frac{\delta}{2}$, \,\, $\kappa \leftarrow \frac{\log^{3/2}(T)  \sqrt{\log \left ( \frac{1}{\delta'} \right ) }}{\eps'}$,\,\, $\alpha' \leftarrow O(\kappa \| \CCC \| \sqrt{d})$, \, and $r \leftarrow \Theta \left ( \left (1+ \frac{T\|\CCC\|}{\alpha'} \right )^2 \right )$\;
\For{all $t \in [T]$}{
$\q_t \leftarrow $ output of \TreeMech$(\eps',\delta',2)$ at $t$ when invoked on the stream $\x_1 y_1,\dots,\x_T y_T$ \label{Step3}\;
$Q_t \leftarrow $ output of \TreeMech$(\eps',\delta',2)$ at $t$ when invoked on the stream $\x_1 \x_1^\top,\dots,\x_T \x_T^\top$ which can be viewed as $d^2$-dimensional vectors (the outputs are converted back to form $d \times d$ matrices)\label{Step4}\;
Define a private gradient function, $g_t : \CCC \rightarrow \R^d$ as:
$$g_t(\theta) =  2(Q_t \theta - \q_t)$$ \label{Step5}
\vspace*{-3ex}
\;
$\theta^\priv_t \leftarrow \NoisyProjGrad(\CCC,g_t,r)$ (described in Appendix~\ref{app:noisyproj}) \label{Step7}\;
Return $\theta^\priv_t$
}
\end{algorithm}

Once the private gradient function $g_t$ is released, it could be used to obtain an approximation to the estimator $\hat{\theta}$ by using any traditional gradient-based optimization technique. In this paper, we use a variant of the classical {\em projected gradient descent} approach, described in Algorithm \NoisyProjGrad (Appendix~\ref{app:noisyproj}). Algorithm \NoisyProjGrad is a iterative algorithm that takes in the constraint set $\CCC$, private gradient function $g_t$, and a parameter $r$ denoting the number of iterations, and returns back ($\theta_t^\priv$) a private estimate of regression parameter.

An important point to note is that evaluating the gradient function at different $\theta$'s, as needed for any gradient descent technique, does not affect the privacy parameters $\eps$ and $\delta$. We analyze the convergence of Algorithm \NoisyProjGrad in Appendix~\ref{app:noisyproj}. This convergence result (Corollary~\ref{cor:noisyprojgrad}) is used to set the parameter $r$ in Algorithm~\PrivIncReg. 

Algorithm~\PrivIncReg is $(\eps,\delta)$-differentially private with respect to a single datapoint change in the stream $\Gamma$. This follows as the $L_2$-sensitivity 
of the stream in both the invocations (Steps~\ref{Step3} and~\ref{Step4}) of Algorithm~\TreeMech is less than $2$ (because of the normalization on the $\x_i$'s and $y_i$'s). Using the standard composition properties (Theorem~\ref{thm:composition1}) of differential privacy gives that the Algorithm~\PrivIncReg is $(\eps,\delta)$-differentially private. The algorithm only requires $O(d^2 \log T)$ space (see Appendix~\ref{app:treemech}), therefore having only a logarithmic dependence on the length of the stream. The running time of the algorithm is dominated by Steps~\ref{Step4} and~\ref{Step7}; at every timestep $t$, Step~\ref{Step4} has time complexity of $O(d^2 \log T)$, whereas the time complexity of Step~\ref{Step7} is $r$ times the time complexity of projecting a datapoint onto $\CCC$ (the $P_\CCC$ operation defined in Appendix~\ref{app:noisyproj}). We now analyze the utility of this algorithm, using the error bound on \emph{Tree Mechanism} from Proposition~\ref{prop:counter}.

\begin{lemma} \label{lem:binmecherror}
For any $\beta > 0$, $\theta \in \CCC$, and $t \in [T]$, with probability at least $1-\beta$, the function $g_t$ defined in Algorithm~\PrivIncReg satisfies:
$$ \left \| g_t(\theta) - \nabla \LLL(\theta;\Gamma_t) \right \| = O \left ( \kappa \| \CCC \| (\sqrt{d}+\sqrt{\log(1/\beta)}) \right ).$$
\end{lemma}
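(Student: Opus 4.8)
The plan is to isolate the two noise terms introduced by the two invocations of \TreeMech in Steps~\ref{Step3} and~\ref{Step4} and to bound them separately. Write the \TreeMech outputs as $\q_t = \sum_{i=1}^t \x_i y_i + \n_t$ and $Q_t = \sum_{i=1}^t \x_i\x_i^\top + E_t$, where $\n_t\in\R^d$ and $E_t\in\R^{d\times d}$ are the noise terms (which depend only on the internal randomness of \TreeMech, not on $\theta$ or on the stream). Combining the gradient form~\eqref{eqn:gradvec} with the definition $g_t(\theta) = 2(Q_t\theta - \q_t)$ gives the exact identity $g_t(\theta) - \nabla\LLL(\theta;\Gamma_t) = 2(E_t\theta - \n_t)$, and hence, since $\theta\in\CCC$ and using Definition~\ref{defn:diameter}, $\|g_t(\theta) - \nabla\LLL(\theta;\Gamma_t)\| \leq 2\|E_t\|\,\|\theta\| + 2\|\n_t\| \leq 2\|\CCC\|\,\|E_t\| + 2\|\n_t\|$. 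So the first step is to reduce the claim to a high-probability bound of $O(\kappa(\sqrt d + \sqrt{\log(1/\beta)}))$ on $\|\n_t\|$ and on the operator norm $\|E_t\|$; note the right-hand side is independent of $\theta$, so a single bound handles all $\theta\in\CCC$ at once.

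Bounding $\|\n_t\|$ is immediate: the vectors $\x_i y_i$ fed to \TreeMech in Step~\ref{Step3} have norm at most $\|\x_i\|\,|y_i|\leq 1$, so Proposition~\ref{prop:counter} (with parameters $\eps',\delta'$ and dimension $d$) gives $\|\n_t\| = O(\kappa(\sqrt d + \sqrt{\log(1/\beta)}))$ with probability at least $1-\beta/2$. The main obstacle is the bound on $\|E_t\|$: a black-box application of Proposition~\ref{prop:counter} to the $d^2$-dimensional stream $\x_1\x_1^\top,\dots$ only controls the Frobenius norm $\|E_t\|_F$ and would cost a full factor of $d$ instead of $\sqrt d$. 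The plan is instead to exploit the structure of the \TreeMech noise: $E_t$ is a sum of at most $\lceil\log_2 T\rceil$ independent Gaussian matrices with i.i.d.\ entries, hence itself a Gaussian random matrix with i.i.d.\ entries of standard deviation $\sigma = O(\kappa)$, and then to invoke the standard operator-norm tail bound for Gaussian matrices, $\Pr[\|E_t\| > \sigma(2\sqrt d + s)] \leq 2e^{-s^2/2}$, with $s = \Theta(\sqrt{\log(1/\beta)})$, to obtain $\|E_t\| = O(\kappa(\sqrt d + \sqrt{\log(1/\beta)}))$ with probability at least $1-\beta/2$.

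Finally I would union-bound the two failure events (total failure probability at most $\beta$) and substitute into the reduction, giving $\|g_t(\theta) - \nabla\LLL(\theta;\Gamma_t)\| = O(\kappa(\|\CCC\|+1)(\sqrt d + \sqrt{\log(1/\beta)}))$, which is the claimed $O(\kappa\|\CCC\|(\sqrt d + \sqrt{\log(1/\beta)}))$ once the $\|\n_t\|$ term is absorbed into the $\|\CCC\|\,\|E_t\|$ term (recall $\|\CCC\|$ is normalized to be $\Omega(1)$, cf.\ Table~\ref{tab:results}). Since the resulting bound does not depend on $\theta$ or on the stream, it in fact holds uniformly over $\theta\in\CCC$, which is the form needed for the private-gradient-function utility condition in Definition~\ref{defn:privgradfunc}. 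The only genuinely nontrivial step is the operator-norm control of $E_t$; the decomposition and the remaining concentration bookkeeping are routine.
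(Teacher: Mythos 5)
Your proposal is correct and follows essentially the same route as the paper: decompose the error into the two \TreeMech noise terms, bound the vector term $\n_t$ and the matrix term $E_t$ separately, and combine using $\|\theta\|\leq\|\CCC\|$. The one place you diverge is in fact an improvement in rigor: the paper disposes of the matrix term by citing Proposition~\ref{prop:counter} directly to claim $\|Q_t-\sum_{i=1}^t\x_i\x_i^\top\|=O(\kappa(\sqrt d+\sqrt{\log(1/\beta)}))$, which read literally (as a bound on the $L_2$-norm of a $d^2$-dimensional error vector, i.e., the Frobenius norm) would only give $O(\kappa(d+\sqrt{\log(1/\beta)}))$; your explicit observation that $E_t$ is a sum of at most $\log T$ i.i.d.\ Gaussian matrices, hence an i.i.d.\ Gaussian matrix with entrywise standard deviation $O(\kappa)$ whose \emph{operator} norm is controlled by Proposition~\ref{prop:GaussConc}, is exactly the argument needed to justify the $\sqrt d$ spectral-norm bound the paper intends. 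Your final remark that the additive $\|\n_t\|$ term is absorbed only under the normalization $\|\CCC\|=\Omega(1)$ is also accurate and applies equally to the paper's own statement of the lemma.
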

\begin{proof}
Applying Proposition~\ref{prop:counter}, we know with probability at least $1-\beta$,
\begin{eqnarray*}
& \left \| \q_t - \sum_{i=1}^t \x_i y_i  \right \|   =  O \left ( \kappa  (\sqrt{d}+\sqrt{\log(1/\beta)}) \right ) \mbox{ and } \left \| Q_t - \sum_{i=1}^t \x_i \x_i^\top \right \|   = O \left ( \kappa (\sqrt{d}+\sqrt{\log(1/\beta)})\right ).&
\end{eqnarray*}
Therefore, we get with probability at least $1-\beta$,
\begin{align*}
\| g_t(\theta) - \nabla \LLL(\theta;\Gamma_t) \|  =  \left \| 2 \left ( Q_t - \sum_{i=1}^t \x_i \x_i^\top \right ) \theta  - 2 \left (\q_t - \sum_{i=1}^t \x_i y_i  \right )\right \| =  O \left ( \kappa \| \CCC \| (\sqrt{d}+\sqrt{\log(1/\beta)}) \right ),
\end{align*}
where we used the fact that $\| \theta \| \leq \| \CCC \|$ for any $\theta \in \CCC$.
\end{proof}

\begin{theorem} \label{thm:thm2}
Algorithm~\PrivIncReg is $(\eps,\delta)$-differentially private with respect to a single datapoint change in the stream~$\Gamma$. For any $\beta > 0$, with probability at least $1-\beta$, for each $t \in [T]$, $\theta^\priv_t$ generated by Algorithm~\PrivIncReg satisfies:
\begin{align*}
\LLL(\theta^\priv_t; \Gamma_t)  - \min_{\theta \in \CCC}\, \LLL(\theta; \Gamma_t) =  O \left ( \frac{\log^{3/2} T \sqrt{\log(1/\delta)} \| \CCC \|^2 (\sqrt{d}+\sqrt{\log(T/\beta)})}{\eps}\right ).
\end{align*}
\end{theorem}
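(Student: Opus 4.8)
The plan is to combine the accuracy guarantee of the private gradient function (Lemma~\ref{lem:binmecherror}) with the convergence analysis of the noisy projected gradient descent procedure \NoisyProjGrad. Privacy is already established in the discussion preceding the theorem: both invocations of \TreeMech\ (Steps~\ref{Step3} and~\ref{Step4}) have $L_2$-sensitivity at most $2$ because $\|\x_i\|\le 1$ and $|y_i|\le 1$, each is run with parameters $(\eps',\delta')=(\eps/2,\delta/2)$, and since evaluating $g_t$ at arbitrarily many points $\theta$ is just post-processing of the \TreeMech\ outputs, basic composition (Theorem~\ref{thm:composition1}) yields $(\eps,\delta)$-differential privacy. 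So the work is entirely in the utility bound.

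For utility, first I would invoke Lemma~\ref{lem:binmecherror} with the given setting $\kappa = \frac{\log^{3/2}(T)\sqrt{\log(1/\delta')}}{\eps'}$, so that with probability at least $1-\beta$ (after a union bound over the two \TreeMech\ calls, and then over all $t\in[T]$ — the tree has only $O(\log T)$ levels so the $\sqrt{\log(1/\beta)}$ absorbs the extra $\log T$ union-bound factors into the stated $\sqrt{\log(T/\beta)}$) the function $g_t$ is an $(\alpha',\beta)$-accurate gradient with $\alpha' = O(\kappa\|\CCC\|\sqrt d) = O\!\left(\frac{\log^{3/2}(T)\sqrt{\log(1/\delta)}\,\|\CCC\|\,(\sqrt d + \sqrt{\log(T/\beta)})}{\eps}\right)$, matching the $\alpha'$ chosen in the algorithm. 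Condition on this good event for the rest of the argument.

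Next I would appeal to the convergence guarantee of \NoisyProjGrad\ (Corollary~\ref{cor:noisyprojgrad} in Appendix~\ref{app:noisyproj}): running projected gradient descent on the convex function $\LLL(\,\cdot\,;\Gamma_t)$ over $\CCC$, but using the approximate gradient $g_t$ in place of $\nabla\LLL$, for $r = \Theta\!\left((1+\frac{T\|\CCC\|}{\alpha'})^2\right)$ iterations, returns a $\theta^\priv_t$ with excess empirical risk $\LLL(\theta^\priv_t;\Gamma_t) - \min_{\theta\in\CCC}\LLL(\theta;\Gamma_t) = O(\alpha'\|\CCC\|)$. The intuition is standard: the function $\LLL(\,\cdot\,;\Gamma_t)$ is $G$-Lipschitz over $\CCC$ with $G = O(T\|\CCC\|)$ (each summand $\ell(\theta;(\x_i,y_i))$ has gradient $2(\langle\x_i,\theta\rangle - y_i)\x_i$ of norm $O(\|\CCC\|)$, and there are $t\le T$ of them), so with step size tuned to $\|\CCC\|$, $G$, and $r$, after $r$ steps the optimization error from the exact-gradient analysis is $O(G\|\CCC\|/\sqrt r) = O(\alpha'\|\CCC\|)$ by the choice of $r$, and a uniform gradient error of $\alpha'$ contributes at most an additional $O(\alpha'\|\CCC\|)$ to the suboptimality (each approximate step is within $\alpha'$ of the true gradient step, and this propagates linearly through the regret-style telescoping, losing a factor $\|\CCC\|$ from the diameter). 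Plugging in the value of $\alpha'$ then gives $\LLL(\theta^\priv_t;\Gamma_t) - \min_{\theta\in\CCC}\LLL(\theta;\Gamma_t) = O\!\left(\frac{\log^{3/2}(T)\sqrt{\log(1/\delta)}\,\|\CCC\|^2(\sqrt d + \sqrt{\log(T/\beta)})}{\eps}\right)$, which is the claimed bound. Taking the minimum with the trivial $O(T\|\CCC\|^2)$ bound (realized by outputting any fixed $\theta\in\CCC$, or equivalently when $r$ would be too small to help) completes it.

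The main obstacle is the second step: correctly quantifying how a \emph{uniform} gradient-estimation error of magnitude $\alpha'$ — held simultaneously over \emph{all} $\theta\in\CCC$ by Definition~\ref{defn:privgradfunc}(\ref{res:utility}), which is exactly why the algorithm releases a gradient \emph{function} rather than gradients at a few adaptively chosen points — propagates through the projected-gradient iterations without blowing up, and getting the dependence on $\|\CCC\|$ and $r$ right so that the choice $r = \Theta((1+T\|\CCC\|/\alpha')^2)$ precisely balances the optimization error against the noise floor. This is the content of Appendix~\ref{app:noisyproj} / Corollary~\ref{cor:noisyprojgrad}, and once that black box is in hand the theorem follows by substitution; the rest (privacy, the union bound, the Lipschitz constant estimate, and matching constants to the algorithm's parameter settings) is routine.
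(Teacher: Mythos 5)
Your proposal is correct and follows essentially the same route as the paper: privacy from the sensitivity-$2$ bound on the two \TreeMech\ invocations plus basic composition, and utility by feeding the $(\alpha',\beta)$-accuracy of $g_t$ from Lemma~\ref{lem:binmecherror} into Corollary~\ref{cor:noisyprojgrad} with the algorithm's choice of $r$, then substituting $\kappa$ and union-bounding over $t\in[T]$. Your added observations (the $O(T\|\CCC\|)$ Lipschitz estimate that justifies the choice of $r$, and the fact that the gradient error is uniform over $\theta\in\CCC$ so that repeated evaluations cost nothing in privacy or probability) are consistent with, and slightly more explicit than, the paper's own two-line argument.
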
 
\begin{proof}
The $(\eps,\delta)$-differential privacy follows from the above discussed global sensitivity bound.

Fix any $t \in [T]$. Let $\hat{\theta}_t \in \mbox{argmin}_{\theta \in \CCC}\, \sum_{i=1}^t (y_i - \langle \x_i,\theta \rangle)^2$. Combining Lemma~\ref{lem:binmecherror} and Corollary~\ref{cor:noisyprojgrad}, with probability at least $1-r\beta'$
$$\LLL(\theta^\priv_t; \Gamma_t) - \LLL(\hat{\theta}_t; \Gamma_t) = O\left ( \kappa \| \CCC \|^2 (\sqrt{d}+\sqrt{\log(1/\beta')})\right).$$
Replacing $\beta'$ by $\beta/r$, substituting for $\kappa$, and taking a union bound over all $t \in [T]$ gives the claimed result.
\end{proof}

\begin{remark}
For linear regression instances, which typically do not satisfy the strong convexity property, the $\approx \min\{(Td)^{1/3},T\}$ risk bound obtained from Mechanism~\PrivIncERM is substantially worse than the $\approx \min\{\sqrt{d},T\}$ risk bound provided by Algorithm \PrivIncReg.  

The dependence on dimension $d$ is tight in the worst-case; due to $\approx \sqrt{d}$ excess risk lower bounds established by Bassily~\emph{et al.}\ \cite{bassily2014differentially}. 
\end{remark}

\begin{remark}
The techniques developed in this section (based on Tree Mechanism) can be applied to any convex ERM problem whose gradient has a {\em linear} form, in which case we obtain an excess risk bound as in Theorem~\ref{thm:thm2}.  It is an interesting open question to obtain similar bounds for general convex ERM problems in an incremental setting. 
\end{remark}

\section{Private Incremental Linear Regression: Going Beyond Worst-Case} \label{sec:width}
The noise added in Theorem~\ref{thm:thm2} for privacy grows proportionately as $\sqrt{d}$. While this seems unavoidable in the worst-case, we ask whether it is possible to go beyond this worst-case bound under some realistic assumptions. An intuitive idea to overcome this curse of dimensionality is to project (compress) the data to a lower dimension, before the addition of noise. In this section, we use this and other geometric ideas to cope with the high-dependency on dimensionality in Theorem~\ref{thm:thm2}. The resulting bound will depend on the Gaussian width of the input and constraint space which for many interesting problem instances will be smaller than $\sqrt{d}$. We mention some specific instantiations and extensions of our result in Section~\ref{sec:inst} including a scenario when not all inputs are drawn from a domain with a ``small'' Gaussian width. 

Our general approach in this section will be based on this simple principle: reduce the dimensionality of the problem, solve it privately in the lower dimensional space, and then ``lift'' the solution back to the original space. Our lifting procedure is similar to that used by Kasiviswanathan and Jin~\cite{kasiviswanathanicml} in their recent work on bounding excess risk for private ERM in a batch setting. 

Fix $t$ and consider solving the following projected least-squares problem, defined as:\!\footnote{The scaling factor of $\frac{\| \x_i \|}{\| \Phi \x_i \|}$ is for simplicity of analysis only. One could omit it and still obtain the same results using a slightly different analysis. Also without loss of generality, we assume $\x_i \neq 0$ for all $i$.}
\begin{equation}\label{eqn:JLtransform}
\LLL_\proj(\theta;\Gamma_t;\Phi) = \sum_{i=1}^t \left (y_i - \frac{\| \x_i \|}{\| \Phi \x_i \|} \langle \Phi \x_i, \Phi \theta \rangle \right )^2,
\end{equation}
where $\Phi \in \R^{m \times d}$ is a random projection matrix (to be defined later). The loss function, $\LLL_\proj$ is also referred to as {\em compressed least-squares} in the literature~\cite{maillard2009compressed,fard2012compressed,kaban2014new}.

As a first step, we investigate the relationship between $\LLL(\theta;\Gamma_t)$ and $\LLL_\proj(\theta;\Gamma_t;\Phi)$. A fundamental tool in dimensionality reduction, the Johnson-Lindenstrauss (JL) lemma, states that for any set $S \subseteq \R^d$, given  $\gamma > 0$ and $m = \Omega(\log |S|/\gamma^2)$, there exists a map that embeds the set into $\R^m$, distorting all pairwise distances within at most $1 \pm \gamma$ factor. A simple consequence of the JL Lemma is that, for any set of $n$ vectors $\x_1,\dots,\x_n \in \R^d$, $\theta \in \R^d$, $\gamma > 0$, and $\beta > 0$, if $\Phi$ is an $m \times d$ matrix with entries drawn i.i.d.\ from $\NNN(0,1/m)$ with $m = \Theta(\log(n/\beta)/\gamma^2)$, we get, 
\begin{equation} \label{eqn:JLprob}
\Pr\left[| \langle \Phi \x_i, \Phi \theta \rangle - \langle \x_i, \theta \rangle | \geq \gamma \| \x_i \| \| \theta \|~\mbox{\rm for all}~i\in[n]\right] \leq \beta.
\end{equation}
Using standard inequalities such as~\eqref{eqn:JLprob}, establishing the relationship between $\LLL$ and $\LLL_\proj$ is relatively straightforward in a batch setting. However, the incremental setting raises the challenge of dealing with adaptive inputs in JL transformation. The issue being that JL transformation is inherently non-adaptive, in that ``success'' of the JL transformation (properties such as~\eqref{eqn:JLprob}) depends on the fact that the inputs on which it gets
applied are chosen before (independent of) fixing the transformation $\Phi$.


To deal with this kind of adaptive generation of inputs, we use a generalization of JL lemma which yields similar guarantees but with a much smaller (than $\log |S|$) requirement on $m$, if the set $S$ has certain geometric characteristics. For ease of exposition, in the rest of this section, we are going to assume that $\Phi$ is a matrix in $\R^{m \times d}$ with i.i.d.\ entries from $\NNN(0,1/m)$.\footnote{One could also use other (better) constructions of $\Phi$, such as those that create sparse $\Phi$ matrix, using recent results by Bourgain~\emph{et al.}\ \cite{bourgain2015toward} extending Theorem~\ref{thm:gordon} to other distributions.} Gordon~\cite{gordon1988milman} showed that using a Gaussian random matrix, one can embed the set $S$ into a lower-dimensional space $\R^m$, where $m$ roughly scales as the square of Gaussian width of $S$. This result has found several interesting applications in high-dimensional convex geometry, statistics, and compressed sensing~\cite{pfandersampling}.

\begin{theorem}[Gordon~\cite{gordon1988milman}]\label{thm:gordon}
Let $\tilde{\Phi}$ be an $m \times d$ random matrix, whose rows $\phi_1^\top,\dots,\phi_m^\top$ are i.i.d.\ Gaussian random vectors in $\R^d$ chosen according to the standard normal distribution $\NNN(\mathbf{0},\mathbb{I}_d)$. Let $\Phi = \tilde{\Phi}/\sqrt{m}$. Let $S$ be a set of points in $\R^d$. There is a constant $C > 0$ such that for any $0 < \gamma, \beta < 1$,
$$\Pr \left [\sup_{\a \in S} \left | \| \Phi \a \|^2 - \| \a \|^2 \right | \geq  \gamma \| \a \|^2 \right ] \leq \beta,$$ 
provided that $m \geq \frac{C}{\gamma^2} \max \left \{ w(S)^2, \log \left (\frac{1}{\beta} \right \} \right )$.
\end{theorem}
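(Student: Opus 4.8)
By homogeneity we may assume $S\subseteq\mathbb{S}^{d-1}$, the unit sphere in $\R^d$: rescaling each $\a\in S$ to $\a/\|\a\|$ leaves the ratio $|\,\|\Phi\a\|^2-\|\a\|^2\,|/\|\a\|^2$ unchanged, and in the intended application $S$ already lies on the sphere. Writing $A=\tilde{\Phi}$ (so $\Phi=A/\sqrt m$), the claim becomes: with probability at least $1-\beta$, $\sup_{\u\in S}\big|\,\|A\u\|^2/m-1\,\big|\le\gamma$. For a unit vector $\u$, putting $s=\|A\u\|/\sqrt m$ we have $|s^2-1|=|s-1|\,(s+1)$, so it suffices to prove that, with probability at least $1-\beta$,
\begin{align*}
\sup_{\u\in S}\big|\,\|A\u\|-\sqrt m\,\big|\;\le\;c\big(w(S)+\sqrt{\log(4/\beta)}\big)
\end{align*}
for an absolute constant $c$; indeed, under the hypothesis $m\ge (C/\gamma^2)\max\{w(S)^2,\log(1/\beta)\}$ with $C$ large this right-hand side is at most $(\gamma/3)\sqrt m$, which forces $s\in[1-\gamma/3,1+\gamma/3]$ and hence $|s^2-1|\le(\gamma/3)(2+\gamma/3)<\gamma$ since $\gamma<1$.

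\noindent\textbf{Expectation bound via Gaussian comparison.} Write $\|A\u\|=\sup_{\v\in\mathbb{S}^{m-1}}\langle A\u,\v\rangle$, so that $\sup_{\u\in S}\|A\u\|$ and $\inf_{\u\in S}\|A\u\|$ are extremes of the centered Gaussian field $X_{\u,\v}=\langle A\u,\v\rangle$ over $(\u,\v)\in S\times\mathbb{S}^{m-1}$. Introduce the ``additive'' comparison field $Y_{\u,\v}=\langle\g,\u\rangle+\langle\mathbf{h},\v\rangle$ with independent $\g\sim\NNN(\mathbf{0},\mathbb{I}_d)$ and $\mathbf{h}\sim\NNN(\mathbf{0},\mathbb{I}_m)$. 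A one-line computation of increment variances gives, for all unit vectors,
\begin{align*}
\E\big(Y_{\u,\v}-Y_{\u',\v'}\big)^2-\E\big(X_{\u,\v}-X_{\u',\v'}\big)^2\;=\;2\,(1-\langle\u,\u'\rangle)(1-\langle\v,\v'\rangle),
\end{align*}
which is $\ge 0$ always and $=0$ whenever $\u=\u'$. The Sudakov--Fernique inequality (needing only the first of these facts) then yields $\E\sup_{\u\in S}\|A\u\|=\E\sup_{\u,\v}X_{\u,\v}\le\E\sup_{\u,\v}Y_{\u,\v}=w(S)+\E\|\mathbf{h}\|\le w(S)+\sqrt m$, while Gordon's min--max comparison inequality (whose within-$\u$ and cross-$\u$ hypotheses are exactly the vanishing-on-the-diagonal and the non-negativity above) gives $\E\inf_{\u\in S}\|A\u\|=\E\inf_{\u}\sup_{\v}X_{\u,\v}\ge\E\inf_{\u}\sup_{\v}Y_{\u,\v}=-w(S)+\E\|\mathbf{h}\|\ge\sqrt m-w(S)-1$, using $\E\inf_{\u\in S}\langle\g,\u\rangle=-w(S)$ and $\E\|\mathbf{h}\|\ge\sqrt{m-1}\ge\sqrt m-1$. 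Both comparisons are first applied on finite nets of $S$ and $\mathbb{S}^{m-1}$ and then passed to the limit by path-continuity of these fields on compact sets.

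\noindent\textbf{Concentration and conclusion.} The maps $A\mapsto\sup_{\u\in S}\|A\u\|$ and $A\mapsto\inf_{\u\in S}\|A\u\|$ are $1$-Lipschitz in the Frobenius norm, since $\big|\,\|A\u\|-\|A'\u\|\,\big|\le\|(A-A')\u\|\le\|A-A'\|_F$ for every unit $\u$. By Gaussian concentration of measure each deviates from its expectation by more than $\sqrt{2\log(4/\beta)}$ with probability at most $\beta/2$; on the intersection of the two good events (probability at least $1-\beta$) we obtain, simultaneously for all $\u\in S$,
\begin{align*}
\sqrt m-w(S)-1-\sqrt{2\log(4/\beta)}\;\le\;\|A\u\|\;\le\;\sqrt m+w(S)+\sqrt{2\log(4/\beta)},
\end{align*}
hence $\sup_{\u\in S}\big|\,\|A\u\|-\sqrt m\,\big|\le w(S)+1+\sqrt{2\log(4/\beta)}$. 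This is the bound of the first paragraph (with room to spare; choosing $C$ large also absorbs the additive $1$ in the degenerate case where $w(S)$ and $\log(1/\beta)$ are both tiny, or one simply adds a harmless constant inside the $\max$). Dividing by $\sqrt m$, invoking the hypothesis on $m$, and applying the quadratic-to-linear reduction of the first paragraph completes the proof.

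\noindent\textbf{Main obstacle.} The substance is the comparison step: expressing $\|A\u\|$ as a supremum of a bilinear Gaussian field and choosing the additive proxy $Y_{\u,\v}$ so that \emph{both} the Sudakov--Fernique bound (for $\sup_\u\sup_\v$) and Gordon's bound (for $\inf_\u\sup_\v$) collapse onto the single elementary inequality $(1-\langle\u,\u'\rangle)(1-\langle\v,\v'\rangle)\ge 0$ for unit vectors. Everything else---the homogeneity reduction, the passage from $\|A\u\|^2$ to $\|A\u\|$, the Lipschitz concentration, and the finite-net approximation---is routine.
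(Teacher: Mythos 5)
The paper does not prove this statement at all: it is imported verbatim as Gordon's theorem with a citation to \cite{gordon1988milman}, so there is no internal proof to compare against and your write-up supplies an argument where the paper has none. What you give is the standard modern derivation (as in Vershynin's treatments of the matrix deviation/escape-through-a-mesh results): write $\|A\u\|=\sup_{\v\in\S^{m-1}}\langle A\u,\v\rangle$, compare the bilinear field $X_{\u,\v}$ with the additive field $Y_{\u,\v}=\langle\g,\u\rangle+\langle\mathbf{h},\v\rangle$ via the identity $\E(Y-Y')^2-\E(X-X')^2=2(1-\langle\u,\u'\rangle)(1-\langle\v,\v'\rangle)$, apply Sudakov--Fernique for the sup and Gordon's min--max inequality for the inf, and finish with Gaussian concentration of the two $1$-Lipschitz functionals and the elementary passage from $|\,\|A\u\|-\sqrt m\,|\le(\gamma/3)\sqrt m$ to $|\,\|A\u\|^2/m-1\,|\le\gamma$. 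I checked the increment computation, the directions of both comparison inequalities, the $1$-Lipschitz bounds, and the constants in the concentration step; all are correct. Two caveats, both of which you already flag but which deserve emphasis because they are really imprecisions in the theorem \emph{as stated in the paper}: (i) the homogeneity reduction replaces $w(S)$ by the Gaussian width of the radially normalized set $\{\a/\|\a\|:\a\in S\}$, which can be much larger than $w(S)$ (scale $S$ toward the origin and the literal statement with $w(S)$ of the unnormalized set fails), so the result should be read as a statement about subsets of the sphere --- which is also the honest reading needed when the paper later applies it and its corollary to $\XXX\cup\CCC$; and (ii) the additive absolute constants ($+1$ from $\E\|\mathbf{h}\|\ge\sqrt m-1$ and the $\log 4$ inside the tail) are only absorbed by enlarging $C$ outside a degenerate regime where both $w(S)$ and $\log(1/\beta)$ are $o(1)$; adding a fixed constant inside the $\max$, as you suggest, repairs this. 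Neither point affects the way the theorem is used in the paper.
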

Note that the $w(S)$ is defined for all sets, not just convex sets, a fact that we use below as the input domain $\XXX$ may not be convex.  As a simple corollary to the above theorem it also follows that,
\begin{corollary} \label{cor:gordon}
Under the setting of Theorem~\ref{thm:gordon}, there exists a constant $C' > 0$ such that for any $0 < \gamma, \beta < 1$,
$$\Pr \left [\sup_{\a,\b \in S} \left | \langle \Phi \a, \Phi \b \rangle - \langle \a,\b \rangle  \right | \geq \gamma \| \a \| \| \b \| \right ] \leq \beta,$$ 
provided that $m \geq \frac{C'}{\gamma^2} \max \left \{ w(S)^2, \log \left (\frac{1}{\beta} \right \} \right )$.
\end{corollary}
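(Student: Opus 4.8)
The plan is to derive the inner-product statement from the norm statement of Theorem~\ref{thm:gordon} by a polarization argument. First I would reduce to unit vectors: both $|\langle \Phi \a, \Phi \b\rangle - \langle \a, \b\rangle|$ and $\|\a\|\,\|\b\|$ are homogeneous of degree one in each of $\a$ and $\b$, so after discarding the trivial case $\a = \mathbf{0}$ or $\b = \mathbf{0}$ (both sides vanish) it suffices to prove $|\langle \Phi \hat{\a}, \Phi \hat{\b}\rangle - \langle \hat{\a}, \hat{\b}\rangle| \le \gamma$ for all $\hat{\a}, \hat{\b}$ in $\hat S := \{\a/\|\a\| : \a \in S,\ \a \neq \mathbf{0}\}$. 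Next, use the polarization identity $4\langle \hat{\a}, \hat{\b}\rangle = \|\hat{\a}+\hat{\b}\|^2 - \|\hat{\a}-\hat{\b}\|^2$, together with the same identity for $\Phi\hat{\a}, \Phi\hat{\b}$ (valid because $\Phi$ is linear, so $\Phi\hat{\a}\pm\Phi\hat{\b} = \Phi(\hat{\a}\pm\hat{\b})$). This expresses $\langle \Phi\hat{\a}, \Phi\hat{\b}\rangle - \langle \hat{\a}, \hat{\b}\rangle$ as $\tfrac14$ times the difference of the two quantities $\|\Phi\c\|^2 - \|\c\|^2$ with $\c = \hat{\a}+\hat{\b}$ and $\c = \hat{\a}-\hat{\b}$.

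I would then apply Theorem~\ref{thm:gordon} to the single set $\bar S := (\hat S + \hat S) \cup (\hat S - \hat S)$ with distortion parameter $\gamma$ and failure probability $\beta$. On the resulting event, $|\,\|\Phi\c\|^2 - \|\c\|^2\,| \le \gamma\|\c\|^2$ for every $\c \in \bar S$; combining the two instances with the parallelogram law $\|\hat{\a}+\hat{\b}\|^2 + \|\hat{\a}-\hat{\b}\|^2 = 2(\|\hat{\a}\|^2 + \|\hat{\b}\|^2) = 4$ gives $|\langle \Phi\hat{\a}, \Phi\hat{\b}\rangle - \langle \hat{\a}, \hat{\b}\rangle| \le \tfrac14(\gamma\|\hat{\a}+\hat{\b}\|^2 + \gamma\|\hat{\a}-\hat{\b}\|^2) = \gamma$, as required. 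Undoing the homogeneity reduction recovers the bound $\gamma\|\a\|\,\|\b\|$ for general $\a, \b \in S$. It remains to verify that the condition on $m$ coming from Theorem~\ref{thm:gordon} applied to $\bar S$, namely $m \ge \tfrac{C}{\gamma^2}\max\{w(\bar S)^2, \log(1/\beta)\}$, is implied by the stated condition with $w(S)$; for this I would invoke standard Gaussian-width facts — subadditivity under Minkowski sums, $w(-A) = w(A)$ by symmetry of the Gaussian, and the easy bound for $w$ of a union — to conclude $w(\bar S) = O(w(\hat S))$, and then that radial projection onto the sphere does not increase the Gaussian width by more than a constant factor for the sets of interest, so $w(\bar S) = O(w(S))$. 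Absorbing all constants into $C'$ and the single union bound into $\beta$ finishes the proof.

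The main obstacle I anticipate is precisely this last width bookkeeping — controlling $w(\hat S)$ (hence $w(\bar S)$) in terms of $w(S)$ — since radial normalization can genuinely inflate the Gaussian width for pathological $S$ with points of very small norm, so the clean statement implicitly relies on $S$ being bounded away from the origin or otherwise well-structured (as it is in every use made of the corollary, where $S$ is built from the input domain $\XXX$ and the constraint set $\CCC$). If one prefers to avoid normalization altogether, the same polarization applied directly to $(S+S)\cup(S-S)$ yields the slightly weaker conclusion $|\langle \Phi\a, \Phi\b\rangle - \langle \a, \b\rangle| \le \gamma(\|\a\|^2 + \|\b\|^2)$, which for sets contained in the unit ball — the only regime in which the corollary is applied here — is equivalent up to constants.
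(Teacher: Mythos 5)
Your polarization argument is the standard---and surely the intended---derivation; the paper offers no proof of Corollary~\ref{cor:gordon} at all, asserting it as an immediate consequence of Theorem~\ref{thm:gordon}, so there is no written argument to compare against. The reduction to unit vectors, the identity $\langle \Phi\a,\Phi\b\rangle-\langle\a,\b\rangle=\tfrac14\bigl[(\|\Phi(\a+\b)\|^2-\|\a+\b\|^2)-(\|\Phi(\a-\b)\|^2-\|\a-\b\|^2)\bigr]$, the single application of Theorem~\ref{thm:gordon} to $(\hat S+\hat S)\cup(\hat S-\hat S)$, and the parallelogram law are all correct, and the width bookkeeping ($w(A+B)\le w(A)+w(B)$, $w(-A)=w(A)$, and $w(A\cup B)\le w(A)+w(B)+O(1)$ for bounded sets, the additive constant being absorbed by the $\max\{\cdot,\log(1/\beta)\}$ term) goes through with all constants folded into $C'$.

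The caveat you flag is not a defect of your proof but of the statement you were asked to prove, and you have diagnosed it correctly. Gordon's theorem holds for subsets of the unit sphere (as the paper's own prose in the introduction says), and both Theorem~\ref{thm:gordon} and the corollary, as literally written for arbitrary $S\subseteq\R^d$ with the width of the \emph{unnormalized} set, can fail: take $S$ to be the set of all vectors of Euclidean norm $\epsilon$ for tiny $\epsilon$, so that $w(S)=\epsilon\cdot\Theta(\sqrt d)$ is arbitrarily small and the hypothesis on $m$ degenerates to $m\gtrsim\log(1/\beta)/\gamma^2$, while the scale-invariant conclusion still demands a near-isometry on the entire sphere, which requires $m\gtrsim d$. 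This matters even for the paper's own applications, since $S=\XXX\cup\CCC$ there contains points of small norm (e.g., $\CCC=cB_1^d$ contains the origin), so the multiplicative guarantee $\gamma\|\a\|\|\b\|$ of~\eqref{eqn:xthetprod} is not actually available; what Lemmas~\ref{lem:JLfirst} and~\ref{lem:JLfifth} use, however, is only the additive error $\gamma\|\hat\theta_t\|\le\gamma\|\CCC\|$ together with $\|\x_i\|\le1$, and that is exactly what your unnormalized fallback $|\langle\Phi\a,\Phi\b\rangle-\langle\a,\b\rangle|\le\tfrac{\gamma}{2}(\|\a\|^2+\|\b\|^2)$ delivers, applied to $(\XXX+\CCC)\cup(\XXX-\CCC)$ with width $O(w(\XXX)+w(\CCC))$ and no normalization anywhere. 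One small correction to your closing remark: $\gamma(\|\a\|^2+\|\b\|^2)$ is \emph{not} equivalent to $\gamma\|\a\|\,\|\b\|$ up to constants inside the unit ball (let $\|\b\|\to0$ with $\|\a\|=1$); it is merely sufficient for every use the paper makes of the corollary, which is the claim you actually need.
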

Applying the above corollary to the set of vectors in $\XXX \cup \CCC$, and by noting that $w(\XXX \cup \CCC) \leq w(\XXX) + w(\CCC)$, gives that if $m = \Theta((1/\gamma^2)\max\{(w(\XXX)+w(\CCC))^2,\log(1/\beta)\})$, then with probability at least $1-\beta$,
\begin{align} \label{eqn:xthetprod}
\Pr \left [\sup_{\x \in \XXX,\theta \in \CCC} \left | \langle \Phi \x, \Phi \theta \rangle - \langle \x,\theta \rangle  \right | \geq \gamma \| \x \| \| \theta \| \right ] \leq \beta.
\end{align}

\subsection{Algorithm for the Streaming Setting} \label{sec:algstream}
We now present a mechanism (Algorithm~\ProjPrivIncReg) for private incremental linear regression based on minimizing the projected least-squares objective~\eqref{eqn:JLtransform}, under differential privacy. The idea is to again construct a private gradient function $g_t$, but of function $\LLL_\proj$ (instead of $\LLL$ as done in Algorithm~\PrivIncReg). 

Let $X_t$ be a matrix with rows $\x_1^\top,\dots,\x_t^\top$, and $\tilde{X}_t \in \R^{n \times d}$ be a matrix with rows $\tilde{\x}_1^\top,\dots,\tilde{\x}_t^\top$. As before, let $\y_t$ be the vector $(y_1,\dots,y_t)$. Under these notation, $\LLL_\proj(\theta;\Gamma_t;\Phi)$ from~\eqref{eqn:JLtransform} can be re-expressed as:
$$\LLL_\proj(\theta;\Gamma_t;\Phi) = \| \y_t - \tilde{X}_t \Phi^\top\Phi \theta\|^2.$$
The gradient of $\LLL_\proj$ with respect to $\Phi \theta$ equals:
\begin{align*}
& \nabla_{(\Phi\theta)} \LLL_\proj(\theta;\Gamma_t;\Phi)  = \frac{\partial \; \| \y_t - (\tilde{X}_t \Phi^\top) (\Phi \theta)\|^2}{\partial \; (\Phi \theta)} =  2((\tilde{X}_t\Phi^\top)^\top (\tilde{X}_t\Phi^\top)) (\Phi \theta) - 2(\tilde{X}_t\Phi^\top)^\top \y_t.
\end{align*}
Note that $\nabla_{(\Phi\theta)} \LLL_\proj \in \R^m$.  

Let $\Phi\CCC = \{ \vartheta \in \Phi\theta \,: \, \theta \in \CCC\}$. Note for a convex $\CCC$, $\Phi\CCC \subset \R^m$ is also convex. In Algorithm~\ProjPrivIncReg, $\vartheta^\priv_t$ is a private estimate of $\hat{\vartheta}_t$, where
$$\hat{\vartheta}_t \in  \mbox{argmin}_{\vartheta \in \Phi\CCC}\,  \sum_{i=1}^t \left (y_i - \frac{\| \x_i \|}{\| \Phi \x_i \|} \langle \Phi \x_i,\vartheta \rangle \right )^2.$$

\begin{algorithm}[!t]
\DontPrintSemicolon
\caption{\ProjPrivIncReg$(\eps,\delta)$}
\KwIn{A stream $\Gamma = (\x_1,y_1),\dots,(\x_T,y_T)$, where each $(\x_t,y_t)$ in $\Gamma$ is from the domain $\XXX \times \YYY$ where $\XXX \subset \R^d$ with $\| \XXX \| \leq 1$ and  $\YYY \subset \R$ with $\| \YYY \| \leq 1$}
\KwOut{$\theta^\priv_t$ a differentially private estimate of $\hat{\theta}_t \in \mbox{argmin}_{\theta \in \CCC}\, \sum_{i=1}^t (y_i - \langle \x_i,\theta \rangle)^2$ at every timestep $t \in [T]$}
Set $\eps' \leftarrow \frac{\eps}{2}$, \,\, $\delta' \leftarrow \frac{\delta}{2}$, \,\, $\kappa \leftarrow \frac{\log^{3/2}(T)  \sqrt{\log \left ( \frac{1}{\delta'} \right ) }}{\eps'}$,\,\, $\alpha' \leftarrow O(\kappa \| \CCC \| \sqrt{m})$, \,\, $r \leftarrow \Theta \left ( \left (1+ \frac{T\|\CCC\|}{\alpha'} \right )^2 \right )$,\,\, $\gamma \leftarrow \frac{(w(\XXX)+w(\CCC))^{1/3}}{T^{1/3}}$, and $m \leftarrow \Theta \left (\frac{1}{\gamma^2}\max\{(w(\XXX)+w(\CCC))^2,\log(\frac{T}{\beta})\} \right )$\;
Let $\Phi \leftarrow m \times d$ random matrix with entries drawn i.i.d.\ from $\NNN(0,1/m)$\;
\For{all $t \in [T]$}{
Let $\tilde{\x}_t \leftarrow \frac{\| \x_t \|}{\| \Phi \x_t \|} \x_t$\;
$\q_t \leftarrow $ output of \TreeMech$(\eps',\delta',2)$ at $t$ when invoked on the stream $\Phi\tilde{\x}_1 y_1,\dots,\Phi\tilde{\x}_T y_T$ \label{step:first}\;
$Q_t \leftarrow $ output of \TreeMech$(\eps',\delta',2)$ at $t$ when invoked on the stream $(\Phi\tilde{\x}_1) (\Phi\tilde{\x}_1)^\top,\dots,(\Phi\tilde{\x}_T) (\Phi\tilde{\x}_T)^\top$ which can be viewed as $m^2$-dimensional vectors (the outputs are converted back to form $m \times m$ matrices) \label{step:second}\;
Define a private gradient function, $g_t : \Phi\CCC \rightarrow \R^m$ as:
\vspace*{-1ex}
$$g_t(\vartheta) =  2(Q_t \vartheta - \q_t)$$
\vspace*{-2ex}
\;
$\vartheta^\priv_t \leftarrow \NoisyProjGrad(\Phi\CCC,g_t,r)$ (described in Appendix~\ref{app:noisyproj})\;
$\theta^\priv_t \leftarrow \mbox{argmin}_{\theta \in \R^d}\, \| \theta \|_\CCC$ subject to $\Phi\theta = \vartheta^\priv_t$  (can be solved using any convex optimization technique)\label{step:lp} \;
Return $\theta^\priv_t$
}
\end{algorithm}

Algorithm~\ProjPrivIncReg only requires $O(m^2\log T+ \log d)$ space, therefore is slightly more memory efficient than Algorithm~\PrivIncReg (as $m \leq d$). The time complexity can be analyzed as for Algorithm~\PrivIncReg.  

Algorithm~\ProjPrivIncReg is $(\eps,\delta)$-differentially private with respect to a single datapoint change in the stream $\Gamma$. The $L_2$-sensitivity for both invocations of  Algorithm~\TreeMech is $2$. In Step~\ref{step:second}, this holds because
\begin{align*}
\max_{\x_a,\x_b \in \XXX}\, \| (\Phi \tilde{\x}_a)(\Phi \tilde{\x}_a)^\top - (\Phi \tilde{\x}_b)(\Phi \tilde{\x}_b)^\top \|_F & \leq \|(\Phi \tilde{\x}_a)(\Phi \tilde{\x}_a)^\top \|_F  + \| (\Phi \tilde{\x}_b)(\Phi \tilde{\x}_b)^\top \|_F \\
& = \|\Phi \tilde{\x}_a \|^2 + \|\Phi \tilde{\x}_b \|^2 = \| \x_a \|^2 + \| \x_b \|^2 = 2,
\end{align*}
the second to last equality follows as, for every $\x \in \XXX$, $\| \Phi \tilde{\x} \| =  \| \x \|$ (by construction). Since $\Phi \x_i$'s are in the projected subspace ($\R^m$), the noise needed for differential privacy (in Steps~\ref{step:first} and~\ref{step:second}) of Algorithm~\ProjPrivIncReg roughly scales as $\sqrt{m}$. 


In Step~\ref{step:lp} of Algorithm~\ProjPrivIncReg, we lift $\vartheta^\priv_t$ into the original $d$-dimensional constraint space $\CCC$. Since $\vartheta^\priv_t \in \Phi\CCC$, we know that there exists a $\theta^\true_t \in \CCC$, such that  $\Phi\theta^\true_t = \vartheta^\priv_t$. Then the goal is to estimate $\theta^\true_t$ from $\Phi\theta^\true_t$. Again geometry of $\CCC$ (Gaussian width) plays an important role, as it controls the diameter of high-dimensional random sections of $\CCC$ (referred to as $M^\star$ bound~\cite{ledoux2013probability,vershynin2014estimation}). We refer the reader to the excellent tutorial by Vershynin~\cite{vershynin2014estimation} for more details.

We define Minkowski functional, as commonly used in geometric functional analysis and convex analysis.
\begin{definition} [Minkowski functional]
For any vector $\theta \in \R^d$, the Minkowski functional of $\CCC$ is the non-negative number $\| \theta \|_\CCC$ defined by the rule: $\| \theta \|_\CCC = \inf \{\rho \in \R \,:\, \theta \in \rho \CCC \}$.
\end{definition}
For the typical situation in ERM problems, where $\CCC$ is a symmetric convex body, $\| \cdot \|_{\CCC}$ defines a norm. The optimization problem solved in Step~\ref{step:lp} of Algorithm~\ProjPrivIncReg is convex if $\CCC$ is convex, and hence can be efficiently solved. The existence of $\theta^\priv_t$ follows from the following theorem.

\begin{theorem}~\cite{vershynin2014estimation} \label{thm:exist}
Let $\Phi$ be an $m \times d$ matrix, whose rows $\phi_1^\top,\dots,\phi_m^\top$ are i.i.d.\ Gaussian random vectors in $\R^d$ chosen according to the standard normal distribution $\NNN(0,\mathbb{I}_d)$. Let $\CCC$ be a convex set. Given $\v = \Phi \u$ and $\Phi$, let $\hat{\u}$ be the solution to the following convex program: $\min_{\u'\in \R^d}\, \| \u' \|_{\CCC}$ subject to $\Phi \u' = \v$. Then for any $\beta > 0$, with probability at least $1-\beta$, 
$$\sup_{\u: \v = \Phi \u} \| \u - \hat{\u} \| = O\left ( \frac{w(\CCC)}{\sqrt{m}} + \frac{\| \CCC \| \sqrt{\log(1/\beta)}}{\sqrt{m}} \right).$$ 
\end{theorem}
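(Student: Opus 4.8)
This statement is the classical $M^\star$ bound of asymptotic geometric analysis, applied to the body $\CCC - \CCC$ and specialized to a Gaussian matrix $\Phi$; the plan is to reduce the claim to that bound and then recall a short proof of the bound. First I would record the structural observation that drives the reduction: by definition of $\hat{\u}$ we have $\Phi\hat{\u} = \v = \Phi\u$ and $\| \hat{\u} \|_\CCC \le \| \u \|_\CCC$, and since the supremum in the statement ranges over $\u \in \CCC$ (so $\| \u \|_\CCC \le 1$), this forces $\hat{\u} \in \CCC$; hence the error vector $\u - \hat{\u}$ lies in $(\CCC - \CCC) \cap \ker\Phi$. Writing $K := \CCC - \CCC$, a symmetric convex body with $w(K) = 2w(\CCC)$ and $\| K \| \le 2\| \CCC \|$, it then suffices to show that, with probability at least $1-\beta$ over $\Phi$,
\[
\sup_{\z \in K \cap \ker\Phi} \| \z \| = O\!\left( \frac{w(K) + \| K \|\sqrt{\log(1/\beta)}}{\sqrt m} \right),
\]
since substituting the bounds on $w(K)$ and $\| K \|$, and using $\| \u - \hat{\u} \| \le \sup_{\z \in K \cap \ker\Phi}\| \z \|$, yields the theorem. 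I would emphasize here that, unlike bounds derived from the Johnson--Lindenstrauss lemma, no adaptivity subtlety arises: the quantity being controlled is a supremum over the \emph{entire} fixed set $K$, so a single high-probability event over $\Phi$ suffices, and the points of $K$ need not be chosen before $\Phi$.

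For the $M^\star$ bound itself I would argue as follows. Fix a threshold $\rho > 0$. Since $K$ is convex and contains $\mathbf 0$, it is star-shaped about the origin, so any $\z \in K \cap \ker\Phi$ with $\| \z \| > \rho$ could be rescaled to a point of $K \cap \ker\Phi$ of norm exactly $\rho$; hence it is enough to prove $\inf_{\z \in K,\, \| \z \| = \rho}\| \Phi\z \| > 0$. Setting $\tilde\Phi = \sqrt m\,\Phi$ (i.i.d.\ $\NNN(0,1)$ entries), the map $\tilde\Phi \mapsto \inf_{\z \in K,\, \| \z \| = \rho}\| \tilde\Phi\z \|$ is $\rho$-Lipschitz in the Frobenius norm; Gordon's Gaussian min--max comparison inequality \cite{gordon1988milman}, together with monotonicity of Gaussian width, lower-bounds its expectation by $\rho(\sqrt m - 1) - w(K)$, and Gaussian concentration for Lipschitz functions controls its lower tail at scale $\rho\sqrt{\log(1/\beta)}$. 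Choosing $\rho$ to be a suitable constant multiple of $w(K)/\sqrt m$ then makes this infimum strictly positive with probability at least $1-\beta$ whenever $m = \Omega(\log(1/\beta))$; in the remaining regime $m = O(\log(1/\beta))$ the asserted bound already exceeds the trivial bound $\| K \|$, so nothing needs to be proved. Combining the two cases gives the $M^\star$ bound. (Alternatively, one may quote this conclusion directly from Vershynin's tutorial \cite{vershynin2014estimation}.)

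The one non-elementary step --- and where I expect the real work --- is the $M^\star$ bound, specifically the invocation of Gordon's comparison inequality to certify that a random kernel of codimension $m$ meets $K$ only near the origin; some care is also needed in the case split on $m$, so that the additive $\| \CCC \|\sqrt{\log(1/\beta)}/\sqrt m$ term (rather than the $w(\CCC)/\sqrt m$ term) is what governs the small-$m$ regime. By comparison, the reduction to a diameter bound on the random section $K \cap \ker\Phi$ is routine, using only the minimality of $\hat{\u}$ and star-shapedness of $K$, and the auxiliary facts --- monotonicity and sub-additivity of Gaussian width, Gaussian Lipschitz concentration, and $\E\| \g \| \ge \sqrt m - 1$ for $\g \sim \NNN(\mathbf 0, \mathbb{I}_m)$ --- are all standard.
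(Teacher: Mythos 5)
Your proposal is correct and matches the source: the paper does not actually prove Theorem~\ref{thm:exist} but imports it from Vershynin's tutorial \cite{vershynin2014estimation}, where the argument is exactly the one you give --- minimality of $\hat{\u}$ under $\|\cdot\|_{\CCC}$ places the error $\u-\hat{\u}$ in the random section $(\CCC-\CCC)\cap\ker\Phi$, whose diameter is controlled by the $M^\star$ bound, proved in the Gaussian case via Gordon's escape-through-a-mesh theorem \cite{gordon1988milman} plus Gaussian Lipschitz concentration. The only loose end is the one you already flag yourself: in the regime where Gordon's estimate degenerates ($m$ small relative to $1+\log(1/\beta)$) the second term alone does not dominate the trivial diameter bound when $\log(1/\beta)$ is itself small, so one should also invoke $w(\CCC-\CCC)\gtrsim\|\CCC-\CCC\|$ for symmetric bodies to let the first term absorb that case.
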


The next thing to be verified is that $\theta^\priv_t$ generated by Algorithm~\ProjPrivIncReg is in $\CCC$. This is simple as by definition of Minkowski functional, as any closed set $\CCC = \{ \theta \in \R^d \,: \, \| \theta \|_{\CCC} \leq 1\}$. Hence, $\|\theta^\true_t\|_{\CCC} \leq 1$. By choice of $\theta^\priv_t$ in Step~\ref{step:lp}, ensures that $\| \theta^\priv_t\|_{\CCC} \leq \|\theta^\true_t\|_{\CCC} \leq 1$, which guarantees that $\theta^\priv_t \in \CCC$. Finally, note that the lifting is a post-processing operation on a differentially private output $\vartheta^\priv_t$, and hence does not affect the differential privacy guarantee.


\smallskip
\noindent\textbf{Utility Analysis of Algorithm~\ProjPrivIncReg.} In Lemma~\ref{lem:JLzeroth}, by using the fact that the Gaussian noise for privacy (in the \emph{Tree Mechanism}) is added on a lower dimensional ($m$) instance, we show that the difference between $\LLL_\proj(\theta^\priv_t;\Gamma_t;\Phi)$ and $\LLL_\proj(\hat{\theta}_t;\Gamma_t;\Phi)$ (the minimum empirical risk) roughly scales as $\sqrt{m}$, for sufficiently large $m$. The Lipschitz constant of the function $\LLL_\proj(\theta;\Gamma_t;\Phi)$ is $O(\| \Phi \CCC\|)$, which by Theorem~\ref{thm:gordon}, with probability at least $1-\beta$ is $O(\| \CCC \|)$, when 
$$m = \Theta((1/\gamma^2)\max\{(w(\XXX)+w(\CCC))^2,\log(T/\beta)\}).$$ Let $\EEE_0$ be the event that the above Lipschitz bound holds. 

\begin{lemma} \label{lem:JLzeroth}
For any $\beta > 0$, with probability at least $1 - \beta$, for each $t \in [T]$, $\theta^\priv_t$ generated by Algorithm~\ProjPrivIncReg satisfies:
\begin{align*}
\LLL_\proj(\theta^\priv_t;\Gamma_t;\Phi) - \LLL_\proj(\hat{\theta}_t;\Gamma_t;\Phi) = O \left (\frac{\sqrt{m} \log^{3/2} T \sqrt{\log(1/\delta)} \| \CCC \|^2}{\eps} \right )
\end{align*}
where $\hat{\theta}_t \in \mbox{argmin}_{\theta \in \CCC}\, \sum_{i=1}^t (y_i - \langle \x_i,\theta \rangle)^2$.
\end{lemma}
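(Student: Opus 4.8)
The plan is to reduce this to the excess-risk analysis already carried out for Algorithm~\PrivIncReg in Theorem~\ref{thm:thm2}, but run in the $m$-dimensional projected space instead of in $\R^d$. The first step is to observe that $\LLL_\proj(\theta;\Gamma_t;\Phi)$ from~\eqref{eqn:JLtransform} depends on $\theta$ only through the image $\vartheta=\Phi\theta$: with $\tilde{\x}_i=\frac{\|\x_i\|}{\|\Phi\x_i\|}\x_i$ we have $\frac{\|\x_i\|}{\|\Phi\x_i\|}\langle\Phi\x_i,\Phi\theta\rangle=\langle\Phi\tilde{\x}_i,\vartheta\rangle$, so $\LLL_\proj(\theta;\Gamma_t;\Phi)=F_t(\Phi\theta)$ where $F_t(\vartheta):=\sum_{i=1}^t(y_i-\langle\Phi\tilde{\x}_i,\vartheta\rangle)^2$ is an ordinary least-squares objective on the convex set $\Phi\CCC\subset\R^m$. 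Since $\theta^\priv_t$ is chosen in Step~\ref{step:lp} so that $\Phi\theta^\priv_t=\vartheta^\priv_t$, and since $\hat{\theta}_t\in\CCC$ gives $\Phi\hat{\theta}_t\in\Phi\CCC$, the quantity to be bounded equals $F_t(\vartheta^\priv_t)-F_t(\Phi\hat{\theta}_t)\le F_t(\vartheta^\priv_t)-\min_{\vartheta\in\Phi\CCC}F_t(\vartheta)$, i.e.\ it is at most the excess empirical risk of $\vartheta^\priv_t$ for the $m$-dimensional convex problem $(F_t,\Phi\CCC)$.

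Next I would show that $g_t(\vartheta)=2(Q_t\vartheta-\q_t)$ is an $(\alpha',\beta')$-accurate gradient function for $F_t$, exactly mirroring Lemma~\ref{lem:binmecherror} with $d$ replaced by $m$. A direct computation gives $\nabla F_t(\vartheta)=2\big(\sum_{i=1}^t(\Phi\tilde{\x}_i)(\Phi\tilde{\x}_i)^\top\vartheta-\sum_{i=1}^t\Phi\tilde{\x}_i y_i\big)$, which is precisely the noiseless version of $g_t$. Because $\|\Phi\tilde{\x}_i\|=\|\x_i\|\le 1$ by construction and $|y_i|\le 1$, the two input streams to \TreeMech in Steps~\ref{step:first}--\ref{step:second} lie in the unit balls of $\R^m$ and $\R^{m^2}$ respectively, so Proposition~\ref{prop:counter} gives that with probability at least $1-\beta'$ both $\|\q_t-\sum_i\Phi\tilde{\x}_i y_i\|$ and $\|Q_t-\sum_i(\Phi\tilde{\x}_i)(\Phi\tilde{\x}_i)^\top\|$ are $O(\kappa(\sqrt{m}+\sqrt{\log(1/\beta')}))$; combining with the triangle inequality and $\|\vartheta\|\le\|\Phi\CCC\|$ yields $\|g_t(\vartheta)-\nabla F_t(\vartheta)\|=O(\kappa\|\Phi\CCC\|(\sqrt{m}+\sqrt{\log(1/\beta')}))$. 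On the event $\EEE_0$, where Theorem~\ref{thm:gordon} guarantees $\|\Phi\CCC\|=O(\|\CCC\|)$, this is $O(\alpha')$ for the algorithm's choice $\alpha'=O(\kappa\|\CCC\|\sqrt{m})$.

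Finally I would feed this into the \NoisyProjGrad guarantee. On $\EEE_0$, $F_t$ is convex and $O(\|\Phi\CCC\|)=O(\|\CCC\|)$-Lipschitz on $\Phi\CCC$, and $\Phi\CCC$ has diameter $O(\|\CCC\|)$; with $r=\Theta((1+T\|\CCC\|/\alpha')^2)$ iterations and the $\alpha'$-accurate oracle $g_t$, Corollary~\ref{cor:noisyprojgrad} gives, arguing exactly as in the proof of Theorem~\ref{thm:thm2}, that with probability at least $1-r\beta'$ we have $F_t(\vartheta^\priv_t)-\min_{\vartheta\in\Phi\CCC}F_t(\vartheta)=O(\kappa\|\CCC\|^2(\sqrt{m}+\sqrt{\log(1/\beta')}))$ — the choices of $\alpha'$ and $r$ being exactly what balances the gradient-inaccuracy term against the optimization-error term. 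Substituting $\kappa=\log^{3/2}(T)\sqrt{\log(1/\delta')}/\eps'$ with $\eps'=\eps/2$ and $\delta'=\delta/2$, taking $\beta'=\beta/(rT)$, and union bounding over $t\in[T]$ and $\EEE_0$ gives the stated bound, using that the residual $\sqrt{\log(1/\beta')}$ (and $\sqrt{\log r}=O(\sqrt{\log T})$) is absorbed into $\sqrt{m}$ up to $\polylog$ factors since $m=\Theta(\gamma^{-2}\max\{(w(\XXX)+w(\CCC))^2,\log(T/\beta)\})\ge\log(T/\beta)$. The step requiring the most care is the one inherited from Lemma~\ref{lem:binmecherror}: ensuring the \TreeMech error on the matrix stream enters as $\sqrt{m}$ (operator norm of a sum of $O(\log T)$ Gaussian noise matrices on $m\times m$ entries) rather than the naive Frobenius bound $m$ — this is precisely what keeps the dependence on the projected dimension at $\sqrt{m}$ and is the crux of the whole dimension-reduction approach.
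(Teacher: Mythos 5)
Your proposal is correct and follows essentially the same route as the paper: condition on $\EEE_0$, observe that $\LLL_\proj(\theta;\Gamma_t;\Phi)$ is an ordinary least-squares objective in $\vartheta=\Phi\theta$ over $\Phi\CCC\subset\R^m$, rerun the Lemma~\ref{lem:binmecherror}/Theorem~\ref{thm:thm2} analysis with $d$ replaced by $m$, and finish via $\min_{\theta\in\CCC}\LLL_\proj(\theta;\Gamma_t;\Phi)\leq\LLL_\proj(\hat{\theta}_t;\Gamma_t;\Phi)$. You in fact fill in two details the paper leaves implicit — that the \TreeMech matrix-noise must be controlled in operator norm to get $\sqrt{m}$ rather than $m$, and that $\sqrt{\log(T/\beta)}$ is absorbed into $\sqrt{m}$ because $m\geq\log(T/\beta)$ by construction.
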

\begin{proof}
Let us condition on event $\EEE_0$ to hold. By definition,
\begin{align*}\min_{\vartheta \in \Phi\CCC}\, \sum_{i=1}^t \left (y_i - \frac{\| \x_i \|}{\| \Phi \x_i \|} \langle \Phi \x_i, \vartheta \rangle \right )^2 
\equiv \min_{\theta \in \CCC}\,\sum_{i=1}^t \left (y_i - \frac{\| \x_i \|}{\| \Phi \x_i \|} \langle \Phi \x_i, \Phi\theta \rangle \right )^2.
\end{align*}

Since the inputs are $m$-dimensional and $\vartheta^\priv_t = \Phi\theta^\priv_t$, using an analysis similar to Theorem~\ref{thm:thm2} gives that, with probability at least $1-\beta$, for each $t \in [T]$,
\begin{multline*}
\sum_{i=1}^t \left (y_i - \frac{\| \x_i \|}{\| \Phi \x_i \|} \langle \Phi \x_i, \Phi\theta^\priv_t \rangle \right )^2  - \min_{\theta \in \CCC}\,\sum_{i=1}^t \left (y_i - \frac{\| \x_i \|}{\| \Phi \x_i \|} \langle \Phi \x_i, \Phi\theta \rangle \right )^2 \\
=  O \left ( \frac{\log^{3/2} T \sqrt{\log(1/\delta)} \| \CCC \|^2 (\sqrt{m}+\sqrt{\log(T/\beta)})}{\eps}\right ).
\end{multline*}
In other words, with probability at least $1-\beta$, for each $t \in [T]$,
\begin{align*} 
\LLL_\proj(\theta^\priv_t;\Gamma_t;\Phi) -  \min_{\theta \in \CCC}\, \LLL_\proj(\theta;\Gamma_t;\Phi) =  O \left ( \frac{\log^{3/2} T \sqrt{\log(1/\delta)} \| \CCC \|^2 (\sqrt{m}+\sqrt{\log(T/\beta)})}{\eps}\right ).
\end{align*}
By noting that $\min_{\theta \in \CCC}\, \LLL_\proj(\theta;\Gamma_t;\Phi) \leq \LLL_\proj(\hat{\theta}_t;\Gamma_t;\Phi)$ and removing the conditioning on $\EEE_0$ (by adjusting $\beta$), completes the proof.
\end{proof}

Using properties of random projections, we now bound $\LLL_\proj(\hat{\theta}_t;\Gamma_t;\Phi)$ in terms of $\LLL(\hat{\theta}_t;\Gamma_t)$.
\begin{lemma} \label{lem:JLfirst}
Let $\Phi$ be a random matrix as defined in Theorem~\ref{thm:gordon} with $m = \Theta((1/\gamma^2)\max\{(w(\XXX)+w(\CCC))^2,\log(T/\beta)\})$, and let $\beta > 0$. Then with probability at least $1-\beta$, for each $t \in [T]$,
\begin{align*}
\LLL_\proj(\hat{\theta}_t;\Gamma_t;\Phi) \leq \LLL(\hat{\theta}_t;\Gamma_t) + 4 \gamma^{2} \| \CCC \|^{2} t + 2\gamma \| \CCC \| \sqrt{t \LLL(\hat{\theta}_t;\Gamma_t)}+ 2\sqrt{2}\gamma^{3/2} \| \CCC \|^{3/2}t^{3/4}\LLL(\hat{\theta}_t;\Gamma_t)^{1/4}.
\end{align*}
\end{lemma}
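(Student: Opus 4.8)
The plan is to condition once on a good event for the random matrix $\Phi$, reduce the statement to an elementary expansion of a sum of squares, and then keep track of constants. Concretely, apply Theorem~\ref{thm:gordon} and Corollary~\ref{cor:gordon} to the set $S=\XXX\cup\CCC$, using $w(\XXX\cup\CCC)\le w(\XXX)+w(\CCC)$ and the value of $m$ chosen in Algorithm~\ProjPrivIncReg. Because the conclusions of those statements are \emph{uniform} over all of $\XXX$ and all of $\CCC$, a single invocation, failing with probability at most $\beta$, yields simultaneously for \emph{every} $t\in[T]$: (i) $(1-\gamma)\|\x\|^2\le\|\Phi\x\|^2\le(1+\gamma)\|\x\|^2$ for all $\x\in\XXX$, and (ii) $|\langle\Phi\x,\Phi\theta\rangle-\langle\x,\theta\rangle|\le\gamma\|\x\|\|\theta\|$ for all $\x\in\XXX$, $\theta\in\CCC$ (this is exactly~\eqref{eqn:xthetprod}). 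This is the step that neutralizes adaptivity: although $\hat\theta_t$ and the $\x_i$'s may depend on $\Phi$, they always lie in $\CCC$ and $\XXX$, over which (i)--(ii) already hold. In particular (i) shows $\|\Phi\x_i\|\neq 0$, so the scaling factor $\|\x_i\|/\|\Phi\x_i\|$ is well defined.

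Next I would control, for each $i\le t$, the per-example prediction distortion
\[
b_i \;:=\; \frac{\|\x_i\|}{\|\Phi\x_i\|}\,\langle\Phi\x_i,\Phi\hat\theta_t\rangle \;-\; \langle\x_i,\hat\theta_t\rangle .
\]
Writing $c_i=\|\x_i\|/\|\Phi\x_i\|$ and splitting $b_i=(c_i-1)\langle\x_i,\hat\theta_t\rangle + c_i\big(\langle\Phi\x_i,\Phi\hat\theta_t\rangle-\langle\x_i,\hat\theta_t\rangle\big)$, the first summand is controlled by (i) (which gives $|c_i-1|=O(\gamma)$ and $c_i=O(1)$) together with $|\langle\x_i,\hat\theta_t\rangle|\le\|\x_i\|\,\|\CCC\|$, and the second by (ii). This gives $|b_i|=O(\gamma\|\x_i\|\,\|\CCC\|)$, hence, using $\|\x_i\|\le\|\XXX\|\le 1$,
\[
\sum_{i=1}^t b_i^2 \;\le\; O(\gamma^2\|\CCC\|^2)\sum_{i=1}^t\|\x_i\|^2 \;\le\; 4\gamma^2\|\CCC\|^2 t
\]
once the constants coming from (i)--(ii) are tracked carefully in the regime $\gamma<1$.

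Finally, expand the projected loss around the true one. With $a_i:=y_i-\langle\x_i,\hat\theta_t\rangle$ we have $y_i-\frac{\|\x_i\|}{\|\Phi\x_i\|}\langle\Phi\x_i,\Phi\hat\theta_t\rangle = a_i-b_i$, so by the definition of $\LLL_\proj$ in~\eqref{eqn:JLtransform},
\[
\LLL_\proj(\hat\theta_t;\Gamma_t;\Phi) \;=\; \sum_{i=1}^t (a_i-b_i)^2 \;=\; \sum_{i=1}^t a_i^2 \;-\; 2\sum_{i=1}^t a_i b_i \;+\; \sum_{i=1}^t b_i^2 ,
\]
where $\sum_i a_i^2=\LLL(\hat\theta_t;\Gamma_t)$. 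Bounding the middle term by Cauchy--Schwarz, $\big|\sum_i a_ib_i\big|\le\sqrt{\LLL(\hat\theta_t;\Gamma_t)}\,\sqrt{\sum_i b_i^2}$, and inserting $\sum_i b_i^2\le 4\gamma^2\|\CCC\|^2 t$ already produces the first three terms of the claimed inequality. The fourth term $2\sqrt2\,\gamma^{3/2}\|\CCC\|^{3/2}t^{3/4}\LLL(\hat\theta_t;\Gamma_t)^{1/4}$ and the precise constants (for instance $2$ rather than $4$ on the $\gamma\|\CCC\|\sqrt{t\LLL}$ term) come out of a slightly finer accounting: split $b_i$ into its two error sources (the norm-correction factor $c_i-1$ and the inner-product distortion) and apply Cauchy--Schwarz / AM--GM separately to the cross terms they generate, rather than using the crude uniform bound $|b_i|\le O(\gamma\|\x_i\|\|\CCC\|)$ everywhere. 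Removing the conditioning on the Gordon event (charging its probability to the overall failure probability $\beta$) finishes the argument. I expect the only real friction to be this last, purely mechanical bookkeeping; the conceptual content — replacing the non-adaptive JL bound by the uniform Gordon bound over $\XXX\cup\CCC$ — is already packaged in~\eqref{eqn:xthetprod}.
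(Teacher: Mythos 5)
Your proposal is correct and follows essentially the same route as the paper: condition on the uniform Gordon/Corollary~\ref{cor:gordon} event over $\XXX\cup\CCC$ (which is exactly how the paper neutralizes adaptivity), then expand the squared loss around the per-example distortion and control the cross term by Cauchy--Schwarz. The only cosmetic difference is that the paper performs the two error sources as sequential substitutions (first $\x_i\mapsto\tilde\x_i$, then $\langle\tilde\x_i,\theta\rangle\mapsto\langle\Phi\tilde\x_i,\Phi\theta\rangle$), with the $t^{3/4}\LLL^{1/4}$ term arising from re-expanding $\sqrt{t\,\LLL(\hat\theta_t;(\tilde\x_1,y_1),\dots)}$ via the first-stage bound, whereas your single-shot bound on $\sum_i b_i^2$ yields the same conclusion (in fact a slightly cleaner one) up to the unimportant constants.
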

\begin{proof}
Fix a $t \in [T]$. From Theorem~\ref{thm:gordon}, for the chosen value of $m$, with probability at least $1-\beta$,
\begin{align} \label{eqn:JLfirst}
\LLL(\hat{\theta}_t;(\tilde{\x}_1,y_1),\dots,(\tilde{\x}_t,y_t)) & =\sum_{i=1}^t (y_i - \langle  \tilde{\x}_i, \hat{\theta}_t \rangle)^2 \nonumber \\
& = \sum_{i=1}^t  \left (y_i - \frac{\| \x_i\|}{\| \Phi \x_i \|} \langle  \x_i, \hat{\theta}_t \rangle \right )^2  \nonumber\\
& \leq \sum_{i=1}^t (|y_i - \langle  \x_i, \hat{\theta}_t \rangle| + \gamma \| \hat{\theta}_t \|)^2  \leq \sum_{i=1}^t (|y_i - \langle  \x_i, \hat{\theta}_t \rangle| + \gamma \| \CCC \|)^2 \nonumber\\
& \leq \LLL(\hat{\theta}_t;\Gamma_t) + \gamma^{2} \| \CCC \|^{2} t + 2\gamma \| \CCC \| \sum_{i=1}^t |y_i - \langle \x_i, \hat{\theta}_t \rangle| \nonumber\\
& \leq \LLL(\hat{\theta}_t;\Gamma_t) + \gamma^{2} \| \CCC \|^{2} t + 2\gamma \| \CCC \| \sqrt{t \LLL(\hat{\theta}_t;\Gamma_t)}.
\end{align}
Consider $\LLL_\proj(\hat{\theta}_t;\Gamma_t;\Phi)$. From Corollary~\ref{cor:gordon}, for the chosen value of $m$, with probability at least $1-\beta$,
\begin{align*}
\LLL_\proj(\hat{\theta}_t;\Gamma_t;\Phi) & = \sum_{i=1}^t (y_i - \langle  \Phi \tilde{\x}_i, \Phi \hat{\theta}_t \rangle )^2  \\
& \leq \sum_{i=1}^t (|y_i - \langle \tilde{\x}_i, \hat{\theta}_t \rangle| + \gamma \| \hat{\theta}_t \|)^2 \\
& \leq \sum_{i=1}^t (|y_i - \langle \tilde{\x}_i, \hat{\theta}_t \rangle| + \gamma \| \CCC \|)^2 \\
& = \LLL(\hat{\theta}_t;(\tilde{\x}_1,y_1),\dots,(\tilde{\x}_t,y_t)) + \gamma^{2} \| \CCC \|^{2} t + 2\gamma \| \CCC \| \sum_{i=1}^t |y_i - \langle \tilde{\x}_i, \hat{\theta}_t \rangle| \\
& \leq \LLL(\hat{\theta}_t;(\tilde{\x}_1,y_1),\dots,(\tilde{\x}_t,y_t)) + \gamma^{2} \| \CCC \|^{2} t + 2\gamma \| \CCC \| \sqrt{t \LLL(\hat{\theta}_t;(\tilde{\x}_1,y_1),\dots,(\tilde{\x}_t,y_t))},
\end{align*}
where the first inequality is by application of Corollary~\ref{cor:gordon}. Substituting the result from~\eqref{eqn:JLfirst} and taking a union bound over all $t \in [T]$ (i.e., replacing $\beta$ by $\beta/T$), completes the proof.
\end{proof}

The next step is to lower bound $\LLL_\proj(\theta^\priv_t;\Gamma_t;\Phi)$ in terms of $\LLL(\theta^\priv_t;\Gamma_t)$.
\begin{lemma} \label{lem:JLfifth}
For any $\beta > 0$, with probability at least $1 - \beta$, for each $t \in [T]$, $\theta^\priv_t$ generated by Algorithm~\ProjPrivIncReg satisfies:
\begin{multline*}
\LLL(\theta^\priv_t;\Gamma_t) \leq \LLL_\proj(\theta^\priv_t;\Gamma_t;\Phi) + 2 \gamma \| \CCC \| \sqrt{T \LLL(\theta^\priv_t;\Gamma_t)} \\
+  2\sqrt{2}\gamma^{3/2}\| \CCC \|^{3/2}T^{3/4}\LLL(\theta^\priv_t;\Gamma_t)^{1/4} + 4\gamma^2 \|\CCC\|^2 T. 
\end{multline*}
\end{lemma}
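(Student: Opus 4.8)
The plan is to establish Lemma~\ref{lem:JLfifth} as essentially the ``reverse'' of Lemma~\ref{lem:JLfirst}, applied to $\theta^\priv_t$ in place of $\hat\theta_t$. The point to appreciate is that, in the streaming setting, the matrix $\Phi$ is committed to \emph{before} any $(\x_i,y_i)$ arrives, so $\theta^\priv_t$ --- which the algorithm builds from the \TreeMech{} outputs and the lifting step~\ref{step:lp} --- is chosen adaptively with respect to $\Phi$; a vanilla Johnson--Lindenstrauss guarantee on a fixed point would not apply. This is exactly why the section works with Gordon's theorem: the distortion bounds of Theorem~\ref{thm:gordon}, Corollary~\ref{cor:gordon}, and~\eqref{eqn:xthetprod} hold \emph{uniformly} over all $\x \in \XXX$ and all $\theta \in \CCC$, hence remain valid for the adaptively-chosen $\theta^\priv_t$. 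The one extra ingredient needed here is that $\theta^\priv_t \in \CCC$, which has already been verified (just before the utility analysis, via the Minkowski-functional bound $\|\theta^\priv_t\|_\CCC \le \|\theta^\true_t\|_\CCC \le 1$).

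Concretely: fix $t \in [T]$, and condition on the two events --- holding simultaneously for all $t$ after a union bound replacing $\beta$ by $\beta/T$, given the value of $m$ set in Algorithm~\ProjPrivIncReg --- that $\bigl|\,\|\x_i\|/\|\Phi\x_i\| - 1\,\bigr| \le \gamma$ for every $i$ (Theorem~\ref{thm:gordon} on the unit vectors $\x_i/\|\x_i\|$) and $|\langle\Phi\x_i,\Phi\theta\rangle - \langle\x_i,\theta\rangle| \le \gamma\|\x_i\|\|\theta\|$ for every $i$ and every $\theta \in \CCC$ (Corollary~\ref{cor:gordon}/\eqref{eqn:xthetprod}). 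I would then run the same two ``hops'' as in the proof of Lemma~\ref{lem:JLfirst}, with each elementary step $(a+b)^2 \le (|a|+|b|)^2$ now oriented so as to \emph{upper} bound $\LLL(\theta^\priv_t;\Gamma_t)$: Hop~1 removes the rescaling $\|\x_i\|/\|\Phi\x_i\|$ (per-coordinate perturbation $\le \gamma\|\CCC\|$, since $\|\x_i\|\le 1$ and $\|\theta^\priv_t\|\le\|\CCC\|$), passing from $\LLL(\theta^\priv_t;\Gamma_t)$ to $\LLL(\theta^\priv_t;(\tilde\x_1,y_1),\dots,(\tilde\x_t,y_t))$; Hop~2 removes the projection (per-coordinate perturbation $\le \gamma\|\CCC\|$ again, using $\langle\Phi\tilde\x_i,\Phi\theta\rangle = \tfrac{\|\x_i\|}{\|\Phi\x_i\|}\langle\Phi\x_i,\Phi\theta\rangle$ together with both conditioned events), passing from $\LLL(\theta^\priv_t;(\tilde\x_1,y_1),\dots,(\tilde\x_t,y_t))$ to $\LLL_\proj(\theta^\priv_t;\Gamma_t;\Phi)$. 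In each hop, after expanding the square, the cross term $\sum_i |y_i - \widehat{y}_i|$ is controlled by Cauchy--Schwarz ($\le \sqrt{t}\,\sqrt{\sum_i (y_i - \widehat{y}_i)^2}$), and finally $t$ is replaced by $T$ in the error terms (all of which are monotone in $t$).

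The last step is to compose the two hops and simplify the nested radicals. This is the same bookkeeping as at the end of the proof of Lemma~\ref{lem:JLfirst}: substituting one hop into the other and applying sub-additivity of the square root, $\sqrt{a+b+c}\le\sqrt a+\sqrt b+\sqrt c$, to terms of the shape $\sqrt{T(\LLL + \gamma\|\CCC\|\sqrt{T\LLL} + \gamma^2\|\CCC\|^2 T)}$ produces exactly the three claimed error terms $2\gamma\|\CCC\|\sqrt{T\,\LLL(\theta^\priv_t;\Gamma_t)}$, $2\sqrt2\,\gamma^{3/2}\|\CCC\|^{3/2}T^{3/4}\LLL(\theta^\priv_t;\Gamma_t)^{1/4}$, and $4\gamma^2\|\CCC\|^2 T$; removing the conditioning (and, in the degenerate regime where $\gamma$ is not below the small absolute constant the Gordon bounds implicitly need, falling back to the trivial bound) finishes the argument. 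I expect no conceptual obstacle: the crux is the adaptivity of $\theta^\priv_t$ relative to $\Phi$, which is dispatched once and for all by the uniformity of Gordon's theorem together with $\theta^\priv_t \in \CCC$; everything after that is the routine (if slightly tedious) constant-tracking already carried out for Lemma~\ref{lem:JLfirst}.
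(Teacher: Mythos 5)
Your proposal is correct and follows essentially the same route as the paper: condition on the uniform norm- and inner-product-distortion events from Gordon's theorem and Corollary~\ref{cor:gordon} (whose uniformity over $\XXX$ and $\CCC$ is exactly what handles the adaptively chosen $\theta^\priv_t \in \CCC$), then rerun the two-hop argument of Lemma~\ref{lem:JLfirst} oriented toward a lower bound on $\LLL_\proj$, and finish with a union bound over $t$. Your write-up merely makes explicit the bookkeeping that the paper compresses into ``using arguments as in Lemma~\ref{lem:JLfirst}, but by focusing on the lower bounds.''
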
 
\begin{proof}
Fix a $t \in [T]$. For the chosen value of $m$, 
$$\Pr\left[ \left | \| \Phi \x_i \| - \| \x_i \| \right | \geq \gamma \| \x_i \|~\mbox{\rm for all}~i\in[t]\right] \leq \beta.$$
Similarly for the chosen value of $m$ by using~\eqref{eqn:xthetprod},
$$\Pr\left [\left | \langle \Phi \x_i, \Phi \theta^\priv_t \rangle - \langle \x_i,\theta^\priv_t \rangle  \right | \geq \gamma \| \x_i \| \| \theta^\priv_t\|~\mbox{\rm for all}~i\in[t]\right ] \leq \beta.$$
Using arguments as in Lemma~\ref{lem:JLfirst}, but by focusing on the lower bounds, we get that with probability at least $1-\beta$,
\begin{multline*}
\LLL_\proj(\theta^\priv_t;\Gamma_t;\Phi) \geq \LLL(\theta^\priv_t;\Gamma_t) - 2 \gamma \| \CCC \| \sqrt{T \LLL(\theta^\priv_t;\Gamma_t)} \\
-  2\sqrt{2}\gamma^{3/2}\| \CCC \|^{3/2}T^{3/4}\LLL(\theta^\priv_t;\Gamma_t)^{1/4} - 4\gamma^2 \|\CCC\|^2 T.
\end{multline*}
Taking a union bound over all $t \in [T]$ completes the proof.
\end{proof}

Putting together Lemmas~\ref{lem:JLzeroth},~\ref{lem:JLfirst}, and~\ref{lem:JLfifth} and simple arithmetic manipulation gives: that with probability at least $1-\beta$, for each $t \in [T]$,
\begin{multline} \label{eqn:reexp}
\LLL(\theta^\priv_t;\Gamma_t) -  \LLL(\hat{\theta}_t;\Gamma_t) = O \left (\frac{\sqrt{m}\log^{3/2} T \sqrt{\log(1/\delta)} \| \CCC \|^2}{\eps} \right ) + 8\gamma^{2} \| \CCC \|^{2} T + 2 \gamma \| \CCC \| \sqrt{T \LLL(\hat{\theta}_t;\Gamma_t)}  \\
+ 2 \gamma \| \CCC \| \sqrt{T \LLL(\theta^\priv_t;\Gamma_t)} + 2\sqrt{2}\gamma^{3/2}\| \CCC \|^{3/2}T^{3/4}\LLL(\theta^\priv_t;\Gamma_t)^{1/4} + 2\sqrt{2}\gamma^{3/2}\| \CCC\|^{3/2}T^{3/4} \LLL(\hat{\theta}_t;\Gamma_t)^{1/4} 
.\end{multline}
To simplify~\eqref{eqn:reexp} in terms of its dependence on $\LLL(\theta^\priv_t;\Gamma_t)$, start by noting that it is of the form $p - a\sqrt{p} - a^{3/2}\sqrt[4]{p} - b - 2a^2 \geq 0$, where
\begin{flalign*}
& p  =  \LLL(\theta^\priv_t;\Gamma_t),\\
& a  =  2\gamma \| \CCC \| \sqrt{T}, \mbox{and} \\
& b  =   \LLL(\hat{\theta}_t;\Gamma_t)  + O \left ( \frac{\sqrt{m} \log^{3/2} T \sqrt{\log(1/\delta)} \| \CCC \|^2}{\eps} \right ) + 2 \gamma \| \CCC \| \sqrt{T \LLL(\hat{\theta}_t;\Gamma_t)} + 2\sqrt{2}\gamma^{\frac 3 2}\| \CCC\|^{\frac 3 2}T^{\frac 3 4} \LLL(\hat{\theta}_t;\Gamma_t)^{\frac 1 4}.
\end{flalign*}
Solving for $p$ from $p - a\sqrt{p} - a^{3/2}p^{1/4} - b - 2a^2 = 0$ (we use WolframAlpha solver~\cite{wolfram}), and using that to simplify~\eqref{eqn:reexp} yields, 
that with probability at least $1-\beta$, for each $t \in [T]$,
\begin{multline} \label{eqn:finexp}
\LLL(\theta^\priv_t;\Gamma_t) -  \LLL(\hat{\theta}_t;\Gamma_t)  = O( \frac{\sqrt{m} \log^{3/2} T \sqrt{\log(1/\delta)} \| \CCC \|^2}{\eps} \\
+ \gamma^{2} \| \CCC \|^{2} T + \gamma \| \CCC \| \sqrt{T \LLL(\hat{\theta}_t;\Gamma_t)} + \gamma^{3/2}\| \CCC\|^{3/2}T^{3/4} \LLL(\hat{\theta}_t;\Gamma_t)^{1/4} ).
\end{multline}

\begin{theorem} \label{thm:thm6}
Algorithm~\ProjPrivIncReg is $(\eps,\delta)$-differentially private with respect to a single datapoint change in the stream~$\Gamma$. For any $\beta > 0$, with probability at least $1-\beta$, for each $t \in [T]$, $\theta^\priv_t$ generated by Algorithm~\ProjPrivIncReg satisfies: 
\begin{multline*}
\LLL(\theta^\priv_t;\Gamma_t) -  \LLL(\hat{\theta}_t;\Gamma_t) = O (\frac{T^{\frac 1 3} W^{\frac 2 3} \log^{2} T \| \CCC \|^2  \sqrt{\log(1/\delta)\log(1/\beta)}}{\eps} \\
+ T^{\frac 1 6}W^{\frac 1 3} \| \CCC \|\sqrt{\LLL(\hat{\theta}_t;\Gamma_t)} + T^{\frac 1 4}W^{\frac 1 2} \| \CCC \|^{\frac 3 2} \sqrt[4]{\LLL(\hat{\theta}_t;\Gamma_t)}),
\end{multline*}
where $\hat{\theta}_t \in \min_{\theta \in \CCC}\, \LLL(\theta;\Gamma_t)$ and $W=w(\XXX)+w(\CCC)$.
\end{theorem}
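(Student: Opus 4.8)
The plan is to prove the privacy claim and the accuracy claim separately, and for accuracy to do little more than collect the ingredients already assembled in this section --- Lemmas~\ref{lem:JLzeroth}, \ref{lem:JLfirst}, and~\ref{lem:JLfifth}, whose combination is exactly the bound~\eqref{eqn:finexp} --- and then substitute the parameter settings fixed in Algorithm~\ProjPrivIncReg.

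\textbf{Privacy.} The two invocations of \TreeMech (Steps~\ref{step:first} and~\ref{step:second}) are run with parameters $(\eps',\delta') = (\eps/2,\delta/2)$ on streams of per-timestep $L_2$-sensitivity at most $2$: this was verified in the text for the matrix stream $(\Phi\tilde\x_i)(\Phi\tilde\x_i)^\top$ of Step~\ref{step:second}, and it holds for the vector stream $\Phi\tilde\x_i y_i$ of Step~\ref{step:first} since $\|\Phi\tilde\x_i y_i\| = \|\x_i\|\,|y_i| \le 1$ by the construction $\|\Phi\tilde\x_i\| = \|\x_i\|$ and the normalization. Basic composition (Theorem~\ref{thm:composition1}) yields $(\eps,\delta)$-differential privacy for the pair $(\q_t,Q_t)$ across the whole stream. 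Everything downstream is post-processing: $g_t$ is a deterministic function of $(\q_t,Q_t)$, \NoisyProjGrad only ever evaluates $g_t$ (so evaluating it at many $\vartheta$'s costs nothing in privacy), and the lifting in Step~\ref{step:lp} is a deterministic map of $\vartheta^\priv_t$. Hence the whole algorithm is $(\eps,\delta)$-differentially private with respect to a single datapoint change in $\Gamma$.

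\textbf{Accuracy.} Condition on the probability-$(1-\beta)$ event on which~\eqref{eqn:finexp} holds simultaneously for all $t \in [T]$, and substitute $\gamma = W^{1/3}/T^{1/3}$. The three distortion terms become $\gamma^2\|\CCC\|^2 T = T^{1/3}W^{2/3}\|\CCC\|^2$, $\gamma\|\CCC\|\sqrt{T\,\LLL(\hat\theta_t;\Gamma_t)} = T^{1/6}W^{1/3}\|\CCC\|\sqrt{\LLL(\hat\theta_t;\Gamma_t)}$, and $\gamma^{3/2}\|\CCC\|^{3/2}T^{3/4}\LLL(\hat\theta_t;\Gamma_t)^{1/4} = T^{1/4}W^{1/2}\|\CCC\|^{3/2}\LLL(\hat\theta_t;\Gamma_t)^{1/4}$. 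For the privacy-noise term, with $m = \Theta\!\left((1/\gamma^2)\max\{W^2,\log(T/\beta)\}\right)$ we get $\sqrt m = \Theta\!\left((1/\gamma)\sqrt{\max\{W^2,\log(T/\beta)\}}\right)$, which, up to $\polylog(T)$ factors and using $1/\gamma = T^{1/3}W^{-1/3}$, is $\Theta\!\left(T^{1/3}W^{2/3}\sqrt{\log(1/\beta)}\right)$; feeding this into the first term of~\eqref{eqn:finexp} and folding $\log^{3/2}T$ together with the residual $\sqrt{\log T}$ from $\log(T/\beta)$ into $\log^2 T$ produces the stated $\frac{T^{1/3}W^{2/3}\log^2 T\,\|\CCC\|^2\sqrt{\log(1/\delta)\log(1/\beta)}}{\eps}$. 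Collecting the four terms gives the theorem; the $\min\{\,\cdot\,,T\}$ reading follows, as usual, by comparison with the trivial estimator.

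\textbf{Main obstacle.} There is no deep step left inside the proof of the theorem itself --- the conceptual difficulty was already discharged in the lemmas, by invoking Gordon's theorem (Theorem~\ref{thm:gordon}) and Corollary~\ref{cor:gordon}/\eqref{eqn:xthetprod} on the \emph{fixed} domains $\XXX$ and $\CCC$ rather than on the adaptively generated $\x_i$'s, so that only $m \gtrsim W^2$ projected dimensions are needed and the stream's adaptivity never interacts with the JL guarantee. What does need care here is the choice $\gamma = (W/T)^{1/3}$: one should check explicitly that it balances the privacy-noise term (which behaves like $\sqrt m/\eps \asymp W/(\gamma\eps)$, and thus \emph{decreases} as $\gamma$ grows, since a coarser embedding uses fewer dimensions) against the projection-distortion term $\gamma^2\|\CCC\|^2 T$ (which \emph{increases} with $\gamma$), the balance point being $\gamma^3 = W/T$. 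The only other non-trivial bookkeeping is the simplification of $\max\{W^2,\log(T/\beta)\}$ into the displayed form, which collapses up to $\polylog(T)$ factors, and the substitution of the already-recorded WolframAlpha solution of $p - a\sqrt p - a^{3/2}\sqrt[4]{p} - b - 2a^2 = 0$ used to pass from~\eqref{eqn:reexp} to~\eqref{eqn:finexp}; neither introduces a real difficulty.
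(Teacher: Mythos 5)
Your proposal is correct and follows essentially the same route as the paper: the privacy claim via the sensitivity-$2$ bound on both \TreeMech streams, composition, and post-processing, and the accuracy claim by substituting $\gamma = W^{1/3}/T^{1/3}$ into~\eqref{eqn:finexp}. Your additional bookkeeping (the sensitivity check for Step~\ref{step:first}, the balancing heuristic $\gamma^3 = W/T$, and the absorption of $\log(T/\beta)$ into the $\log^2 T\sqrt{\log(1/\beta)}$ factor) is accurate and merely makes explicit what the paper leaves implicit.
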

\begin{proof}
The $(\eps,\delta)$-differential privacy follows from the established global sensitivity bound.

For the utility analysis, we start from~\eqref{eqn:finexp}, and substitute $\gamma = (w(\XXX)+w(\CCC))^{1/3}/T^{1/3}$ to get the claimed bound. The value of $\gamma$ is picked to balance the various opposing factors.
\end{proof}

\begin{remark}
Since $\LLL$ is an non-decreasing function in $t$, in the above theorem $\LLL(\hat{\theta}_t;\Gamma_t)$ in the right-hand side could be replaced by $\LLL(\hat{\theta}_T;\Gamma_T)$ (defined as $\OPT$ in the introduction).
\end{remark}

\subsection{Discussion about Theorem~\ref{thm:thm6}} \label{sec:inst}
We start this discussion by mentioning a few instantiations of Theorem~\ref{thm:thm6}. Let $\OPT = \LLL(\hat{\theta}_T;\Gamma_T)$ (remember, that $\OPT \leq T$). For simplicity, below we ignore dependence on the privacy and confidence parameters.

For arbitrary $\XXX$ and $\CCC$, with just an $L_2$-diameter assumption as in Theorem~\ref{thm:thm2}, $W=w(\XXX)+w(\CCC)=O(\sqrt{d})$, and therefore the excess risk bound provided by Theorem~\ref{thm:thm6} (accounting for the trivial excess risk bound of $T$) is $\tilde{O}(\min\{T^{1/3} d^{1/3} + T^{1/6} d^{1/6} \sqrt{\OPT} + T^{1/4} d^{1/4} \sqrt[4]{\OPT},T\})$, which is worse than the $\tilde{O}(\min\{\sqrt{d},T\})$ excess risk bound provided by Theorem~\ref{thm:thm2}. However, as we discuss below, for many interesting high-dimensional problems, one could get substantially better bounds using Theorem~\ref{thm:thm6}. This happens when $W$ is ``small''.

In many practical regression instances, the input data is high-dimensional and sparse~\cite{nelson2013osnap,halko2011finding,zhou2008compressed} which leads to a small $w(\XXX)$. For example, if the $\x_i$'s are $k$-sparse, then $w(\XXX) = O(\sqrt{k \log(d/k)})$. Another common scenario is to have the $\x_i$'s come from a bounded $L_1$-diameter ball, in which case $w(\XXX)= O(\sqrt{\log d})$. Now under any of these assumptions, let us look at different choices of constraint spaces ($\CCC$) that have a small Gaussian width.

\begin{CompactItemize}
\item If $\CCC=\mathrm{conv}\{\a_1,\dots,\a_l\}$ be convex hull of vectors $\a_i \in \R^d$, such that for all $i \in [d], \| \a_i \| \leq c$ with $c \in \R^+$, then $w(\CCC) = O(c \sqrt{\log l})$. A popular subcase of the above is the {\em cross-polytope} $B_1^d$ (unit $L_1$-ball). For example, the popular Lasso formulation~\cite{tibshirani1996regression} used for high-dimensional linear regression is:
$$\hat{\theta}_t \in \mbox{argmin}_{\theta \in c B_1^d}\, \sum_{i=1}^t (y_i - \langle \x_i, \theta \rangle)^2.$$
Another popular subcase is the probability simplex where, $\CCC = \{ \theta \in \R^d \,:\, \sum_{i} \theta_i = 1, \forall i \in [d], \theta_i \geq 0 \}$.

\item Group/Block $L_1$-norm is another prominent sparsity inducing norm used in many applications~\cite{bach2012optimization}. For a vector $\theta \in \R^d$, and a parameter $k$, this norm is defined as:\!\footnote{There are generalizations of this norm that can handle different group (block) sizes.}
$$\|\theta\|_{k,L_{1,2}} = \sum_{i=1}^{\lceil d/k \rceil} \sqrt{\sum_{j=(i-1)k+1}^{\min\{ik, d\}} |\theta_j|^2}.$$
If $\CCC$ denotes the convex set centered with radius one with respect to $\|\cdot\|_{k,L_{1,2}}$-norm then the Gaussian width of $\CCC$ is $O(\sqrt{k\log(d/k)})$~\cite{talwar2014private}.

\item $L_p$-balls ($1 < p < 2$) are another popular choice of constraint space~\cite{rahimi2013norm}. The regression problem in this case is defined as:
$$\hat{\theta}_t \in \mbox{argmin}_{\theta \in c B_p^d}\, \sum_{i=1}^t (y_i - \langle \x_i, \theta \rangle)^2,$$
and $w(c B_p^d) = O(cd^{1-1/p})$.
\end{CompactItemize}

As the reader may notice, in all of the above problem settings, $W$ is much smaller than $\sqrt{d}$. As a comparison to the bound in Theorem~\ref{thm:thm2}, if  $W =O(\polylog(d))$, then Theorem~\ref{thm:thm6} yields an excess risk bound of $\tilde{O}(T^{1/3} + T^{1/6}\sqrt{\OPT} + T^{1/4} \sqrt[4]{\OPT})$. This is significantly better than the $\tilde{O}(\sqrt{d})$ risk bound from Theorem~\ref{thm:thm2} for many setting of $T,d$, and $\OPT$, e.g., if $d \gg T^{4/3}$. 


One could also compare the result from Theorem~\ref{thm:thm6} to the bound obtained by applying the differentially private ERM algorithm of Talwar~\emph{et al.}\ \cite{talwar2015nearly} in the generic mechanism (Theorem~\ref{thm:thm1}, Part~\ref{res:incerm3}). It is hard to do a precise comparison because of the dependence on different parameters in these bounds. In general, when $\OPT$ is not very big (say $\ll T^{2/3}$) and $W=O(\polylog(d))$ then the excess risk bound from Theorem~\ref{thm:thm6} is significantly better than $\tilde{O}(\sqrt{T})$ risk bound obtained in Theorem~\ref{thm:thm1}, Part~\ref{res:incerm3}.


\smallskip
\noindent\textbf{Extension to a case where not all inputs are drawn from a domain with small Gaussian Width.} The previous analysis assumes that $w(\XXX)$ is small (all inputs are drawn from a domain with small Gaussian Width). We now show that the techniques and results in the previous section extend to a more robust setting, where not all inputs are assumed to come from a domain with small Gaussian width. 

In particular, we assume that there exists a set $\GGG \subseteq \XXX$, such that $w(\GGG)$ is small, and only some inputs in the stream come from $\GGG$ (e.g., only a fraction of the covariates could be sparse). We also assume that the algorithm has access to an oracle which given a point $\x \in \XXX$, returns whether $\x \in \GGG$ or not. The goal of the algorithm is to perform private incremental linear regression on inputs from $\GGG$. In a non-private world, this is trivial as the algorithm can simply ignore when $\x_t$ is not in $\GGG$, but this operation is not private. However, a simple change to Algorithm~\ProjPrivIncReg can handle this scenario without a breach in the privacy. The idea is to check whether $\x_t \in \GGG$, if so $(\x_t,y_t)$ is used as currently in Algorithm~\ProjPrivIncReg. Otherwise, $(\x_t,y_t)$ is replaced by $(\mathbf{0},0)$ before invoking Algorithm~\TreeMech (in Steps~\ref{step:first} and~\ref{step:second} of Algorithm~\ProjPrivIncReg). With this change, the resulting algorithm is $(\eps,\delta)$-differentially private, and with probability at least $1-\beta$, for each $t \in [T]$, its output $\theta^\priv_t$ will satisfy:
\begin{multline*}
\sum_{\x_i \in \GGG, i \in [t]}(y_i - \langle \x_i, \theta^\priv_t \rangle)^2 -  \sum_{\x_i \in \GGG, i \in [t]}(y_i - \langle \x_i, \hat{\theta}_t \rangle)^2 
 = O(\frac{T^{\frac 1 3} W^{\frac 2 3} \log^{2} T \| \CCC \|^2  \sqrt{\log(1/\delta) \log(1/\beta)}}{\eps} \\ 
+ T^{\frac 1 6}W^{\frac 1 3} \| \CCC \|\sqrt{\LLL(\hat{\theta}_t;\Gamma_t)} + T^{\frac 1 4}W^{\frac 1 2} \| \CCC \|^{\frac 3 2} \sqrt[4]{\LLL(\hat{\theta}_t;\Gamma_t)}),
\end{multline*}
where $\hat{\theta}_t \in \min_{\theta \in \CCC}\, \sum_{\x_i \in \GGG, i \in [t]}(y_i - \langle \x_i, \theta \rangle)^2$ and $W=w(\GGG)+w(\CCC)$.

\appendix

\section{Additional Preliminaries} \label{app:convex}
We start by reviewing some standard definitions in convex optimization. In our setting, for a loss function $\jmath(\theta;\z)$, all the following properties (such as convexity, Lipschitzness, strong convexity) are defined with respect to the first argument $\theta$.

In the following, we use the notation $\nabla \jmath(\theta;\z)$ to denote the gradient (if it exists) or any subgradient of function $\jmath(\cdot;\z)$ at $\theta$. 

\begin{definition} (Convex Functions) A loss function $\jmath \,:\, \CCC \times \ZZZ \rightarrow \R$ is convex with respect to $\theta$ over the domain $\CCC$ if for all $\z \in \ZZZ$ the following inequality holds: $\jmath(\lambda\theta_a + (1-\lambda)\theta_b;\z) \leq \lambda\cdot \jmath(\theta_a;\z)+(1-\lambda) \cdot \jmath(\theta_b;\z)$ for all $\theta_a,\theta_b \in \CCC$ and $\lambda \in [0,1]$. For a continuously differentiable $\jmath$, this inequality can be equivalently replaced with: $\jmath(\theta_b;\z) \geq  \jmath(\theta_a;\z) + \langle \nabla \jmath(\theta_a;\z), \theta_b - \theta_a \rangle$ for all $\theta_a,\theta_b \in \CCC$, where $\nabla \jmath(\theta_a;\z)$ is the gradient of $\jmath(\cdot;\z)$ at $\theta_a$.
\end{definition}


\begin{definition} (Lipschitz Functions) \label{defn:Lipsfunc}
A loss function $\jmath \,:\, \CCC \times \ZZZ \rightarrow \R$ is $L$-Lipschitz with respect to $\theta$ over the domain $\CCC$, if for all $\z \in \ZZZ$ and $\theta_a,\theta_b \in \CCC$, we have $|\jmath(\theta_a;\z) - \jmath(\theta_b;\z) | \leq L \| \theta_a - \theta_b \|$. If $\jmath$ is a convex function, then $\jmath$ is $L$-Lipschitz iff for all $\theta \in \CCC$ and subgradients $\g$ of $\jmath$ at $\theta$ we have $\| \g \| \leq L$.
\end{definition}

\begin{definition} (Strongly Convex Functions) \label{defn:strongconvex}
A loss function $\jmath \,:\, \CCC \times \ZZZ \rightarrow \R$ is $\nu$-strongly convex if for all $\z \in \ZZZ$ and $\theta_a, \theta_b \in \CCC$, all subgradients $\g$ of $\jmath(\theta_a;\z)$, we have $\jmath(\theta_b;\z) \geq \jmath(\theta_a;\z) + \langle \g, \theta_b-\theta_a \rangle + (\nu/2) \| \theta_b - \theta_a \|^2$  (i.e., $\jmath$ is bounded below by a quadratic function tangent at $\theta_a$).
\end{definition}

\smallskip
\noindent\textbf{Gaussian Norm Bounds.} Let $\NNN(0,\sigma^2)$ denote the Gaussian (normal) distribution with mean $0$ and variance $\sigma^2$. We use the following standard result on the spectral norm (largest singular value) of an i.i.d.\ Gaussian random matrix throughout this paper.
\begin{proposition} \label{prop:GaussConc}
Let $A$ be an $N \times n$ matrix whose entries are independent standard normal random variables. Then for every $t > 0$, with probability at least $1 - 2 \exp(-t^2/2)$ one has, $\| A \|  = O(\sqrt{N}+ \sqrt{n} + t)$. In particular, with probability at least $1 - \beta $, $\| A \|  = O(\sqrt{N}+ \sqrt{n} + \sqrt{\log (1/\beta)})$.
\end{proposition}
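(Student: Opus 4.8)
The plan is to prove Proposition~\ref{prop:GaussConc} by the classical $\epsilon$-net (covering) argument combined with a one-dimensional Gaussian tail bound and a union bound --- the standard route for controlling the operator norm of a Gaussian random matrix. The starting point is the bilinear representation of the spectral norm,
\[
\|A\| \;=\; \sup_{\x \in \S^{n-1},\ \y \in \S^{N-1}} \y^\top A \x,
\]
where $\S^{k-1}$ is the Euclidean unit sphere in $\R^k$. First I would fix $\epsilon = 1/4$ and pick finite $\epsilon$-nets $\mathcal{M}_1 \subseteq \S^{n-1}$, $\mathcal{M}_2 \subseteq \S^{N-1}$; by the standard volumetric bound one may take $|\mathcal{M}_1| \le (1+2/\epsilon)^n = 9^n$ and $|\mathcal{M}_2| \le 9^N$. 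A routine approximation step (replace $\x,\y$ by their nearest net points and iterate the resulting geometric series) shows that passing from the sphere to the nets costs only a constant factor, namely $\|A\| \le \frac{1}{1-2\epsilon}\max_{\x \in \mathcal{M}_1,\,\y \in \mathcal{M}_2}\y^\top A\x = 2\max_{\x \in \mathcal{M}_1,\,\y \in \mathcal{M}_2}\y^\top A\x$.

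The heart of the argument is a single term. For fixed unit vectors $\x,\y$, the scalar $\y^\top A\x = \sum_{i,j} A_{ij}\,y_i x_j$ is a linear combination of the i.i.d.\ $\NNN(0,1)$ entries of $A$ with coefficient vector $(y_i x_j)_{i,j}$ of Euclidean norm $\|\x\|\,\|\y\| = 1$; hence $\y^\top A\x \sim \NNN(0,1)$ and $\Pr[\y^\top A\x > u] \le \exp(-u^2/2)$ for every $u > 0$. A union bound over the at most $9^{\,n+N}$ pairs $(\x,\y) \in \mathcal{M}_1 \times \mathcal{M}_2$ then gives
\[
\Pr\Big[\max_{\x \in \mathcal{M}_1,\ \y \in \mathcal{M}_2}\y^\top A\x > u\Big] \;\le\; 9^{\,n+N}\exp(-u^2/2).
\]

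To finish, I would set $u = C_0(\sqrt{n}+\sqrt{N}) + t$ for an absolute constant $C_0$ large enough that $C_0^2/2 > \ln 9$, and use $u^2 \ge C_0^2(n+N) + t^2$ to bound $9^{\,n+N}\exp(-u^2/2) \le \exp\big((n+N)\ln 9 - C_0^2(n+N)/2 - t^2/2\big) \le \exp(-t^2/2)$. Combined with the factor-$2$ loss from passing to the net, this yields $\|A\| \le 2u = O(\sqrt{n}+\sqrt{N}+t)$ with probability at least $1 - \exp(-t^2/2) \ge 1 - 2\exp(-t^2/2)$, which is the claim. The ``in particular'' clause then follows by choosing $t = \sqrt{2\ln(2/\beta)} = O(\sqrt{\log(1/\beta)})$, which makes the failure probability at most $\beta$.

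I do not expect a genuine obstacle --- the statement is textbook (see the non-asymptotic random matrix theory literature). The only care needed is bookkeeping: verifying the sphere-to-net approximation constant, the covering-number estimate $(1+2/\epsilon)^k$ for the Euclidean sphere, and choosing $C_0$ large enough to swallow the $9^{\,n+N}$ union-bound factor while leaving the $t^2/2$ term intact so the tail has the claimed shape. A slightly cleaner alternative is Gaussian concentration: $A \mapsto \|A\|$ is $1$-Lipschitz in the Frobenius norm, so $\Pr[\|A\| > \E\|A\| + t] \le \exp(-t^2/2)$, and $\E\|A\| \le \sqrt{N}+\sqrt{n}$ by the Slepian--Gordon comparison inequality; this gives the bound with explicit constants at the price of invoking that comparison inequality.
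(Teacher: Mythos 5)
Your proposal is correct. The paper itself offers no proof of Proposition~\ref{prop:GaussConc}; it is invoked as a standard fact from non-asymptotic random matrix theory, so there is nothing to compare against on the paper's side. Your $\epsilon$-net argument is the canonical proof and all the steps check out: the covering number $(1+2/\epsilon)^k$, the factor $\frac{1}{1-2\epsilon}=2$ from the sphere-to-net reduction, the observation that $\y^\top A\x\sim\NNN(0,1)$ for fixed unit vectors, and the choice of $C_0$ with $C_0^2/2>\ln 9$ so that the $9^{n+N}$ union-bound factor is absorbed while the $e^{-t^2/2}$ tail survives. The concluding substitution $t=\sqrt{2\ln(2/\beta)}$ for the ``in particular'' clause and the alternative route via Lipschitz concentration plus the Slepian--Gordon bound $\E\|A\|\leq\sqrt{N}+\sqrt{n}$ are both also correct.
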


\subsection{Background on Linear Regression} \label{app:linearregressionbackground}
Linear regression is a statistical method used to create a linear model. It attempts to model the relationship between two variables (known as covariate-response pairs) by fitting a linear function to observed data. More formally, given $\y = X \theta^\star+ \w$, where $\y = (y_1,\dots,y_n) \in \R^n$ is a vector of observed responses, $X\in \R^{n \times d}$ is the covariate matrix in which $i$th row $\x_i^\top$ represents the covariates (features) for the $i$th observation, and $\w = (w_1,\dots,w_n)$ is a noise vector, the goal of linear regression is to estimate the unknown regression vector $\theta^\star$.

Assuming that the noise vector $\w$ follows a (sub)Gaussian distribution, estimating $\theta^\star$ amounts to solving the ordinary least-squares problem: 
$$\hat{\theta} \in \mbox{argmin}_{\theta \in \R^d}\, \| \y - X \theta \|^2 = \mbox{argmin}_{\theta \in \R^d}\, \sum_{i=1}^n (y_i - \langle \x_i,\theta \rangle)^2.$$
Typically, for additional guarantees such as sparsity and stability, constraints are added to the least-squares estimator. This leads to the constrained linear regression formulation:
\begin{align} \label{eqn:genlinearreg}
\mbox{Linear Regression:} \quad \hat{\theta} \in \mbox{argmin}_{\theta \in \CCC}\, \| \y - X \theta \|^2 = \mbox{argmin}_{\theta \in \CCC}\, \sum_{i=1}^n (y_i - \langle \x_i,\theta \rangle)^2,
\end{align}
for some convex set $\CCC \subseteq \R^d$. In this paper, we work with this formulation. Two well-known regression problems are obtained by choosing $\CCC$ as the $L_2$/$L_1$-ball:
\begin{eqnarray*}
&\mbox{Ridge Regression:} \quad \hat{\theta} \in \mbox{argmin}_{\theta \in c B_2^d}\, \sum_{i=1}^n (y_i - \langle \x_i,\theta \rangle)^2,& \nonumber \\
&\mbox{Lasso Regression:} \quad \hat{\theta} \in \mbox{argmin}_{\theta \in c B_1^d}\, \sum_{i=1}^n (y_i - \langle \x_i,\theta \rangle)^2,&
\end{eqnarray*}
where $c \in \R^+$.
Another popular example is the Elastic-net regression which combines the Lasso and Ridge regression. Note that, while in this paper we focus on the constrained formulation of regression, from duality and the KKT conditions, the constrained formulation is equivalent to a penalized (regularized) formulation. 

\smallskip
\noindent\textbf{In the Streaming Setting.} Let $\Gamma = (\x_1,y_1),\dots,(\x_T,y_T)$ denote a data stream. Let $\Gamma_t$ denote the stream prefix of $\Gamma$ of length $t$. Informally, the goal of {\em incremental linear regression} is to release at each timestep $t \in [T]$, $\theta_t \in \CCC$, that minimizes $\LLL(\theta;\Gamma_t) = \sum_{i=1}^t (y_i - \langle \x_i,\theta\rangle)^2$. 

\begin{definition} [Adaptation of Definition~\ref{defn:utilerm} for Incremental Linear Regression] \label{defn:utililr}
A randomized streaming algorithm is $(\alpha,\beta)$-{\em estimator} for incremental linear regression, if with probability at least $1 -\beta$ over the coin flips of the algorithm, for each $t \in [T]$, after processing a prefix of the stream of length $t$, it generates an output $\theta_t \in \CCC$ that satisfies the following bound on excess (empirical) risk:
$$\sum_{i=1}^t (y_i - \langle \x_i,\theta_t\rangle)^2 - \left ( \min_{\theta \in \CCC}\, \sum_{i=1}^t (y_i - \langle \x_i,\theta\rangle)^2 \right ) \leq \alpha.$$ 
\end{definition}


\subsection{Background on Differential Privacy} \label{app:dp}
In this section, we review some basic constructions in differential privacy. The literature on differential privacy is now rich with tools for constructing differentially private analyses, and we refer the reader to a survey by Dwork and Roth~\cite{dwork2013algorithmic} for a comprehensive review of developments there.

One of the most basic technique, for achieving differential privacy is by adding noise to the outcome of a computed function where the noise magnitude is scaled to the {\em (global) sensitivity} of the function defined as:
\begin{definition}[Sensitivity] \label{defn:l2sens}
Let $f$ be a function mapping streams $\Gamma \in \ZZZ^\ast$ to $\R^d$. The $L_2$-sensitivity $\Delta_2$ of $f$ is the maximum of $\|f(\Gamma)-f(\Gamma')\|$ over neighboring streams $\Gamma,\Gamma'$. 
\end{definition}

\begin{theorem}[Framework of Global Sensitivity~\cite{DMNS06}]\label{thm:sens} Let $f: \ZZZ^\ast \rightarrow \R^d$ be a function with $L_2$-sensitivity $\Delta_2$. The algorithm that on an input $\Gamma$ outputs $f(\Gamma) + Y \mbox{\rm where}\, Y\sim \NNN(0,(2 \Delta_2^2 \ln(2/\delta))/\eps^2)^d$ is $(\epsilon,\delta)$-differentially private. 
\end{theorem}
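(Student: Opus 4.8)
The plan is to run the classical analysis of the Gaussian mechanism: reduce to one dimension via the rotational invariance of the spherical noise, invoke the standard fact that a tail bound on the privacy-loss random variable yields $(\eps,\delta)$-differential privacy, and finish with a Gaussian tail estimate for the prescribed variance. To reduce dimensions, fix neighboring streams $\Gamma,\Gamma'$, let $p_\Gamma,p_{\Gamma'}$ be the densities of $\Alg(\Gamma)=f(\Gamma)+Y$ and $\Alg(\Gamma')=f(\Gamma')+Y$, and put $\v=f(\Gamma)-f(\Gamma')$, so $\|\v\|\le\Delta_2$. Since $Y$ is a spherically symmetric Gaussian, an orthogonal change of coordinates mapping $\v$ to the first axis leaves the noise law unchanged, and the $d-1$ output coordinates orthogonal to $\v$ have the same density under $p_\Gamma$ and $p_{\Gamma'}$ and cancel in the likelihood ratio. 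Hence the privacy loss $\Loss(o)=\ln\bigl(p_\Gamma(o)/p_{\Gamma'}(o)\bigr)$ depends only on the one-dimensional shift along $\v$, of magnitude at most $\Delta_2$; as the estimate below is monotone in this magnitude, we may take it equal to $\Delta_2$. So it suffices to analyze $\Loss$ for $\NNN(a,\sigma^2)$ versus $\NNN(b,\sigma^2)$ on $\R$ with $|a-b|=\Delta_2$ and $\sigma^2=2\Delta_2^2\ln(2/\delta)/\eps^2$.

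Next I would use the reduction that if $\Pr_{o\sim\Alg(\Gamma)}[\Loss(o)>\eps]\le\delta$ for all neighbors then $\Alg$ is $(\eps,\delta)$-differentially private: for any output set $\RRR$, writing $B=\{o:\Loss(o)>\eps\}$, we have $\Pr[\Alg(\Gamma)\in\RRR]=\Pr[\Alg(\Gamma)\in\RRR\setminus B]+\Pr[\Alg(\Gamma)\in\RRR\cap B]\le e^\eps\Pr[\Alg(\Gamma')\in\RRR]+\delta$, using $p_\Gamma(o)\le e^\eps p_{\Gamma'}(o)$ on $\RRR\setminus B$ together with $\Pr[\Alg(\Gamma)\in B]\le\delta$. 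This leaves only the task of bounding $\Pr[\Loss(o)>\eps]$ for $o\sim\NNN(a,\sigma^2)$.

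Finally, a direct density computation gives $\Loss(o)=\tfrac{(a-b)(2o-a-b)}{2\sigma^2}$, so for $o\sim\NNN(a,\sigma^2)$ the random variable $\Loss(o)$ is Gaussian with mean $\Delta_2^2/(2\sigma^2)$ and variance $\Delta_2^2/\sigma^2$. Plugging in $\sigma^2=2\Delta_2^2\ln(2/\delta)/\eps^2$ makes these $\eps^2/(4\ln(2/\delta))$ and $\eps^2/(2\ln(2/\delta))$, so $\Pr[\Loss(o)>\eps]$ is the standard-normal upper tail at
$$t=\sqrt{2\ln(2/\delta)}-\frac{\eps}{2\sqrt{2\ln(2/\delta)}},$$
and $\Pr[\NNN(0,1)>t]\le e^{-t^2/2}=\tfrac{\delta}{2}\,e^{\eps/2}\,e^{-\eps^2/(16\ln(2/\delta))}\le\tfrac{\delta}{2}\,e^{\eps/2}\le\delta$ once $\eps\le 2\ln 2$ (in particular for the standard regime $\eps\le 1$). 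Combining the three steps gives the theorem. The only fiddly point I anticipate is this last calculation: matching the constant in $\sigma^2$ against the Gaussian tail so the failure probability is at most $\delta$, in particular keeping the mean $\Delta_2^2/(2\sigma^2)$ of the privacy loss under control — which is exactly why the cleanest statement of this result is usually accompanied by a hypothesis like $\eps\le 1$ that the quoted version elides. The reduction to one dimension and the tail-implies-DP lemma are standard; as the theorem is cited from~\cite{DMNS06}, in the paper I would simply cite it rather than reproduce the argument.
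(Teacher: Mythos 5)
Your proof is correct, but note that the paper does not actually prove this statement: Theorem~\ref{thm:sens} is stated as background in the appendix and attributed to the literature (it is the standard Gaussian mechanism), so there is no in-paper argument to compare against. Your three-step argument is the canonical one, and the computations check out: the likelihood ratio indeed depends on the output only through its projection onto $\v=f(\Gamma)-f(\Gamma')$, the privacy loss under $\Alg(\Gamma)$ is $\NNN\bigl(\Delta_2^2/(2\sigma^2),\,\Delta_2^2/\sigma^2\bigr)$ and its exceedance probability is monotone increasing in $\|\v\|$, the ``tail bound implies $(\eps,\delta)$-DP'' reduction is applied correctly with a one-sided tail, and with $\sigma^2=2\Delta_2^2\ln(2/\delta)/\eps^2$ your threshold $t$ and the bound $e^{-t^2/2}=\tfrac{\delta}{2}e^{\eps/2}e^{-\eps^2/(16\ln(2/\delta))}\le\delta$ are right (and $t>0$ in the relevant regime, so the tail inequality you invoke is legitimate). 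The one caveat you correctly flag yourself: the argument needs $\eps\le 2\ln 2$, a hypothesis the theorem statement omits; the paper implicitly operates in the regime $\eps\le 1$ (it describes $\eps$ as a small constant elsewhere), so this is a gap in the quoted statement rather than in your proof. Given that the paper treats this as a cited black box, reproducing the argument is optional, but what you have written is a complete and correct proof.
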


Composition theorems for differential privacy allow a modular design of privacy preserving algorithms based on algorithms for simpler sub tasks:

\begin{theorem}[\cite{DKMMN06}]\label{thm:composition1}
A mechanism that permits $k$ adaptive interactions with mechanisms that preserves $(\epsilon,\delta)$-differential privacy (and does not access the database otherwise) ensures $(k\epsilon, k\delta)$-differential privacy.
\end{theorem}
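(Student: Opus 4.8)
The statement to prove is the basic (``linear'') composition theorem: $k$ adaptively chosen $(\eps,\delta)$-differentially private interactions compose to a $(k\eps,k\delta)$-differentially private mechanism. The plan is to reduce to the case of two mechanisms by induction, and then prove the two-fold composition by a hybrid argument built on a \emph{pointwise} decomposition of each mechanism's output law.

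\textbf{Setup and reduction.} I would model the composite mechanism $M$ on a stream $\Gamma$ as producing $y_1=M_1(\Gamma)$, then $y_i=M_i(\Gamma,y_1,\dots,y_{i-1})$, where for every fixed transcript prefix $y_{<i}$ the map $\Gamma\mapsto M_i(\Gamma,y_{<i})$ is $(\eps,\delta)$-differentially private. It then suffices to prove: if $M_1$ is $(\eps_1,\delta_1)$-DP and, for every $y_1$, $M_2(\cdot,y_1)$ is $(\eps_2,\delta_2)$-DP, then the pair $(M_1,M_2)$ is $(\eps_1+\eps_2,\delta_1+\delta_2)$-DP. Applying this with $M_1$ the first mechanism and the inductively $((k-1)\eps,(k-1)\delta)$-DP composition of the remaining mechanisms playing the role of $M_2$, and iterating from the trivial base case $k=1$, yields the claim. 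As a warm-up, when $\delta=0$ the result is immediate, since for any output sequence the density ratio between $M(\Gamma)$ and $M(\Gamma')$ factors as $\prod_i \Pr[M_i(\Gamma,y_{<i})=y_i]/\Pr[M_i(\Gamma',y_{<i})=y_i]\le e^{k\eps}$; the difficulty is entirely in controlling the $\delta$ terms.

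\textbf{Pointwise decomposition.} The engine of the argument is the equivalent excess-mass characterization of approximate privacy: writing $P$ and $Q$ for the output laws of a mechanism on neighboring $\Gamma,\Gamma'$, the mechanism is $(\eps,\delta)$-DP iff $\sum_y [P(y)-e^{\eps}Q(y)]_+\le\delta$, the worst-case event in Definition~\ref{defn:dp} being exactly $\{y: P(y)>e^{\eps}Q(y)\}$. Hence I may split $P=\hat P+\mu$ with $\hat P(y):=\min\{P(y),e^{\eps}Q(y)\}\le e^{\eps}Q(y)$ pointwise and $\|\mu\|_1\le\delta$. I would apply this to $M_1$ (obtaining $\hat P^1\le e^{\eps_1}P^1_{\Gamma'}$ pointwise, $\|\mu^1\|_1\le\delta_1$) and, for each fixed $y_1$, to $M_2(\cdot,y_1)$ (obtaining $\hat P^2_{y_1}\le e^{\eps_2}P^2_{\Gamma',y_1}$ pointwise, $\|\mu^2_{y_1}\|_1\le\delta_2$). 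For any output set $S$ I then expand $\Pr[M(\Gamma)\in S]=\sum_{(y_1,y_2)\in S}(\hat P^1(y_1)+\mu^1(y_1))\,P^2_{\Gamma,y_1}(y_2)$ and group the pieces so no error term inherits a multiplicative $e^{\eps}$ factor: the $\mu^1$ piece, after summing $P^2_{\Gamma,y_1}$ over all $y_2$, is at most $\|\mu^1\|_1\le\delta_1$; the $\hat P^1\mu^2$ piece is at most $\|\hat P^1\|_1\,\delta_2\le\delta_2$; and the main $\hat P^1\hat P^2$ piece is bounded pointwise by $e^{\eps_1+\eps_2}P^1_{\Gamma'}(y_1)P^2_{\Gamma',y_1}(y_2)$, summing over $S$ to $e^{\eps_1+\eps_2}\Pr[M(\Gamma')\in S]$. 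Adding these gives $\Pr[M(\Gamma)\in S]\le e^{\eps_1+\eps_2}\Pr[M(\Gamma')\in S]+\delta_1+\delta_2$, and the induction then delivers $(k\eps,k\delta)$.

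\textbf{Main obstacle.} The delicate point is precisely the \emph{additivity} of the $\delta$'s. A naive hybrid that applies the set-wise guarantee of Definition~\ref{defn:dp} to $M_2$ and then to $M_1$ multiplies the first failure probability by $e^{\eps_2}$, producing $e^{\eps_2}\delta_1+\delta_2$ rather than $\delta_1+\delta_2$. Routing the argument through the pointwise excess-mass decomposition, and in particular grouping the $\mu^1$ term together with the \emph{full} conditional law $P^2_{\Gamma,y_1}$ before summing out $y_2$, is exactly what removes this blow-up. The only remaining care is bookkeeping for adaptivity (the decomposition $\mu^2_{y_1}$ must depend measurably on $y_1$) and replacing sums by integrals against a dominating measure in the continuous-output case; neither affects the stated bound.
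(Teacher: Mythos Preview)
The paper does not prove Theorem~\ref{thm:composition1}; it is quoted from~\cite{DKMMN06} as a background fact in Appendix~\ref{app:dp} and simply invoked where needed (e.g., in the privacy argument for Algorithm~\PrivIncReg). There is therefore nothing in the paper to compare your argument against.

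That said, your proposal is correct and is essentially the standard proof of basic composition. The excess-mass decomposition $P=\hat P+\mu$ with $\hat P(y)=\min\{P(y),e^{\eps}Q(y)\}$ and $\|\mu\|_1\le\delta$ is exactly the device used in the original sources and in textbook treatments (e.g., Dwork--Roth~\cite{dwork2013algorithmic}), and your grouping of terms---pairing $\mu^1$ with the full conditional $P^2_{\Gamma,y_1}$ before summing out $y_2$, and bounding $\|\hat P^1\|_1\le 1$ in the $\hat P^1\mu^2$ piece---is precisely what yields the additive $\delta_1+\delta_2$ rather than $e^{\eps_2}\delta_1+\delta_2$. The induction and the measure-theoretic caveats you flag are routine.
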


A stronger composition is also possible as shown by Dwork~\emph{et al.} \cite{DRV10}.

\begin{theorem}[\cite{DRV10}]\label{thm:composition2}
Let $\epsilon,\delta,\delta^\ast>0$ and $\epsilon \leq 1$. A mechanism that permits $k$ adaptive interactions with mechanisms that preserves $(\epsilon,\delta)$-differential privacy 
ensures $(\epsilon\sqrt{2k\ln(1/\delta^\ast)}+ 2 k\epsilon^2, k\delta+\delta^\ast)$-differential privacy.
\end{theorem}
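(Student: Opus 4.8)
The plan is to prove this by the standard \emph{privacy loss random variable} argument, bounding the total loss of the $k$-fold adaptive composition through a martingale concentration inequality. Fix two neighboring streams $D,D'$ and let $M_1,\dots,M_k$ be the adaptively chosen mechanisms, each $(\epsilon,\delta)$-differentially private for every fixed setting of its preceding outputs. Write $P,P'$ for the distributions of the full transcript $o=(o_1,\dots,o_k)$ under the composition run on $D$ and $D'$ respectively, and define the privacy loss
\begin{equation*}
L(o) = \ln\frac{P(o)}{P'(o)} = \sum_{i=1}^k \ln\frac{\Pr[M_i(D)=o_i \mid o_{<i}]}{\Pr[M_i(D')=o_i \mid o_{<i}]} =: \sum_{i=1}^k L_i.
\end{equation*}
The goal is to show that, over $o\sim P$, with probability at least $1-(k\delta+\delta^\ast)$ one has $L(o)\le \epsilon\sqrt{2k\ln(1/\delta^\ast)}+2k\epsilon^2$; a standard lemma then converts such a high-probability bound on the privacy loss into exactly the claimed $(\epsilon',\delta')$-differential privacy guarantee (Definition~\ref{defn:dp}), with $\epsilon'$ and $\delta'$ equal to the two stated parameters.

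First I would dispose of the $\delta>0$ slack by reducing each step to a pointwise-bounded loss on a high-probability event. The key lemma is that any $(\epsilon,\delta)$-DP mechanism admits a decomposition in which, outside an event of probability at most $\delta$ under $M_i(D)$, the per-coordinate loss $L_i$ lies in $[-\epsilon,\epsilon]$; summing these exceptional events over the $k$ steps contributes precisely the $k\delta$ term to the final $\delta'$. Conditioned on the good event, each $L_i$ is bounded in magnitude by $\epsilon$, and since the conditional expectation $\E[L_i\mid o_{<i}]$ is a KL divergence and hence nonnegative, the pointwise bound gives $\E[L_i\mid o_{<i}]\le \epsilon(e^\epsilon-1)\le 2\epsilon^2$, using $\epsilon\le 1$.

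Next I would apply martingale concentration. Setting $D_i = L_i - \E[L_i\mid o_{<i}]$, the sequence $(D_i)$ is a martingale difference sequence with each increment bounded in magnitude by $\epsilon$ up to lower-order $O(\epsilon^2)$ terms, so Azuma's inequality yields
\begin{equation*}
\Pr\left[\sum_{i=1}^k D_i > z\right] \le \exp\!\left(-\frac{z^2}{2k\epsilon^2}\right).
\end{equation*}
Choosing $z=\epsilon\sqrt{2k\ln(1/\delta^\ast)}$ makes the right-hand side at most $\delta^\ast$, and adding back the conditional-mean bound $\sum_i \E[L_i\mid o_{<i}]\le 2k\epsilon^2$ gives $L(o)\le \epsilon\sqrt{2k\ln(1/\delta^\ast)}+2k\epsilon^2$ except with probability $\delta^\ast$ on the good event. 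Combining the $k\delta$ failure probability from the decomposition with the $\delta^\ast$ tail yields the stated guarantee.

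The main obstacle will be the first step: reducing an $(\epsilon,\delta)$-DP guarantee to a pointwise-bounded privacy loss is delicate, because when $\delta>0$ the loss $L_i$ is genuinely unbounded on a $\delta$-mass of outputs, and the decomposition must be chosen so that the exceptional events compose additively (contributing exactly $k\delta$) while leaving the martingale structure intact on the good event. Once this decomposition and the nonnegativity and range bounds on the conditional losses are in place, the Azuma concentration step and the final conversion from a privacy-loss tail bound to an $(\epsilon',\delta')$ statement are routine.
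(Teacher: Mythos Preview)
The paper does not prove this theorem; it is stated in Appendix~\ref{app:dp} as cited background from~\cite{DRV10} and used as a black box (in the privacy analysis of Mechanism~\PrivIncERM). So there is no ``paper's own proof'' to compare against.

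Your proposal is the standard argument from~\cite{DRV10}: track the privacy-loss random variable, decompose each $(\epsilon,\delta)$-DP step into a pure-$\epsilon$ part on a good event of mass $\ge 1-\delta$, bound the conditional mean of each increment by $\epsilon(e^\epsilon-1)\le 2\epsilon^2$, and apply Azuma to the centered martingale. This is correct in outline and is exactly the route taken in the cited reference. You have also correctly identified the one genuinely nontrivial point, namely the decomposition lemma that carves out the $\delta$-mass bad event in a way that composes additively; in~\cite{DRV10} (and more cleanly in~\cite{dwork2013algorithmic}) this is handled by writing each mechanism as a mixture of an $\epsilon$-DP mechanism and a ``bad'' mechanism with weight $\delta$, and the rest follows as you describe.
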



\section{Noisy Projected Gradient Descent} \label{app:noisyproj}
In this section, we investigate the convergence rate of noisy projected gradient descent. Williams and McSherry first investigated gradient descent with noisy updates for probabilistic inference~\cite{williams2010probabilistic}; noisy stochastic variants of gradient descent have also been studied by~\cite{DBLP:journals/jmlr/JainKT12,duchi2013local,song2013stochastic,bassily2014differentially} in various private convex optimization settings. Other convex optimization techniques such as mirror descent~\cite{duchi2014privacy,talwar2014private} and Frank-Wolfe scheme~\cite{talwar2014private} have also been considered for designing private ERM algorithms. 

For completeness, in this section, we present an analysis of the projected gradient descent procedure that operates with only access to a private gradient function (Definition~\ref{defn:privgradfunc}). The analysis relies on standard ideas in convex optimization literature.

Consider the following constrained optimization problem,
\begin{equation} \label{eqn:npg} \min_{\theta \in \CCC}\, f(\theta) \mbox{ where } \CCC \subseteq \R^d, \end{equation}
where $f: \R^d \rightarrow \R$ is a convex function and $\CCC$ is some non-empty closed convex set. Define projection of $\theta \in \R^d$ onto a convex set $\CCC$ as:
$$P_{\CCC}(\theta) = \mbox{argmin}_{\z \in \CCC}\, \| \theta - \z \|^2.$$

Projected gradient descent algorithm uses the following update to solve~\eqref{eqn:npg}
\begin{align} \label{eqn:projgrad} 
\ProjGrad(\CCC,r): \mbox{Initialize } \theta_1 \in \CCC, \mbox{Repeat $r$ times: }  
\theta_{k+1} = P_{\CCC}(\theta_k - \eta_k \nabla f(\theta_k)), \mbox{ Output }  \bar{\theta} =  \frac{1}{r}\sum_{i=1}^{r} \theta_i, 
\end{align}
for some stepsize $\eta_k$. Here $P_{\CCC}(\theta)$ defines the projection of $\theta$ onto $\CCC$. 

Let $g : \R^d \rightarrow \R^d$ be an $(\alpha,\beta)$-approximation of the true gradient of $f$ (as in Definition~\ref{defn:privgradfunc}),  
$$\Pr\left[\max_{\theta \in \CCC}\| g(\theta) - \nabla f(\theta)  \| > \alpha\right]\leq\beta.$$
The noisy projected gradient descent is a simple modification of the projected gradient descent algorithm~\eqref{eqn:projgrad}, where instead of the true gradient, a noisy gradient is used. In other words, noisy projected gradient descent takes the form,
\begin{align}
\NoisyProjGrad(\CCC,g,r): \mbox{Initialize } \theta_1 \in \CCC, \mbox{Repeat $r$ times: } 
\theta_{k+1} = P_{\CCC}(\theta_k - \eta_k g(\theta_k)), \mbox{ Output }  \bar{\theta} = \frac{1}{r}\sum_{i=1}^{r} \theta_i.
\end{align}

The following proposition analyzes the convergence of the above $\NoisyProjGrad$ procedure assuming $g$ is an $(\alpha,\beta)$-approximation of the true gradient of $f$. Note that the proposition holds, even if $f$ is not differentiable, in which case, $\nabla f(\theta)$ represents any subgradient of $f$ at $\theta$.

\begin{proposition} \label{prop:noisyprojgrad}
Suppose $f$ is convex and is $L$-Lipschitz. Let $\| \CCC \|$ be the diameter of $\CCC$. Let $g : \R^d \rightarrow \R^d$ be an $(\alpha,\beta)$-approximation of the true gradient of $f$. Then after $r$ steps of Algorithm~\NoisyProjGrad, starting with any $\theta_{0} \in \CCC$ and constant stepsize of $\eta_k = \frac{\| \CCC \|}{\sqrt{r}(\alpha+L)}$, with probability at least $1-r\beta$, 
$$f(\bar{\theta}) - f(\theta^\ast) \leq \frac{(\alpha+L) \| \CCC \|}{\sqrt{r}} + \alpha\| \CCC \|.$$ 
\end{proposition}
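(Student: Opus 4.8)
The plan is to run the textbook potential-function analysis of projected subgradient descent, treating the noisy gradient $g(\theta_k)$ as an exact (sub)gradient of $f$ perturbed by an adversarial vector of norm at most $\alpha$. First I would fix a minimizer $\theta^\ast \in \mbox{argmin}_{\theta \in \CCC}\, f(\theta)$ and condition on the event $\EEE$ that $\| g(\theta) - \nabla f(\theta) \| \le \alpha$ holds simultaneously for all $\theta \in \CCC$; since $g$ is an $(\alpha,\beta)$-approximation of $\nabla f$ in the uniform sense used above, a union bound over the at most $r$ gradient evaluations performed by \NoisyProjGrad (in fact the single uniform bound already suffices) gives $\Pr[\EEE] \ge 1 - r\beta$. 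Everything below is carried out on $\EEE$.

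\textbf{One-step inequality.} Since $\theta^\ast \in \CCC$ and Euclidean projection onto a closed convex set is nonexpansive, $\| \theta_{k+1} - \theta^\ast \|^2 = \| P_\CCC(\theta_k - \eta_k g(\theta_k)) - P_\CCC(\theta^\ast) \|^2 \le \| \theta_k - \eta_k g(\theta_k) - \theta^\ast \|^2$. Expanding the square yields
\[
\| \theta_{k+1} - \theta^\ast \|^2 \le \| \theta_k - \theta^\ast \|^2 - 2\eta_k \langle g(\theta_k), \theta_k - \theta^\ast \rangle + \eta_k^2 \| g(\theta_k) \|^2 .
\]
I would then split $\langle g(\theta_k), \theta_k - \theta^\ast \rangle = \langle \nabla f(\theta_k), \theta_k - \theta^\ast \rangle + \langle g(\theta_k) - \nabla f(\theta_k), \theta_k - \theta^\ast \rangle$, bound the first term below by $f(\theta_k) - f(\theta^\ast)$ using convexity (this is precisely the point where a subgradient suffices when $f$ is not differentiable), bound the second term in absolute value by $\alpha \| \theta_k - \theta^\ast \| \le \alpha \| \CCC \|$ (up to the absorbed constant) via Cauchy--Schwarz on $\EEE$ together with the diameter bound, and bound $\| g(\theta_k) \| \le \| \nabla f(\theta_k) \| + \alpha \le L + \alpha$ by Lipschitzness. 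Rearranging gives, for each $k$,
\[
f(\theta_k) - f(\theta^\ast) \le \frac{\| \theta_k - \theta^\ast \|^2 - \| \theta_{k+1} - \theta^\ast \|^2}{2\eta_k} + \alpha \| \CCC \| + \frac{\eta_k (L+\alpha)^2}{2} .
\]

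\textbf{Telescoping and the stepsize choice.} With the constant stepsize $\eta_k \equiv \eta$, summing over $k = 1, \dots, r$ telescopes the first term to at most $\| \theta_1 - \theta^\ast \|^2 / (2\eta) \le \| \CCC \|^2 / (2\eta)$, the middle terms sum to $r\alpha\| \CCC \|$, and the quadratic terms to $r\eta(L+\alpha)^2/2$. Dividing by $r$ and applying Jensen's inequality, $f(\bar\theta) \le \frac{1}{r}\sum_{k=1}^r f(\theta_k)$, gives
\[
f(\bar\theta) - f(\theta^\ast) \le \frac{\| \CCC \|^2}{2\eta r} + \alpha \| \CCC \| + \frac{\eta (L+\alpha)^2}{2} .
\]
Substituting $\eta = \| \CCC \| / (\sqrt{r}(L+\alpha))$ balances the first and third terms, each becoming $\| \CCC \| (L+\alpha) / (2\sqrt{r})$, and the claimed bound $f(\bar\theta) - f(\theta^\ast) \le (L+\alpha)\| \CCC \| / \sqrt{r} + \alpha \| \CCC \|$ follows.

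\textbf{Main subtlety.} The expansion, telescoping, and stepsize optimization are all routine; the one point deserving care is that the iterates $\theta_1, \dots, \theta_r$ are themselves functions of $g$, so the perturbation bound must hold \emph{uniformly} over $\CCC$ (not merely at a fixed point chosen in advance). This is exactly what the $(\alpha,\beta)$-approximation guarantee provides, so no adaptivity issue arises and a single good event $\EEE$ carries the whole argument. The only other bookkeeping is the consistent treatment of the diameter constants ($\| \theta_k - \theta^\ast \| \le 2\| \CCC \|$), whose factors of $2$ are absorbed into the $O(\cdot)$ applications of this proposition elsewhere, e.g.\ in Corollary~\ref{cor:noisyprojgrad}.
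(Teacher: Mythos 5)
Your proof is correct and follows essentially the same route as the paper's: the nonexpansiveness of $P_\CCC$, the decomposition of $g(\theta_k)$ into $\nabla f(\theta_k)$ plus a perturbation of norm at most $\alpha$, the convexity lower bound on $\langle \nabla f(\theta_k), \theta_k - \theta^\ast\rangle$, telescoping with the constant stepsize, and Jensen's inequality. The only (immaterial) differences are that you condition once on the uniform good event while the paper union-bounds per iteration, and you flag the diameter factor of $2$ that the paper silently absorbs.
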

\begin{proof}
Let $\theta^\ast \in \CCC$ be the optimal solution of~\eqref{eqn:npg}. Let $g(\theta_k) = \nabla f(\theta_k) + e(\theta_k)$. By Definition~\ref{defn:Lipsfunc}, $L \geq  \max_{\theta \in \CCC} \| \nabla f(\theta) \|$.
\begin{align*}
& \| x_{k+1} - \theta^{\ast} \|^2  = \| P_{\CCC}(\theta_k -  \eta_k g(\theta_k)) - P_{\CCC}(\theta^\ast) \|^2 \\
& \leq \| \theta_k -  \eta_k g(\theta_k) - \theta^\ast  \|^2 \\
& \leq \| \theta_k -  \theta^\ast \|^2 + 2 \eta_k \langle g(\theta_k), \theta^\ast - \theta_k \rangle + \eta_k^2 \| g(\theta_k) \|^2 \\
& = \| \theta_k -  \theta^\ast \|^2 + 2 \eta_k  \langle \nabla f(\theta_k) + e(\theta_k), \theta^\ast - \theta_k \rangle  + \eta_k^2 \| \nabla f(\theta_k) + e(\theta_k) \|^2  \\
& \leq \| \theta_k -  \theta^\ast \|^2 + 2 \eta_k  \langle \nabla f(\theta_k), \theta^\ast - \theta_k \rangle +  2\eta_k\| e(\theta_k) \| \| \CCC \|  + \eta_k^2 \| \nabla f(\theta_k) + e(\theta_k) \|^2 \\
& \leq \| \theta_k -  \theta^\ast \|^2 + 2 \eta_k (f(\theta^\ast) - f(\theta_k)) + 2 \eta_k \| e(\theta_k) \| \| \CCC \|  + \eta_k^2 (\| \nabla f(\theta_k) \|^2  + \| e(\theta_k) \|^2 + 2 \| \nabla f(\theta_k) \|  \| e(\theta_k) \|) \\
& \leq \| \theta_k -  \theta^\ast \|^2 + 2 \eta_k (f(\theta^\ast) - f(\theta_k)) + 2 \eta_k\| e(\theta_k) \| \| \CCC \|  + \eta_k^2 ( L^2 + \| e(\theta_k) \|^2 + 2L \| e(\theta_k) \|).
\end{align*}
The first inequality follows because projection cannot increase distances (projection operator is {\em contractive}). 

By the assumption on $g(\theta_k)$, with probability at least $1-\beta$, $\| e(\theta_k) \| \leq \alpha$. Hence, with probability at least $1-\beta$,
\begin{align*}
\| \theta_{k+1} - \theta^{\ast} \|^2 \leq \| \theta_k -  \theta^\ast \|^2 + 2 \eta_k (f(\theta^\ast) - f(\theta_k)) + 2 \eta_k \alpha \| \CCC \| + \eta_k^2 ( L^2 + \alpha^2 + 2\alpha L).
\end{align*}
Summing the above expression and taking a union bound, yields that with probability at least $1-r\beta$,
\begin{align*}
0 \leq \| \theta_{r+1} - \theta^\ast \|^2 \leq \| \theta_1 - \theta^\ast \|^2 + 2 \sum_{k=1}^r \eta_k (f(\theta^\ast) - f(\theta_k))  + 2 r \eta_k \alpha  \| \CCC \| + r \eta_k^2 ( L^2 + \alpha^2 + 2\alpha L).
\end{align*}
Rearranging the above, and setting constant step size $\eta_k = \frac{\| \CCC \|}{\sqrt{r}(\alpha+L)}$, with probability at least $1-r\beta$,
$$\sum_{k=1}^r (f(\theta_k) - f(\theta^\ast)) \leq  \| \CCC \| \sqrt{r}(\alpha+L)  + r \alpha \| \CCC \|.$$

Now by Jensen's inequality, 
$$f(\bar{\theta}) = f \left ( \frac{1}{r} \sum_{k=1}^r \theta_k \right ) \leq \frac{1}{r} \sum_{k=1}^r f(\theta_k).$$
Therefore,
$$f(\bar{\theta}) - f(\theta^\ast) \leq \frac{1}{r} \sum_{k=1}^r (f(\theta_k) - f(\theta^\ast))  \leq \frac{(\alpha+L) \| \CCC \| }{\sqrt{r}}  + \alpha \| \CCC \|.$$
\end{proof}

\begin{corollary} \label{cor:noisyprojgrad}
By setting $r=\frac{(\alpha+L)^2 \| \CCC \|^2}{\zeta^2}$, in Proposition~\ref{prop:noisyprojgrad} gives that with probability at least $1-r\beta$, 
$$f(\bar{\theta}) - f(\theta^\ast) \leq \zeta + \alpha \| \CCC \|.$$
If $\alpha > 0$, then we can set $r = \left (1+\frac{L}{\alpha} \right)^2$ gives $f(\bar{\theta}) - f(\theta^\ast) \leq 2 \alpha\| \CCC \|$.
\end{corollary}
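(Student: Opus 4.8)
The plan is to obtain Corollary~\ref{cor:noisyprojgrad} as an immediate algebraic consequence of Proposition~\ref{prop:noisyprojgrad}, by substituting the prescribed number of iterations $r$ into the convergence bound proved there. Recall that Proposition~\ref{prop:noisyprojgrad}, run with the constant step size $\eta_k = \|\CCC\|/(\sqrt{r}(\alpha+L))$, guarantees that with probability at least $1 - r\beta$,
\[
f(\bar{\theta}) - f(\theta^\ast) \leq \frac{(\alpha+L)\|\CCC\|}{\sqrt{r}} + \alpha\|\CCC\|.
\]
So the only thing to do is to plug the stated values of $r$ into this inequality.

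First I would handle the general statement. Setting $r = (\alpha+L)^2\|\CCC\|^2/\zeta^2$ gives $\sqrt{r} = (\alpha+L)\|\CCC\|/\zeta$, so the first term above is exactly $\zeta$, yielding $f(\bar{\theta}) - f(\theta^\ast) \leq \zeta + \alpha\|\CCC\|$ with probability at least $1 - r\beta$, as claimed. If one wants an integer iteration count, take $r = \lceil (\alpha+L)^2\|\CCC\|^2/\zeta^2 \rceil$; enlarging $r$ only decreases the accuracy term $(\alpha+L)\|\CCC\|/\sqrt{r}$, and the failure probability $r\beta$ is precisely what the statement already allows, so the bound is unaffected. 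Next I would specialize to $\alpha > 0$ by choosing $\zeta = \alpha\|\CCC\|$: then $r = (\alpha+L)^2\|\CCC\|^2/(\alpha\|\CCC\|)^2 = (\alpha+L)^2/\alpha^2 = (1 + L/\alpha)^2$, and the bound becomes $f(\bar{\theta}) - f(\theta^\ast) \leq \alpha\|\CCC\| + \alpha\|\CCC\| = 2\alpha\|\CCC\|$, again with probability at least $1 - r\beta$, which is exactly the second claim.

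There is essentially no obstacle here: the corollary is a direct instantiation of Proposition~\ref{prop:noisyprojgrad}, and the only point that deserves a sentence is the integrality/monotonicity remark above, namely that rounding $r$ up costs nothing because the per-iteration failure contribution is already accounted for as $r\beta$ and the term $(\alpha+L)\|\CCC\|/\sqrt{r}$ is non-increasing in $r$.
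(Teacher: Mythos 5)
Your proposal is correct and matches the paper's (implicit) argument exactly: the corollary is obtained by direct substitution of the stated values of $r$ into the bound of Proposition~\ref{prop:noisyprojgrad}, with the second claim corresponding to the choice $\zeta = \alpha\|\CCC\|$. The remark about rounding $r$ up to an integer is a sensible small addition that the paper omits.
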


After constructing the private gradient function, evaluating the gradient function at any $\theta$ can be done without affecting the privacy budget, as this is just a post-processing of private outputs.

\section{Tree Mechanism for Continually Releasing Private Sums}\label{app:treemech}
Given a bit stream $b_1,\dots,b_T \in \{0,1\}$, the {\em private streaming counter} problem is to release at every timestep $t$, (an approximation to) $\sum_{i=1}^t b_i$ while satisfying differential privacy (Definition~\ref{defn:dp}). Chan~\emph{et al.}\ \cite{CSS11} and Dwork~\emph{et al.}\ \cite{DNPR10} proposed an elegant differentially private mechanism (referred to as the \emph{Tree Mechanism}) for this problem. We use this mechanism as a basic building block in our  private incremental regression algorithms.\!\footnote{Dwork \emph{et al.}\ \cite{DNRR15} have recently improved the bounds for the private streaming counter problem in the case where the bit stream is {\em sparse}, i.e., have many fewer $1$'s than $0$'s.}

For completeness, in Algorithm~\TreeMech, we present the entire \emph{Tree Mechanism} as applied to a set of vectors. Given a data stream $\Upsilon=\upsilon_1,\dots,\upsilon_T \in \ZZZ$, the algorithm releases at each timestep $t$, (an approximation to) the sum function $\sum_{i=1}^t \upsilon_i$, while satisfying differential privacy.


\begin{algorithm}[!t]
\DontPrintSemicolon
\caption{\TreeMech$(\eps,\delta,\Delta_2)$}
\KwIn{A stream $\Upsilon = \upsilon_1,\dots,\upsilon_T$, where each $\upsilon_t$ is from the domain $\ZZZ \subseteq \R^d$, and $\Delta_2 = \max_{\upsilon,\upsilon' \in \ZZZ} \| \upsilon - \upsilon' \|$}
\KwOut{A differentially private estimate of $\sum_{i=1}^t \upsilon_i$ at every timestep $t \in [T]$}
\For{all $t \in [T]$}{
Express $t = \sum_{j=0}^{\log t} 2^j \Bin_j(t)$ (where $\Bin_j(t)$ is the bit at the $j$th index in the binary representation of $t$)\;
$i \leftarrow \min_{0 \leq j \leq \log T}\, \{\Bin_j(t) \neq 0\}$\;
$\a_i \leftarrow \sum_{j < i} \a_j + \upsilon_t$\;
\For{$0 \leq j \leq i-1$}{
$\a_j \leftarrow \mathbf{0}$ and $\b_j \leftarrow \mathbf{0}$
}
$\b_i \leftarrow \a_i + \NNN \left (\mathbf{0},\frac{2 \log^2(T) \Delta_2^2 \ln(2/\delta) \mathbb{I}_d}{\eps^2} \right )$\;
$\s_t \leftarrow \sum_{j:\Bin_j(t) \neq 0} \b_j$\;
Return $\s_t$ 
}
\end{algorithm}
Algorithm~\TreeMech can be viewed as releasing partial sums of different ranges at each timestep $t$ and computing the final sum is simply a post-processing of the partial sums. At most $\log T$ partial sums are used for constructing each private sum. The following theorem follows by using the standard upper deviation inequality for Gaussian random variables (Proposition~\ref{prop:GaussConc}) in the analysis of \emph{Tree Mechanism} from~\cite{DNPR10,CSS11}. Another advantage of this mechanism is that it can be implemented with small memory, as only $O(\log t)$ partial sums are needed at any time $t$. 

\begin{proposition} \label{prop:counter}
Algorithm~\TreeMech is $(\eps,\delta)$-differentially private with respect to a single datapoint change in the stream $\Upsilon$. For any $\beta >0$ and $t \in [T]$, with probability at least $1-\beta$, $\s_t$ computed by Algorithm~\TreeMech satisfies:
$$ \left \|\s_t - \sum_{i=1}^t \upsilon_i \right \| = O \left (\frac{\Delta_2 (\sqrt{d} +\sqrt{\log \left (1/\beta \right )})  \log^{3/2} T\sqrt{\log \left ( 1/\delta \right ) }}{\eps} \right ),$$
where $\Delta_2$ is the $L_2$-sensitivity of the sum function from Definition~\ref{defn:l2sens}.
\end{proposition}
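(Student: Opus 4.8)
The plan is to handle the two claims of the proposition separately: the $(\eps,\delta)$-differential privacy is the standard analysis of the \emph{Tree Mechanism} adapted to vector inputs and Gaussian noise, while the utility bound is the only genuinely new ingredient and reduces to a single Gaussian tail estimate.

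For the privacy claim, I would recall that the noisy partial sums $\b_j$ maintained by Algorithm~\TreeMech correspond to the dyadic (``canonical'') intervals of $[1,T]$, and that every released estimate $\s_t$ is a fixed post‑processing (a subset sum) of these partial sums. A single datapoint $\upsilon_{t^\ast}$ lies in at most one dyadic interval of each length $2^0,2^1,\dots,2^{\lceil\log T\rceil}$, hence influences the finalized value of at most $O(\log T)$ of the $\a_j$'s, and replacing $\upsilon_{t^\ast}$ by any $\upsilon'_{t^\ast}$ changes each such $\a_j$ by at most $\Delta_2$ in $L_2$‑norm. Viewing the concatenation of all $\a_j$'s as the output of a single function, its $L_2$‑sensitivity is therefore $O(\sqrt{\log T})\,\Delta_2$, so the i.i.d.\ Gaussian perturbation with per‑coordinate variance $\tfrac{2\log^2(T)\Delta_2^2\ln(2/\delta)}{\eps^2}$ added in Algorithm~\TreeMech is (more than) enough to guarantee $(\eps,\delta)$‑differential privacy by Theorem~\ref{thm:sens}; since each $\s_t$ is a deterministic function of these noisy partial sums, privacy is preserved. (This is exactly the $(\eps,\delta)$‑analogue of the Laplace‑noise analysis of~\cite{DNPR10,CSS11}.)

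For the utility claim, I would fix $t\in[T]$ and write $\s_t-\sum_{i=1}^t\upsilon_i=\sum_{j:\Bin_j(t)\neq0}\n_j$, where each $\n_j\sim\NNN\!\big(\mathbf 0,\tfrac{2\log^2(T)\Delta_2^2\ln(2/\delta)}{\eps^2}\mathbb{I}_d\big)$ and — the point to check carefully — the $\n_j$'s occurring in this particular sum are mutually independent, since each $\b_j$ receives a single fresh noise draw when its index is completed and the indices $j$ with $\Bin_j(t)\neq0$ are distinct. The number of such terms equals the number of $1$'s in the binary expansion of $t$, which is at most $\lceil\log(T+1)\rceil=O(\log T)$, so the aggregate noise is distributed as $\NNN(\mathbf 0,\sigma_t^2\mathbb{I}_d)$ with $\sigma_t^2=O\!\big(\tfrac{\log^3(T)\Delta_2^2\ln(2/\delta)}{\eps^2}\big)$. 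Writing this $d$‑dimensional Gaussian as $\sigma_t A$ for a $d\times1$ standard Gaussian matrix $A$ and applying Proposition~\ref{prop:GaussConc} with $N=d$, $n=1$, with probability at least $1-\beta$ its norm is $O\!\big(\sigma_t(\sqrt d+\sqrt{\log(1/\beta)})\big)$; substituting the bound on $\sigma_t$ gives the claimed estimate $O\!\big(\tfrac{\Delta_2(\sqrt d+\sqrt{\log(1/\beta)})\log^{3/2}T\sqrt{\log(1/\delta)}}{\eps}\big)$.

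The argument is essentially routine; the only delicate points are (i) counting the dyadic intervals correctly so that the $\sqrt{\log T}$ sensitivity factor (privacy side) and the $\sqrt{\log T}$ noise‑aggregation factor (utility side) combine into the stated $\log^{3/2}T$, and (ii) observing that the bound is asserted for a \emph{fixed} $t$, so no union bound over $t\in[T]$ is taken here — that extra $\log T$ inside the confidence term is what gets absorbed later when Proposition~\ref{prop:counter} is invoked with a rescaled $\beta$ in Lemma~\ref{lem:binmecherror} and Theorem~\ref{thm:thm2}.
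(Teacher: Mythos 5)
Your proposal is correct and matches the paper's (only sketched) argument: the paper simply asserts that the proposition follows from the standard \emph{Tree Mechanism} analysis of~\cite{DNPR10,CSS11} combined with the Gaussian deviation bound of Proposition~\ref{prop:GaussConc}, and your write-up is exactly that analysis carried out in detail — sensitivity over the $O(\log T)$ dyadic partial sums touched by one datapoint for privacy, and the sum of $O(\log T)$ independent Gaussian noise vectors, each of standard deviation $O(\log(T)\Delta_2\sqrt{\ln(1/\delta)}/\eps)$ per coordinate, for utility. Your observation that the stated per-coordinate variance is (more than) enough under the concatenated-vector sensitivity view, and that no union bound over $t$ is needed at this stage, are both consistent with how the proposition is invoked later in Lemma~\ref{lem:binmecherror}.
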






\end{document}